\providecommand{\tabularnewline}{\\}
\theoremstyle{definition}
\newtheorem{defn}{\protect\definitionname}
\theoremstyle{plain}
\newtheorem{prop}{\protect\propositionname}
\theoremstyle{plain}
\newtheorem{cor}{\protect\corollaryname}
\theoremstyle{plain}
\newtheorem{assumption}{\protect\assumptionname}
\theoremstyle{plain}
\newtheorem{thm}{\protect\theoremname}
\date{}
\DeclareMathOperator*{\marg}{marg}
\DeclareMathOperator*{\supp}{supp}
\tikzset{
  mycross/.pic={
    \draw[pic actions] 
      (-4pt,0) -- (4pt,0)
      (0,-4pt) -- (0,4pt);
  },
}
\providecommand{\assumptionname}{Assumption}
\providecommand{\corollaryname}{Corollary}
\providecommand{\definitionname}{Definition}
\providecommand{\propositionname}{Proposition}
\providecommand{\theoremname}{Theorem}
\begin{document}
\title{Arbitrage from a Bayesian's Perspective\thanks{Version: May 25, 2022.}}
\author{Ayan Bhattacharya\thanks{Email: ayan.bhattacharya@gmail.com. Affiliation: University of Chicago
\textendash{} Booth School of Business, and Arrow Markets. My thanks
to Stefan Nagel for helpful comments.}}

\maketitle
\begin{singlespace}
\begin{abstract}
This paper builds a model of interactive belief hierarchies to derive
the conditions under which judging an arbitrage opportunity requires
Bayesian market participants to exercise their higher order beliefs.
Methodologically, the approach of the paper is to take the standard
asset pricing setup that gives rise to arbitrage and transform it
into a Bayesian decision problem faced by a representative market
agent. As a Bayesian, such an agent must carry a complete recursion
of priors over the uncertainty about future asset payouts, the strategies
employed by other market participants that are aggregated in the price,
other market participants' beliefs about the agent's strategy, other
market participants beliefs about what the agent believes their strategies
to be, and so on ad infinitum. Defining this infinite recursion of
priors \textemdash{} the belief hierarchy so to speak \textemdash{}
along with how they update gives the Bayesian decision problem equivalent
to the standard asset pricing formulation of the question. In this
setting, any update to the belief hierarchy of an agent is one of
two kinds: a change in belief about the asset payouts, or a change
in belief about the strategies and beliefs employed by other market
participants. The main results of the paper show that an arbitrage
trade corresponds to special updates of the second kind. When an agent
anticipates market participant responses will be generated using $k$
levels of the belief hierarchy but finds that the actual asset prices
are supported by $k+1$ or higher levels, there is an arbitrage opportunity.
It is shown that the presence of arbitrage depends on the degree of
optimality of the belief hierarchies employed by market agents and
responsiveness of the price aggregation mechanism, and is closely
related to market tatonnement. The paper connects the foundations
of finance to the foundations of game theory by identifying a bridge
from market arbitrage to market participant belief hierarchies.\bigskip{}

{\footnotesize{}\noindent Keywords: Arbitrage opportunity, Beliefs
in asset price, Level-k reasoning, Higher-order belief, Bayesian belief
hierarchy, Fundamental theorem of asset pricing.}{\footnotesize\par}
\end{abstract}
\end{singlespace}

\thispagestyle{empty}

\newpage{}

\section{Introduction}

This paper connects the foundations of finance to the foundations
of game theory. Specifically, it builds a model of interactive belief
hierarchies to derive the theoretical conditions under which judging
an arbitrage opportunity requires Bayesian market participants to
exercise their higher order beliefs. A no-arbitrage argument is at
the heart of many of the modern developments in asset pricing theory
and \citet{key-18} term the equivalence results linking the various
characterizations of no-arbitrage the ``fundamental theorem of asset
pricing.'' Likewise, the ``hierarchy of beliefs'' technique proposed
in \citet{key-1} for Bayesian players, that evolved into the epistemic
approach to the field, underpins many of the important developments
in game theory. A rigorous bridge from market arbitrage to market
participant belief hierarchies, therefore, offers a potential path
to transferring strategic concerns of agents that underlie game theoric
modeling to asset pricing theory. 

Methodologically, the approach of the paper is to take the standard
asset pricing setup that gives rise to arbitrage and transform it
into a Bayesian decision problem faced by a representative market
agent. The traditional definition of an arbitrage portfolio is given
in terms of a set of conditions on the ex-ante prices of assets relative
to their ex-post payouts. Nonetheless, the ex-ante prices arise as
a result of individual optimization of some sort or another, and market
aggregation, so the price formation process embeds an implicit strategic
interaction among the participants. When taking this interaction into
account, the uncertainty facing an agent in the market is no longer
just about the ex-post payouts from assets, but also about the strategies
chosen by other market participants that get aggregated into the price.
An agent must carry a prior over this uncertainty as a Bayesian. That
is not all, though. The agent is also uncertain about what other market
participants believe her strategy to be, about what other market participants
believe about what she believes their strategies to be, and so on
ad infinitum. As a Bayesian, an agent must carry a prior over each
of these layers of uncertainty. Defining this infinite recursion of
priors \textemdash{} the belief hierarchy so to speak \textemdash{}
along with how they update gives the Bayesian decision problem equivalent
to the standard asset pricing formulation of the question.

A fundamental insight in epistemic game theory is that if the basic
uncertainty spaces are suitably restricted, one may define a universal
domain of uncertainty for an agent that encompasses all the recursive
layers of uncertainty arising in an interaction. Defining a prior
over this universal domain is then equivalent to defining the infinite
recursion of priors, as long as the agents' beliefs obey certain minimal
consistency requirements. We adopt this epistemic approach in the
paper. Any update to the belief hierarchy of a market agent could
then be one of two kinds: a change in belief about the asset payoffs,
or a change in belief about the strategies and beliefs employed by
other market participants. In this paper we switch off the first type
of update by assuming symmetric information about ex-post payouts
and focus exclusively on updates of the the second kind. That is to
say, an agent starts with a prior belief hierarchy about the market
environment she is facing, and after seeing the actual asset prices
generated as a result of the choices made by market participants,
revises her belief hierarchy if she has to. The main results of the
paper show that an arbitrage trade corresponds to a subset of revisions
of this type.

An arbitrage opportunity is not an equilibrium phenomenon, so the
market aggregation mapping that gives rise to such an opportunity
cannot be the standard individual-optimization/market-clearing pricing
function that is used to study equilibrium. Nevertheless, if the mapping
that aggregates individual choices into a market price is known to
a market participant, she may in-principle back out a set of choices
for other market participants that are consistent with the market
price she observes. In other words, there is a set of market participant
belief hierarchies that support an asset price which can be backed
out by a market agent who observes the asset price and knows the aggregation
mapping. An arbitrage opportunity arises when an agent finds that
the belief hierarchies in actual use are in some sense ``more optimal''
than what she had anticipated them to be. 

Optimality in our setup is defined in terms of the number of levels
of the belief hierarchy that are taken into account by agents to generate
undominated responses \textemdash{} responses that cannot be unequivocally
improved. When an agent anticipates that other market participants
will be generating their responses using only $k$ levels of their
belief hierarchy but subsequently discovers \textemdash{} when backing
out actual choices from the price \textemdash{} that they are using
$k+1$ or higher levels, there is an arbitrage opportunity. For example,
the agent might anticipate that the rest of the market will use undominated
responses that account for two levels of their respective hierarchies,
and choose her initial strategy to be an optimal response to this
anticipation. For this agent to discover an arbitrage opportunity
in the market subsequently, it is necessary that the rest of the market
actually use undominated responses that account for at least three
levels of their hierarchy.

Whether or not an arbitrage trade exists depends both on the degree
of optimality of the anticipated belief hierarchies and on the extent
to which the market aggregation mapping allows an agent to uncover
the actual belief hierarchies supporting an observed price. It is
the interplay between these two factors that gives the study of arbitrage
with belief hierarchies its distinct flavor. For example, if the market
aggregation is a constant mapping (i.e., prices stay unchanged no
matter what choices market participants make) there is no arbitrage
opportunity no matter what belief hierarchy agents employ. This is
because an agent simply cannot discern any feature of the actual belief
hierarchies that are supporting the observed prices. So, no matter
what belief hierarchy she anticipates, after observing the prices
the agent is unable to make the case that she would do unequivocally
better if she used a different hierarchy. On the other hand, if the
market aggregation mapping is one-to-one (i.e., observed prices allow
agents to perfectly distinguish the belief hierarchies actually used)
then no-arbitrage implies that agents are optimizing over the entire
infinite hierarchy. This is because a competitive game ensues among
the market participants: every agent wants to optimize one level more
than the rest of the market so as to not leave any money on the table,
in effect pushing the orders of optimization to infinity for everyone.
Such a competitive process would not go all the way to infinity if
an agent could not distinguish her counterparty's belief hierarchy
with sufficient precision from the prices.

The analysis in the paper highlights the importance of higher order
reasoning in markets. It is not enough if an agent reasons optimally
about just fundamentals or beliefs of other market participants; to
leave no money on the table, the agent must also reason optimally
about beliefs about beliefs of other market participants, about beliefs
about beliefs about beliefs of other market participants, and so on,
as far as needed under the market aggregation mapping. Yet common
experience seems to suggest that traders in the real world don't go
very far up the orders when reasoning deliberately about their environment.
The resolution to this apparent incongruity lies in recognizing that
a market tatonnement process \textemdash{} in which traders are simply
responding to immediate market circumstances \textemdash{} can deliver
the same outcome as a higher order reasoning process. At each step
a trader could be reasoning deliberately just one-step ahead, yet
stacking up a series of such ``one-step aheads'' in a tatonnement
sequence leads to market outcomes that are indistinguishable from
higher order reasoning.

\medskip

\noindent\textbf{Related Literature:} The need to study how agents
forecast the forecast of others has motivated financial economics
ever since \citet{key-29} introduced his influential metaphor of
markets as a beauty contest. Beginning with the seminal work of Townsend
\citeyearpar{key-31,key-30}, \citet{key-32} and \citet{key-33},
there is a large literature at the intersection of finance and macroeconomics
that looks at models where agents have heterogeneous expectations
about the future realizations of economic variables, and thus have
to forecast the forecast of others. In asset pricing theory, papers
like \citet{key-37}, \citet{key-34}, \citet{key-36} and \citet{key-35}
build on this metaphor to create rational (or near rational) expectations
equilibrium models of asset prices that rely on heterogeneous beliefs
and differential information among agents. While related to this literature,
the main point of departure of the present study is that neither heterogeneous
expectations about fundamentals nor asymmetric information play any
role in the model. In fact, it is assumed throughout that the physical
probability measure describing the ex-post payoff uncertainty is known
to all the market participants.

On the epistemic side, a number of papers have explored both the belief-about-belief
based foundation as well as reasoning based foundation of rational
expectations equilibrium in economies with asymmetric information.
In the reasoning based ``eductive'' approach pioneered by Guesnerie
(see Guesnerie \citeyear{key-40}, \citeyear{key-38} for important
papers; and \citealt{key-39} and \citealt{key-41} for surveys of
this literature) the equilibrium solutions are based on the assumption
that each agent reasons about the reasoning of other agents in line
with common knowledge of rationality and the model. To microfound
the adaptive learning that is necessary in reasoning based models,
\citet{key-210} provide a decision theoretic framework based on agents
who are ``internally rational'' but may not be ``externally rational''.
On the other hand, the belief-about-belief based approach that hews
more closely to the traditional epistemic foundation for solution
concepts relies on various alternative notions of common knowledge
of market clearing, player beliefs and rationality to justify rational
expectation outcomes (see \citealt{key-45}, \citealt{key-43}, \citealt{key-42},
\citealt{key-44}). At a broad level, the approach in the present
paper is similar to the one taken in this literature. However, unlike
this body of work, there is no asymmetric information in the present
study. Further, arbitrage is not an equilibrium phenomenon, so the
rational expectations equilibrium framework is of limited use in understanding
the genesis of arbitrage trade.

The rigorous study of arbitrage was initiated in Ross (\citeyear{key-46},
\citeyear{key-17}) and \citet{key-16}, and a number of alternative
formulations for such trade were proposed in the ensuing years. \citet{key-18}
termed the equivalence of the various alternative conditions that
led to (no) arbitrage the fundamental theorem of asset pricing. One
way to interpret the results in this paper is that they add to this
list of equivalent conditions \textemdash{} in this case, using the
Bayesian belief hierarchies of market agents. Another way to think
about the results is that they highlight the fundamental significance
of level-$k$ reasoning (\citealt{key-47}, \citealt{key-48}, \citealt{key-49})
and the recursion of priors for asset pricing and arbitrage theory.
More references to the literature are interspersed throughout the
text.

\medskip

\noindent\textbf{Organization of the paper: }The main results of
the paper are in Section \ref{sec:No-arbitrage-and-Higher-Beliefs}
and the preceding sections build the tools that are necessary to derive
them. Section \ref{sec:No-arbitrage} describes the classical no-arbitrage
condition that is the starting point of the study. Section \ref{sec:Transforming Arbitrage}
then gradually transforms the classical definition into a form that
is more explicit about the strategic concerns of a representative
Bayesian market agent, with Section \ref{subsec:Model of market}
providing the model of the market that such an agent uses and Section
\ref{subsec:Arbitrage-in-M} defining arbitrage in the model. Section
\ref{sec:Hierarchies} describes the construction of the recursive
hierarchy of beliefs that are the building blocks for asset pricing
under strategic uncertainty, with Section \ref{subsec:Canonical}
outlining the basic hierarchy that is used in epistemic game theory
and Section \ref{subsec:Dominated-Responses} defining the specific
hierarchy that we use in our analysis. Section \ref{sec:No-arbitrage-and-Higher-Beliefs}
contains the main results of the paper linking traditional descriptions
of arbitrage to belief hierarchy based descriptions. Section \ref{subsec:Link-Between-Arbitrage-Hierarchy}
contains the central theorems of the paper characterizing arbitrage
and no-arbitrage in terms of belief hierarchies, and emphasizes the
importance of responsiveness of the market aggregation mapping for
arbitrage. Section \ref{subsec:Tatonnement Reasoning} highlights
the close analogy between tatonnement and reasoning-based price adjustment
processes. Section \ref{sec:Discussion} contains a discussion on
ways to empirically measure the orders of belief used by market participants.

\section{A First Definition of Arbitrage\label{sec:No-arbitrage}}

The textbook definition of an arbitrage opportunity (for example,
\citealt{key-6}) is a portfolio that requires no investment to set
up, is guaranteed to not lose money, and generates positive income
almost surely. More specifically, consider a financial market with
a vector of $d$ assets. One of the assets in the market is risk-free,
in the sense that it always pays out a fixed amount ex-post. The payoffs
from the other assets are stochastic. To model the fundamental uncertainty
in the market, let us fix a measurable space $(S,\mathcal{S})$. The
state space $S$ is assumed to be compact and metric, and the elements
of this space are labeled \emph{states of nature}. The ex-post asset
payouts are given to be non-negative, bounded random variables\footnote{To facilitate easy identification, random variables in the paper are
distinguished by a tilde above the variable.} 
\begin{equation}
\tilde{x}=(\tilde{x}^{1},\dots,\tilde{x}^{d}).
\end{equation}

The assets are priced in the market ex-ante (i.e., before the uncertainty
is realized), and let us denote the market prices by 
\begin{equation}
\text{\ensuremath{q}}=(q^{1},\dots,q^{d})\in\mathbb{R}^{d}.\label{eq:original price vector}
\end{equation}

A portfolio is a vector 
\begin{equation}
\text{\ensuremath{\theta}}=(\theta^{1},\dots,\theta^{d})\in\mathbb{R}^{d},
\end{equation}
where $\theta_{i}$ represents the number of units of the $i^{th}$
asset in the collection. The ex-ante cost of creating the portfolio
$\theta$ is $\theta\cdot q$, and the portfolio pays off $\theta\cdot\tilde{x}$
ex-post depending on the realized state of nature $s\in S$.
\begin{defn}
\textsf{(Arbitrage opportunity)\label{def:Basic-Arb Opportunity}
A portfolio $\theta\in\mathbb{R}^{d}$ is called an arbitrage opportunity
with respect to a probability measure $\mathbf{P}$ on the measurable
space $(S,\mathcal{S})$ if
\begin{equation}
\theta\cdot q\leq0,\,\,\text{but}\,\,\theta.\tilde{x}\geq0\,\mathbf{P}\text{-a.s.}\,\,\text{and}\,\,\mathbf{P}[\theta.\tilde{x}>0]>0.\label{eq:Arb defn-1}
\end{equation}
\smallskip}

Notice that the probability measure $\mathbf{P}$ is relevant in so
far as it fixes the null sets of the measurable space. One could alternatively
formulate the definition using the state space directly (a portfolio
that loses money under no event and makes money under some events)
but such a definition would come with the assumption that at least
some events that make money for the portfolio have a non-zero probability.

Trading models based on no-arbitrage usually take $\tilde{x}$ and
$q$ as model primitives and impose conditions that rule out arbitrage
opportunities in the market (for example, a positive stochastic discount
factor, or a risk-neutral measure condition). Our journey in the sequel
will be in the opposite direction. We start with the no-arbitrage
condition and try to obtain ``deeper'' foundations for the concept
in terms of the Bayesian beliefs of a market participant. 
\end{defn}

\section{Transforming the Arbitrage Definition \label{sec:Transforming Arbitrage}}

In this section, we set about transforming the definition of arbitrage
provided in the previous section to arrive at an interpretation that
is more amenable to strategic analysis (this interpretation in contained
in Proposition \ref{prop:Amenable-Arb-def}). We take the viewpoint
of a particular market participant \textemdash{} agent $i$ (she)
\textemdash{} who is reasoning about the market, and try to decipher
the meaning of arbitrage from her perspective (the terms \emph{market}
\emph{participant} and \emph{agent} are used interchangeably in the
sequel). The market under focus need not be in equilibrium, so the
analysis relies on deriving the value that our agent attaches to her
strategies given an arbitrary market stochastic discount factor (SDF)
\textemdash{} without explicitly specifying an equilibrium price formation
process. In a Bayesian universe, SDFs and strategies must themselves
arise as a result of beliefs held by agents, and in subsequent sections
we extend the analysis by connecting it explicitly to underlying belief
hierarchies.

The classical definition of arbitrage keeps silent on how the arbitrage
opportunity is actually exploited in the market. The implicit assumption
is that if prices satisfy the condition in Definition \ref{def:Basic-Arb Opportunity},
then at least one market participant will notice this and exploit
the profit-making opportunity to improve her utility. Let us restate
the definition of arbitrage to make this point explicit.
\begin{defn}
\textsf{(Tradeable arbitrage opportunity)\label{def:Agent-Arb Opportunity}
There is a tradeable arbitrage opportunity in the market with respect
to a probability measure $\mathbf{P}$ if and only if there is a market
participant $i\in I$ who can trade a portfolio $\theta_{i}\in\mathbb{R}^{d}$
with the property
\begin{equation}
\theta_{i}\cdot q\leq0,\,\,\text{but}\,\,\theta_{i}.\tilde{x}\geq0\,\mathbf{P}\text{-a.s.}\,\,\text{and}\,\,\mathbf{P}[\theta_{i}.\tilde{x}>0]>0.\label{eq:Arb defn-2}
\end{equation}
}

\smallskip

We use the term tradeable\emph{ }arbitrage opportunity to distinguish
opportunities that are exploitable. Unless some market participant
can actually trade an arbitrage opportunity that is said to exist,
its presence is contestable at best. The limits to arbitrage literature
(starting with \citealt{key-15})\footnote{\citet{key-26} provide a survey of the limits to arbitrage literature.}
shows that a wedge may exist between Definitions \ref{def:Basic-Arb Opportunity}
and \ref{def:Agent-Arb Opportunity}, and in such cases it is the
latter definition that takes precedence. In the sequel, we shall focus
exclusively on tradeable arbitrage opportunities, and when there is
no chance of confusion we shall drop the prefix ``tradeable'' and
refer to these as just arbitrage opportunities.
\end{defn}
Notice that condition (\ref{eq:Arb defn-2}) is somewhat ambiguous
about the price impact of arbitrage orders. One way to interpret the
definition is that it makes no allowance for any price impact: the
price vector $q$ is fixed no matter what quantities are traded (call
this, the restricted interpretation). In reality, however, trading
almost always entails a price impact.\footnote{An entire sub-area of Finance, the field of Market Microstructure,
is dedicated to understanding the nuances of price impact. See \citet{key-202}
for a survey.} If we would like to take such impact into account, an alternative
way to interpret the arbitrage definition is that the price vector
$q$ in condition (\ref{eq:Arb defn-2}) is portfolio specific. In
other words, if the arbitrage portfolio were $\theta^{'}$ instead
of $\theta$, the ex ante prices would be $q^{'}$ instead of $q$.
In our analysis, we will favor this alternative interpretation of
arbitrage. To highlight the portfolio specific nature of prices, let
us use the notation $q_{\theta}$ to denote the ex-ante prices when
the traded arbitrage portfolio is $\theta$.
\begin{defn}
\textsf{(Tradeable arbitrage opportunity with portfolio specific price)\label{def:With-Price-impact-Arb Opportunity}
There is a tradeable arbitrage opportunity in the market with respect
to a probability measure $\mathbf{P}$ if and only if there is a market
participant $i\in I$ who can trade a portfolio $\theta_{i}\in\mathbb{R}^{d}$
with the property
\begin{equation}
\theta_{i}\cdot q_{\theta_{i}}\leq0,\,\,\text{but}\,\,\theta_{i}.\tilde{x}\geq0\,\mathbf{P}\text{-a.s.}\,\,\text{and}\,\,\mathbf{P}[\theta_{i}.\tilde{x}>0]>0.\label{eq:Arb defn-3}
\end{equation}
}

\smallskip

The only difference between Definitions \ref{def:Agent-Arb Opportunity}
and \ref{def:With-Price-impact-Arb Opportunity} is the use of $q_{\theta_{i}}$
for prices instead of $q$. The ex-ante prices are now transaction
specific: for each portfolio that an agent proposes, the market generates
a customized price. Though the change in notation is minor, Definition
\ref{def:With-Price-impact-Arb Opportunity} provides a much more
general view of arbitrage. Different orders may affect the prices
differently, and market participants have to take the order specificity
of price into account before judging an arbitrage opportunity. In
other words, there is no ``global'' ex-ante price vector for the
assets to begin with, but each arbitrage portfolio $\theta$ generates
its own specific ex-ante price vector $q_{\theta}$. Of course, one
could always recover the restricted interpretation from this general
interpretation by assuming $q_{\theta}=q$ for all $\theta$.

Another related point worth noting is that the original arbitrage
definition purports to look at the arbitrage trade in complete isolation
\textemdash{} the price vector $q$ in (\ref{eq:original price vector})
is simply taken as exogenously given. Nevertheless, an arbitrage opportunity
is in itself a trading opportunity, so any definition of arbitrage
implicitly posits a price formation process. For the original arbitrage
definition, this is a market process where prices stay unchanged no
matter what portfolios are traded, if we go by the restricted interpretation
above. If this were indeed the case, however, prices would never correct
to become arbitrage-free. The source of such conundrums lies in the
lack of an explicit model of a market that is \emph{not} in equilibrium. 

An arbitrage opportunity is not an equilibrium phenomenon, so any
agent exploiting such an opportunity has to build, in her mind, a
model of a market that is independent of the specifics of an equilibrium
notion. Constructing such a model is the focus of the next subsection. 
\end{defn}

\subsection{A Model of the Market, Not Necessarily in Equilibrium, from a Participant's
Perspective \label{subsec:Model of market}}

How should agent $i$ reason about the market and asset prices? To
make this question more concrete, we need a model of the market from
$i$'s perspective. A first version of the model that we employ (this
version does not lay out the belief hierarchies explicitly) is provided
in equation (\ref{eq:strategic form market}) below, and in the intervening
paragraphs we focus on delineating the primitives of this model .

A maintained assumption throughout is that agent $i$ is Bayesian
and she is aware that the market is populated by a finite set $I$
of Bayesian agents. Agent $i$ faces the fundamental uncertainty described
by the measurable space $(S,\mathcal{S})$ and has access to the vector
of $d$ tradeable assets with measurable payoffs $\tilde{x}$ as described
in Section \ref{sec:No-arbitrage}, to navigate the uncertainty. Importantly,
the market that agent $i$ reasons about \emph{need not be in equilibrium}.

Agent $i$ selects a \emph{strategy} from a compact metric space $A_{i}$
in order to express her personal preference for the ex-ante market
prices. Each strategy $a_{i}\in A_{i}$ is a vector representing quantities
of the various assets (i.e., a portfolio) to be bought \textemdash{}
or sold, sell being a negative buy \textemdash{} in a certain sequence.

There is a mapping $\tilde{f}:(a_{i},a_{-i})\mapsto\tilde{m}$, where
$\tilde{m}$ is a bounded random variable on $(S,\mathcal{S})$ representing
the \emph{SDF chosen by the market} after aggregating the individual
market participant choices $(a_{i})_{i\in I}$.\footnote{We are implicitly assuming the ``law of one price'' holds in the
market. This assumption is not strictly required for the results,
but making it simplifies the presentation considerably.}\footnote{We use the standard notation $a_{-i}\in A_{-i}=\prod_{j\neq i}A_{j}$.}
For all the SDFs under consideration, we shall assume that the risk-free
asset has a positive gross return, i.e. $\int_{S}\tilde{m}\mathrm{d}\mathbf{P}>0$.
$\tilde{f}$ is known to all market participants, and the ex-ante
price of the assets prevailing in the market for the transaction can
be backed out from the market SDF $\tilde{m}$ and physical probability
measure $\mathbf{P}$ in the standard manner: $q^{n}=\int_{S}\tilde{m}\tilde{x}^{n}\mathrm{d}\mathbf{P}$,
$n\in\{1,\dots,d\}$. An equilibrium outcome would impose constraints
on the form of $\tilde{f}$, but since arbitrage is not an equilibrium
phenomenon, we shall not concern ourselves here with defining any
special restrictions (like market clearing) for the mapping.

Every strategy that agent $i$ may choose is associated with a \emph{stochastic
net gain}.\footnote{A loss would be a negative gain in our terminology.}
When the agent selects $a_{i}$ as her strategy, her ex-ante cost
is $a_{i}\cdot q_{a_{i}}$. If $\tilde{f}(a_{i},a_{-i})=\tilde{m}_{a_{i}}$,
this cost works out to be $a_{i}\cdot\int_{S}\tilde{m}_{a_{i}}\tilde{x}\mathrm{d}\mathbf{P}$.
To move this cost from the ex-ante to ex-post, the said amount needs
to be invested in a risk free asset. Since we are using portfolio
specific pricing, let us designate the market SDF that results from
investing exclusively in the risk-free asset by $\tilde{m}^{rf}$,
so that an investment of $-a_{i}\cdot q_{a_{i}}$ exclusively into
the risk-free asset invokes the SDF $\tilde{m}_{-a_{i}\cdot q_{a_{i}}}^{rf}$.
Thus, the cost to the agent, ex-post, of choosing strategy $a_{i}$
is $\frac{a_{i}\cdot\int_{S}\tilde{m}_{a_{i}}\tilde{x}\mathrm{d}\mathbf{P}}{\int_{S}\tilde{m}_{-a_{i}\cdot q_{a_{i}}}^{rf}\mathrm{d}\mathbf{P}}$
. The agent's stochastic payout from the strategy is $a_{i}\cdot\tilde{x}$.
Therefore, the stochastic net gain to the agent from selecting strategy
$a_{i}$ is given by the random vector 
\begin{equation}
\tilde{g}(a_{i},a_{-i})=a_{i}\cdot(\tilde{x}-\frac{\int_{S}\tilde{m}_{a_{i}}\tilde{x}\mathrm{d}\mathbf{P}}{\int_{S}\tilde{m}_{-a_{i}\cdot q_{a_{i}}}^{rf}\mathrm{d}\mathbf{P}}).\label{eq:net gain vector}
\end{equation}

An agent's\emph{ utility} depends on the ex-ante prices and ex-post
payoffs from the assets, along with her holding of the assets. Since
the market SDF depends on the strategies of all market participants,
we have $U_{i}:\Pi_{j\in I}A_{j}\times\mathcal{S}\rightarrow\mathbb{R}$
for agent $i$'s utility. Agent $i$ thus reasons that her utility
depends on the choices of the other agents in the market (which collectively
determine the market SDF) and this necessitates strategic thinking
on her part. The strategic thinking entails belief hierarchies which
we take up in subsequent sections. 

The value that agent $i$ derives from a particular contingency $\psi\in\mathcal{S}$
in the probability space $(S,\mathcal{S},\mathbf{P}$) is determined
by her selected strategy's net gain in that contingency. Ceteris paribus,
higher her gain from a strategy in a contingency, greater the utility
she attaches to it, and vice-versa. That is to say, 
\begin{equation}
g(a_{i}',a_{-i},\psi)\,\underset{(=)}{{>}\,}g(a_{i}'',a_{-i},\psi)\Longleftrightarrow U_{i}(a_{i}',a_{-i},\psi)\underset{(=)}{\,{>}\,}U_{i}(a_{i}'',a_{-i},\psi).\label{eq:util and action relation}
\end{equation}

Using the primitives described above, from agent $i$'s perspective
the market may be represented by the tuple 
\begin{equation}
\mathcal{M}_{i}=\left((S,\mathcal{S}),\tilde{x},I,(A_{j})_{j\in I},\tilde{f},U_{i}\right).\label{eq:strategic form market}
\end{equation}
Given $\mathcal{M}_{i}$, the agent makes her choice $a_{i}\in A_{i}$
and is told the aggregate market selection in the form of an SDF $\tilde{m}$.
Notice that beyond the fact that there are $I$ market participants
who are making selections from their respective sets $A_{j}$, agent
$i$ is not presumed to know anything about other market agents. 

One can recover the regular ``price-quantity'' version of the market
by defining $\mathcal{M}=(\mathcal{M}_{i})_{i\in I}$, and using the
market SDF with $\tilde{f}$ to obtain prices, and the strategies
$a_{j}$, $j\in I$, to obtain the quantities traded. An equilibrium
notion imposes its own restrictions on the elements of $\mathcal{M}$
(for example, market clearing and utility maximization in the Walrasian
equilibrium), but it is not necessary to impose these restrictions
for an analysis of non-equilibrium phenomena like arbitrage. $\mathcal{M}_{i}$
in (\ref{eq:strategic form market}) is not linked to any specific
equilibrium notion or solution concept: it is just a description of
the market from agent $i$'s perspective.

\subsection{Arbitrage in $\mathcal{M}_{i}$ \label{subsec:Arbitrage-in-M}}

We would like to extend the definition of arbitrage to the setup of
$\mathcal{M}_{i}$, and this is accomplished in Proposition \ref{prop:Amenable-Arb-def}
below. Let $a_{i}$ denote agent $i$'s initial choice of strategy.
Recall that the mapping $\tilde{f}$ aggregates all market participant
strategies into the SDF $\tilde{m}$. Since $\tilde{f}$ could be
many-one, the inverse mapping $\tilde{f}^{-1}$ is a correspondence.
So, if agent $i$ sees that the market SDF is $\tilde{m}$ when she
has chosen $a_{i}$, she can back out a set of strategies for other
market participants $A_{-i}(\tilde{m},a_{i})\subseteq\prod_{j\neq i}A_{j}$
which satisfies the condition 
\begin{equation}
a_{i}\times A_{-i}(\tilde{m},a_{i})=\tilde{f}^{-1}(\tilde{m}).\label{eq:A_i(m,a_i)}
\end{equation}
That is to say, $A_{-i}(\tilde{m},a_{i})$ denotes the set of strategies
that might have been used by other market participants according to
agent $i$, given she chose $a_{i}$ and observed the market SDF $\tilde{m}$. 

Agent $i$ has a tradeable arbitrage opportunity if she can find an
alternative strategy that increases her utility weakly in every plausible
state of nature \textemdash{} and strictly in at least some states
\textemdash{} given the strategies of other market participants she
has backed out from asset prices. The utility based characterization
of arbitrage in Proposition \ref{prop:Amenable-Arb-def} serves as
a useful stepping stone on the road towards more general descriptions
of arbitrage based on belief hierarchies.
\begin{prop}
\label{prop:Amenable-Arb-def} There is a tradeable arbitrage opportunity
in the market with respect to a probability measure $\mathbf{P}$
if and only if there is a market participant $i\in I$ who may change
her strategy from $a_{i}$ to $a_{i}^{*}$, such that\emph{
\begin{equation}
\tilde{U}_{i}(a_{i}^{*},a_{-i})\geq\tilde{U}_{i}(a_{i},a_{-i})\,\mathbf{P}\text{-a.s.}\,\,\text{and}\,\,\mathbf{P}[\tilde{U}_{i}(a_{i}^{*},a_{-i})>\tilde{U}_{i}(a_{i},a_{-i})]>0,\label{eq:arbitrage utilities}
\end{equation}
}for all $a_{-i}\in A_{-i}(\tilde{m},a_{i})$.
\end{prop}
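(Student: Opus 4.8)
The plan is to prove the two directions of the equivalence separately, in each case using (\ref{eq:net gain vector}) and (\ref{eq:util and action relation}) to move between the portfolio language of Definition \ref{def:With-Price-impact-Arb Opportunity} and the strategy-and-utility language of $\mathcal{M}_i$. Two observations carry the argument. First, by (\ref{eq:util and action relation}), for any strategies $a_i',a_i''$ and any contingency $\psi$ the utility ranking of $a_i'$ against $a_i''$ at $\psi$ coincides with the net-gain ranking of $g(a_i',a_{-i},\psi)$ against $g(a_i'',a_{-i},\psi)$; hence (\ref{eq:arbitrage utilities}) holds for a fixed $a_{-i}$ exactly when $\tilde g(a_i^{*},a_{-i})\ge\tilde g(a_i,a_{-i})$ $\mathbf{P}$-a.s.\ with strict inequality on a set of positive $\mathbf{P}$-measure. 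Second, since $q^{n}_{a_i}=\int_S\tilde m_{a_i}\tilde x^{n}\,\mathrm d\mathbf{P}$, the net-gain formula (\ref{eq:net gain vector}) collapses to $\tilde g(a_i,a_{-i})=a_i\cdot\tilde x-(a_i\cdot q_{a_i})/R_{a_i}$ --- a random payout minus a constant compounded cost --- where $R_{a_i}:=\int_S\tilde m^{rf}_{-a_i\cdot q_{a_i}}\,\mathrm d\mathbf{P}>0$ by the maintained assumption on the risk-free asset.

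For the ``only if'' direction I would start from Definition \ref{def:With-Price-impact-Arb Opportunity}: some $i\in I$ can trade a portfolio $\theta_i$ with $\theta_i\cdot q_{\theta_i}\le0$, $\theta_i\cdot\tilde x\ge0$ $\mathbf{P}$-a.s., and $\mathbf{P}[\theta_i\cdot\tilde x>0]>0$. Writing $i$'s chosen strategy as $a_i$ and setting $a_i^{*}:=a_i+\theta_i$, the collapsed formula shows that moving from $a_i$ to $a_i^{*}$ changes $i$'s net gain, in every state, by $\theta_i\cdot\tilde x$ minus the compounded incremental cost; that cost has the sign of $\theta_i\cdot q_{\theta_i}\le0$, so the change is $\ge\theta_i\cdot\tilde x\ge0$ $\mathbf{P}$-a.s.\ and is $>0$ on $\{\theta_i\cdot\tilde x>0\}$, a set of positive $\mathbf{P}$-measure. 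Because $\theta_i\cdot\tilde x$ and $\theta_i\cdot q_{\theta_i}$ do not depend on the realized profile of the others, this improvement holds simultaneously for every $a_{-i}\in A_{-i}(\tilde m,a_i)$, and (\ref{eq:arbitrage utilities}) follows from (\ref{eq:util and action relation}).

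For the ``if'' direction I would run this in reverse. Given $a_i,a_i^{*}$ for which (\ref{eq:arbitrage utilities}) holds for all $a_{-i}\in A_{-i}(\tilde m,a_i)$, (\ref{eq:util and action relation}) turns the hypothesis into $\tilde g(a_i^{*},a_{-i})-\tilde g(a_i,a_{-i})\ge0$ $\mathbf{P}$-a.s.\ and $>0$ on a positive-measure set, for every such $a_{-i}$. Setting $\theta_i:=a_i^{*}-a_i$ and letting $q_{\theta_i}$ denote the customized ex-ante cost the market attaches to the incremental trade $\theta_i$, this difference is $\theta_i\cdot\tilde x-(\theta_i\cdot q_{\theta_i})/R$, so $\theta_i$ verifies (\ref{eq:Arb defn-3}) and $i$ has a tradeable arbitrage opportunity. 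Since by construction $A_{-i}(\tilde m,a_i)=\{a_{-i}:\tilde f(a_i,a_{-i})=\tilde m\}$ is precisely the set of profiles $i$ cannot exclude after seeing $\tilde m$, insisting on the inequalities for all of them costs nothing.

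The step I expect to be the main obstacle is the accounting around the portfolio-specific prices and the factor $R_{a_i}$: one has to check that the decomposition $a_i^{*}=a_i+\theta_i$ interacts correctly with the nonlinear dependence of $\tilde m_{a_i}$, of $q_{a_i}$, and of $\tilde m^{rf}_{-a_i\cdot q_{a_i}}$ on the traded portfolio, so that the incremental-cost term really inherits the sign delivered by (\ref{eq:Arb defn-3}). Under the restricted interpretation $q_\theta\equiv q$ and $R_{a_i}\equiv R$ this is the transparent textbook cancellation, so it may be cleanest to record that special case first and then obtain the general one by taking $q_{\theta_i}$ to be, by definition, the market's ex-ante price for the net trade. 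Everything else is a direct unwinding of (\ref{eq:net gain vector}) and (\ref{eq:util and action relation}).
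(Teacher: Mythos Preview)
Your overall strategy matches the paper's: in each direction you pass between the portfolio inequalities of Definition~\ref{def:With-Price-impact-Arb Opportunity} and the utility inequalities via the net-gain formula (\ref{eq:net gain vector}) and the monotone link (\ref{eq:util and action relation}). The substantive difference is exactly the point you flag as the obstacle. You set $a_i^{*}:=a_i+\theta_i$ (and, conversely, $\theta_i:=a_i^{*}-a_i$) and then need the incremental cost to inherit the sign of $\theta_i\cdot q_{\theta_i}$; with portfolio-specific pricing, $q_{a_i+\theta_i}$ bears no simple relation to $q_{a_i}$ and $q_{\theta_i}$, and the factor $R_{a_i}$ depends on the traded amount too, so the ``transparent cancellation'' you describe does not go through as written in the general case.

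The paper sidesteps this nonlinearity by exploiting the fact that strategies in $A_i$ are sequences of trades, not single vectors. In the forward direction it defines $a_i^{*}$ as the \emph{composition} ``$a_i$ followed by $\theta_i$'', so that $\tilde g(a_i^{*},a_{-i})=\tilde g(a_i,a_{-i})+\tilde g(\theta_i,a_{-i})$ holds by construction and each summand is priced at its own portfolio-specific price; one then only needs $\tilde g(\theta_i,a_{-i})\ge 0$ $\mathbf P$-a.s.\ with strict inequality on a positive-measure set, which follows from (\ref{eq:Arb defn-3}) and the positivity of $\int_S\tilde m\,\mathrm d\mathbf P$. In the converse direction the paper likewise builds the arbitrage portfolio as a four-leg sequence (long $a_i^{*}$, short $a_i$, and two risk-free funding trades, each at its own customized price) rather than as the single vector $a_i^{*}-a_i$. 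Your suggestion to ``take $q_{\theta_i}$ to be, by definition, the market's ex-ante price for the net trade'' is headed toward this, but the clean formalization is to work with compositions of trades so that additivity of $\tilde g$ is automatic and no nonlinear accounting is required.
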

\begin{proof}
In the Appendix.
\end{proof}
\smallskip

The definition of arbitrage takes on a particularly simple form in
(\ref{eq:arbitrage utilities}). There is an arbitrage opportunity
if and only if a market participant can revise her strategy to generate
weakly higher utility in every state of nature and strictly higher
utility in some states. In a concrete sense, therefore, an arbitrage
opportunity implies that there is a market participant $i$ who is
responding to the aggregate strategy of other participants with a
strategy that can be improved. The aggregate strategy of all participants
in the market is embodied in the ex-ante prices, and this gives the
equivalence between (\ref{eq:arbitrage utilities}) and earlier definitions
of arbitrage. The formulation of arbitrage in Proposition \ref{prop:Amenable-Arb-def}
is handy for us because it rests on strategic foundations (the market
SDF depends on the choice of all market participants) despite bypassing
the intricacies of equilibrium formation.

Having characterized an arbitrage opportunity in terms of agent $i$'s
strategies and utility, we can now approach the question in terms
of ``dominated'' and ``undominated'' responses. Agent $i$'s strategy,
$a_{i}$, is said to be a dominated response to $A_{-i}(\tilde{m},a_{i})$
if there is an $a_{i}^{*}\in A_{i}$ that satisfies condition (\ref{eq:arbitrage utilities})
in the Proposition above. We shall label such strategies \emph{dominated-wrtp
responses}, where the acronym w.r.t.p. stands for ``with respect
to price''. The label comes from the fact that when agent $i$ employs
such a strategy, she can improve her response given the resulting
prices. 
\begin{defn}
\textsf{(Dominated/Undominated-wrtp response) \label{def:Dominated/Undominated-wrtp-price}Given
her current strategy $a_{i}\in A_{i}$ and other market participant
strategies $a_{-i}\in A_{-i}(\tilde{m},a_{i})$, agent $i$ is said
to be using a }\textsf{\emph{dominated-wrtp reponse}}\textsf{ if she
can use another strategy $a_{i}^{*}\neq a_{i}$ such that $a_{i}^{*}$
and $a_{i}$ satisfy condition (\ref{eq:arbitrage utilities}) in
Proposition \ref{prop:Amenable-Arb-def}. It given by the set 
\begin{flalign}
D_{i}^{wrtp}= & \{a_{i}\in A_{i}:\text{Given \ensuremath{a_{-i}\in A_{-i}(\tilde{m},a_{i})}, there exists }a_{i}^{*}\in A_{i}\text{ s.t. }\nonumber \\
 & \text{ }\tilde{U}_{i}(a_{i}^{*},a_{-i})\geq\tilde{U}_{i}(a_{i},a_{-i})\,\mathbf{P}\text{-a.s.}\,\,\text{and}\,\,\mathbf{P}[\tilde{U}_{i}(a_{i}^{*},a_{-i})>\tilde{U}_{i}(a_{i},a_{-i})]>0\}.\label{eq:dominated wrtp response}
\end{flalign}
}

\textsf{A strategy that is not dominated-wrtp is termed an }\textsf{\emph{undominated-wrtp
response. }}\textsf{It is given by the set
\[
UD_{i}^{wrtp}=\{a_{i}\in A_{i}:a_{i}\notin D_{i}^{wrtp}\}.
\]
}
\end{defn}
\smallskip

The set $A_{-i}(\tilde{m},a_{i})\subseteq\prod_{j\neq i}A_{j}$ plays
an important role in our analysis. Though we've labeled the responses
in Definition \ref{def:Dominated/Undominated-wrtp-price} dominated
with respect to price, in reality these are dominated with respect
to the market participant strategies that support the price; i.e.,
the set $A_{-i}(\tilde{m},a_{i})$. In these terms, Proposition \ref{prop:Amenable-Arb-def}
says that there is a tradeable arbitrage opportunity in the market
only when a market participant $i\in I$ is using a dominated-wrtp
response.
\begin{cor}
\emph{(Proposition \ref{prop:Amenable-Arb-def})}.\label{cor:Corollary Amenable Arb}
There is a tradeable arbitrage opportunity in the market with respect
to a probability measure $\mathbf{P}$ if and only if there is a market
participant $i\in I$ who is using a dominated-wrtp response.
\end{cor}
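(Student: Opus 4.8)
The plan is to obtain this directly from Proposition~\ref{prop:Amenable-Arb-def} by unwinding Definition~\ref{def:Dominated/Undominated-wrtp-price}: the content is essentially a change of vocabulary, and the only genuine work is in lining up the quantifiers. I would prove the two implications in turn.

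For ``arbitrage $\Rightarrow$ dominated-wrtp response'', I start from a tradeable arbitrage opportunity with respect to $\mathbf{P}$; Proposition~\ref{prop:Amenable-Arb-def} then supplies an agent $i\in I$, her initial strategy $a_{i}$, and some $a_{i}^{*}\in A_{i}$ for which $\tilde{U}_{i}(a_{i}^{*},a_{-i})\geq\tilde{U}_{i}(a_{i},a_{-i})$ holds $\mathbf{P}$-a.s.\ and $\mathbf{P}[\tilde{U}_{i}(a_{i}^{*},a_{-i})>\tilde{U}_{i}(a_{i},a_{-i})]>0$, simultaneously for every $a_{-i}\in A_{-i}(\tilde{m},a_{i})$. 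I would first record that $A_{-i}(\tilde{m},a_{i})$ is nonempty: $\tilde{m}$ is the SDF the agent actually observes after submitting $a_{i}$, so the realized opponent profile lies in it by~(\ref{eq:A_i(m,a_i)}). Fixing any such $a_{-i}$, the strict-positivity clause forces $a_{i}^{*}\neq a_{i}$, since equal strategies give pointwise identical utility. These are exactly the conditions imposed on $a_{i}$ in the set~(\ref{eq:dominated wrtp response}), so $a_{i}\in D_{i}^{wrtp}$; that is, agent $i$ is using a dominated-wrtp response.

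For the converse I would read Definition~\ref{def:Dominated/Undominated-wrtp-price} backwards: if some agent $i\in I$ has $a_{i}\in D_{i}^{wrtp}$, then by~(\ref{eq:dominated wrtp response}) there is an $a_{i}^{*}\neq a_{i}$ satisfying~(\ref{eq:arbitrage utilities}) for every $a_{-i}\in A_{-i}(\tilde{m},a_{i})$, which is in particular the hypothesis of Proposition~\ref{prop:Amenable-Arb-def}; that proposition then returns a tradeable arbitrage opportunity in the market with respect to $\mathbf{P}$.

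The step that needs attention --- and the only part that is not purely mechanical --- is the quantifier order in~(\ref{eq:dominated wrtp response}). For the biconditional with Proposition~\ref{prop:Amenable-Arb-def} to hold I must read the membership condition there, as the text intends (cf.\ the sentence just before Definition~\ref{def:Dominated/Undominated-wrtp-price}: ``$a_{i}$ is a dominated response $\ldots$ if there is an $a_{i}^{*}\in A_{i}$ that satisfies condition~(\ref{eq:arbitrage utilities})'', where the ``for all $a_{-i}$'' already sits inside~(\ref{eq:arbitrage utilities})), as an $\exists\,a_{i}^{*}\,\forall\,a_{-i}$ statement and not a $\forall\,a_{-i}\,\exists\,a_{i}^{*}$ one. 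Under that reading the membership condition in~(\ref{eq:dominated wrtp response}) and the hypothesis of Proposition~\ref{prop:Amenable-Arb-def} are literally the same statement and the corollary is immediate; under the weaker $\forall\exists$ reading the converse would additionally require amalgamating the $a_{-i}$-dependent improvements into a single $a_{i}^{*}$, so I would pin down the intended reading at the outset.
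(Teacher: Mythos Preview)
Your proposal is correct and matches the paper's approach: the paper gives no separate proof of the corollary, treating it as an immediate restatement of Proposition~\ref{prop:Amenable-Arb-def} in the language of Definition~\ref{def:Dominated/Undominated-wrtp-price}. Your explicit attention to the $\exists\,a_{i}^{*}\,\forall\,a_{-i}$ versus $\forall\,a_{-i}\,\exists\,a_{i}^{*}$ reading of~(\ref{eq:dominated wrtp response}) is a useful clarification that the paper itself leaves implicit; the intended reading is indeed the former, as the prose preceding the definition makes clear.
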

We could, in fact, define subsets of the strategy space $\prod_{j\neq i}A_{j}$
using many different criteria. As we will see when we define agent
$i$'s hierarchy of beliefs in the next section, at each level of
such a hierarchy there is a different subset of $\prod_{j\neq i}A_{j}$
that agent $i$ could deem plausible for other market participants,
and each case generates a different set of dominated responses. To
decide whether or not the new dominated response sets coincide with
the agent's dominated-wrtp response, we will have to compare $A_{-i}(\tilde{m},a_{i})$
with the subsets of $\prod_{j\neq i}A_{j}$ generated by the belief
hierarchy. A tradeable arbitrage opportunity is equivalent to a dominated-wrtp
response, so this creates a way to bridge the classical approach to
the study of arbitrage with a belief hierarchy based approach.

\section{A Bayesian's Belief Hierarchy \label{sec:Hierarchies}}

In this section, we continue to reason from the standpoint of a particular
market participant, but instead of SDF as primitive, we introduce
the belief hierarchies supporting the SDF as the basis of our model.
A takeaway is that the space of uncertainty that agent $i$ actually
faces in the market is much larger than $S$ \textemdash{} the domain
of fundamental uncertainty \textemdash{} that has been been the traditional
focus in asset pricing. This is because agent $i$ is uncertain not
just about fundamentals, but also about the choices made by other
market participants, about what other market participants believe
her choice to be, about what other market participants believe about
what she believes their choices to be, and so on ad infinitum. Despite
the seemingly limitless size, such uncertainty spaces are well-studied
mathematical objects, and an agent navigates them by assigning plausibility
to certain subsets of belief hierarchies over others. In this section,
we define such a plausible hierarchy of beliefs, $W_{i}^{k}$, that
plays a special role in the analysis of arbitrage opportunities. We
characterize some of the salient properties of $W_{i}^{k}$ and define
the notion of a dominated $k^{th}$ order response for this hierarchy.
In Section \ref{sec:No-arbitrage-and-Higher-Beliefs}, we shall use
the hierarchy $W_{i}^{k}$ to provide a belief based foundation for
an arbitrage opportunity.

We continue to use the market model described by condition (\ref{eq:strategic form market})
for agent $i$. However, the focus in this Section will be primarily
on how agent $i$ reasons about the choice of strategy of other market
participants, $j\neq i$, from $(A_{j})_{j\neq I}$ \emph{before}
she has herself chosen a strategy $a_{i}$ and received the aggregate
market choice $\tilde{m}$. Thus, we shall be discussing the reasoning
process employed by the agent \emph{in the} \emph{absence of any signal
from the market}. Recall that agent $i$'s utility depends on $\tilde{x}$
and $\tilde{f}$ through the relation in (\ref{eq:util and action relation}).
For this Section, we shall abstract away from this dependence and
assume directly that agent $i$ knows the utility mapping $U_{i}:\Pi_{j\in I}A_{j}\times\mathcal{S}\rightarrow\mathbb{R}$.
Effectively, therefore, we shall be using only the reduced form 

\begin{equation}
\mathbb{M}_{i}=\left((S,\mathcal{S}),I,(A_{j})_{j\in I},U_{i}\right),\label{eq:reduced form market}
\end{equation}
of the original market model $\mathcal{M}_{i}$ for agent $i$. 

\subsection{Belief Hierarchies and Canonical Homeomorphism\label{subsec:Canonical}}

This subsection describes the construction of belief hierarchies for
agent $i$. While not common in finance, the construction below is
standard in the epistemic game theoretic literature (see \citealt{key-11}).
A reader who is familiar with that literature may skip ahead to the
next subsection after browsing through our notation.

How should a Bayesian market participant reason about her situation?
To start with, agent $i$ is uncertain about the state of nature $S$.
However, all our definitions of arbitrage presume the physical probability
measure $\mathbf{P}$ is known to market participants, so agent $i$'s
prior belief over $(S,\mathcal{S})$ is predetermined. We label $b_{i}^{0}=\mathbf{P}$
agent $i$'s \emph{zeroth order belief. $b_{i}^{0}$} is a member
of the singleton set $B_{i}^{0}$, the set of permitted zeroth order
beliefs for agent $i$.

Agent $i$ is also uncertain about the strategies that are chosen
by other market participants. $S$ and $(A_{j})_{j\neq i}$ together
determine the state space for agent $i$'s layer-0 uncertainty 

\begin{equation}
Y_{i}^{0}=S\times\prod_{j\neq i}A_{j}
\end{equation}
and, as a Bayesian, she must have a prior belief on this space. $Y_{i}^{0}$
is a compact metric space since $S$ and $A_{j}$ are compact metric
spaces, and we let $\Delta(Y_{i}^{0})$ denote the set of probability
measures on the Borel $\sigma$-field of $Y_{i}^{0}$ endowed with
the topology of weak convergence. Then $B_{i}^{1}=\Delta(Y_{i}^{0})$
is again a compact metric space. Agent $i$'s \emph{first order belief}
$b_{i}^{1}$ is a member of the set $B_{i}^{1}$. 

Agent $i$ realizes that as Bayesians, all agents in the market carry
a first order belief in their head, but she is uncertain about the
rest of the market's first order beliefs. Her second order belief
is a prior over the first order beliefs of other agents in the market
and her own layer-0 uncertainty. Iterating such arguments, we get
the state space for agent $i$'s layer-$k$ uncertainty 
\begin{equation}
Y_{i}^{k}=Y_{i}^{k-1}\times\prod_{j\neq i}B_{j}^{k}.
\end{equation}
The agent's $(k+1)^{th}$ \emph{order belief} $b_{i}^{k+1}$ is a
member of the set $B_{i}^{k+1}=\Delta(Y_{i}^{k})$. 

Notice that given such a \emph{hierarchy of beliefs}, an agent may
compute the probability of an event in multiple ways. For instance,
both $b_{i}^{1}$ and $\marg{}_{Y_{i}^{0}}b_{i}^{2}$ give agent $i$'s
beliefs on $S\times\prod_{j\neq i}A_{j}$. Belief hierarchies are
termed \emph{coherent} when they lead to the same probability for
events, no matter how the probability is calculated; i.e., when for
all $k>1$,
\begin{equation}
\marg{}_{Y_{i}^{k-1}}b_{i}^{k+1}=b_{i}^{k}.\label{eq:Coherent beliefs}
\end{equation}
As is standard, we shall assume that not only does agent $i$ have
coherent beliefs, but also every other agent in the set $I$ has coherent
beliefs, and this fact is common knowledge in the market. We shall
use the label \emph{consistent} to denote beliefs that reflect coherence
and common knowledge of coherence of beliefs in the market.\footnote{\postdisplaypenalty=10000 To define consistent hierarchies, let $H_{i}$
denote the set of coherent belief hierarchies for agent $i$. A consistent
belief hierarchy for agent $i$ is the set 
\[
B_{i}=\{(b_{i}^{0},b_{i}^{1},b_{i}^{2},\dots)\in H_{i}:\marg_{B_{j}^{k-1}}b_{i}^{k}(b_{j}^{k-1})=1\,\,\forall k\geq1,\forall j\neq i\,\,\text{and\,\,(\ensuremath{b_{j}^{0}},\ensuremath{b_{j}^{1}},\ensuremath{b_{j}^{2}},\ensuremath{\dots})\ensuremath{\in H_{j}}}\}.
\]
For more detailed definitions of consistency, refer \citet{key-20}
or \citet{key-9}.} Coherence and consistency of beliefs are important properties of
a belief hierarchy, and a maintained assumption throughout the sequel
shall be that the beliefs under consideration satisfy these two conditions. 

Agent $i$'s belief hierarchy $b_{i}$ is, therefore, a point in the
space
\begin{equation}
B_{i}=\{(b_{i}^{0},b_{i}^{1},b_{i}^{2},\dots)\in\prod_{k\geq0}B_{i}^{k}:(b_{i}^{0},b_{i}^{1},b_{i}^{2},\dots)\text{ is consistent}\}.\label{eq:Consistent beliefs}
\end{equation}
The set $B_{i}$ in (\ref{eq:Consistent beliefs}) is a compact and
metric subset of $\prod_{k\geq0}B_{i}^{k}$ under the product topology,
and foundational work in epistemic game theory (\citealt{key-24},
\citealt{key-19}, \citealt{key-10}, \citealt{key-9}, \citealt{key-27},
among others) has shown that the sets $B_{i}$ and $\Delta(Y_{i}^{0}\times\prod_{j\neq i}B_{j})$
are homeomorphic. Hence, these two sets are of the ``same size''
and agent $i$ does not need to consider any further priors. Further,
the Daniell-Kolmogorov existence theorem and its extensions (see \citealt{key-8},
Chapter 3) guarantee that the homeomorphism

\begin{equation}
\phi_{i}:B_{i}\rightarrow\Delta(Y_{i}^{0}\times\prod_{j\neq i}B_{j})
\end{equation}
is \emph{canonical, }with the property that for $b_{i}\in B_{i}$,
$\marg{}_{Y_{i}^{k-1}}[\phi_{i}(b_{i})]=b_{i}^{k}$. 

In sum, the state space for the \emph{total domain of uncertainty}
faced by agent $i$ in the market is $\Omega=S\times\prod_{j\neq i}A_{j}\times\prod_{j\neq i}B_{j}$.
This is a much larger space than the domain of fundamental uncertainty,
$S$, that is normally used in standard asset pricing theories. Yet,
only by accounting for the space of belief hierarchies, $B_{i}$ in
equation (\ref{eq:Consistent beliefs}), does one exhaust all the
uncertainty faced by agent $i$. Since finite ordinals are enough
to characterize the hierarchies, standard mathematical induction suffices
to characterize their attributes. In the next subsection we define
a subset of the consistent belief hierarchies that is pertinent to
our characterization of arbitrage, and derive some of its properties
via induction. 

\subsection{Dominated and Undominated Responses and Belief Hierarchy Set $W_{i}^{k}$
\label{subsec:Dominated-Responses}}

Consistent belief hierarchies, in themselves, are too broad to be
useful. Consequently, one imposes further natural restrictions on
such hierarchies depending on the application at hand. This is the
motivation behind epistemic solution techniques like rationalizability
(\citealt{key-5}, \citealt{key-3}) or rationality and common knowledge
of rationality (\citealt{key-21}, \citealt{key-20}). For our application
\textemdash{} investigating the link with arbitrage \textemdash{}
it is simpler to use a notion of optimization that is a tad bit different
from conventional rationality. In conventional rationality under uncertainty,
rational agents are deemed to choose strategies that maximize their
expected subjective utility. In our case, optimizing agents shall
be deemed to choose strategies that give them a weakly higher utility
in every state of nature and strictly higher utility in at least some
states. This leads to the belief hierarchy set $W_{i}^{k}$.

\begin{table}[t]
\caption{\textsf{\small{}\label{tab:Table Undominated Dominated Belief hierarchy}Dominated
and Undominated responses, and Belief hierarchy set $W_{i}^{k}$}}

\smallskip{}

\centering{}%
\begin{tabular}{>{\centering}p{0.09\paperwidth}>{\raggedright}p{0.65\paperwidth}}
\toprule 
\textsf{\small{}Condition} & \textsf{\small{}Short explanation of the condition}\tabularnewline
\midrule
\midrule 
\textsf{\small{}$a_{i}\in\mathcal{D}_{i}^{1}$ $a_{i}\in\mathcal{UD}_{i}^{1}$} & \textsf{\small{}Agent $i$'s strategy $a_{i}$ is in the }\textsf{\emph{\small{}dominated
first order response set}}\textsf{\small{} $\mathcal{D}_{i}^{1}$
if she can unequivocally improve her utility by choosing an alternative
response, no matter what strategies other market participants use.
Her strategy is in the }\textsf{\emph{\small{}undominated first order
response set}}\textsf{\small{} $\mathcal{UD}_{i}^{1}$ if it is not
a member of the set $\mathcal{D}_{i}^{1}$.}\tabularnewline
\midrule 
\textsf{\small{}$a_{i}\in\mathcal{D}_{i}^{k}$, $k>1$} & \textsf{\small{}Agent $i$'s strategy $a_{i}$ is in the }\textsf{\emph{\small{}dominated
$k^{th}$ order response set}}\textsf{\small{} $\mathcal{D}_{i}^{k}$
if it is a member of her undominated $(k-1)^{th}$ order response
set, and she can unequivocally improve her utility by choosing an
alternative response when her beliefs about other market participants'
strategies come from a belief hierarchy $b_{i}\in W_{i}^{k}$.}\tabularnewline
\midrule 
\textsf{\small{}$a_{i}\in\mathcal{UD}_{i}^{k}$, $k>1$} & \textsf{\small{}Agent $i$'s strategy $a_{i}$ is in the }\textsf{\emph{\small{}undominated
$k^{th}$ order response set}}\textsf{\small{} $\mathcal{UD}_{i}^{k}$
if it is a member of her undominated $(k-1)^{th}$ order response
set, and she cannot unequivocally improve her utility by choosing
an alternative response when her beliefs about other market participants'
strategies come from a belief hierarchy $b_{i}\in W_{i}^{k}$.}\tabularnewline
\midrule 
\textsf{\small{}$b_{i}\in W_{i}^{k}$, $k>1$} & \textsf{\small{}Agent $i$ is using a }\textsf{\emph{\small{}belief
hierarchy $b_{i}$ in the set $W_{i}^{k}$}}\textsf{\small{} means:}{\small\par}
\begin{itemize}[nolistsep]
\item \textsf{\small{}She believes that other agents in the market are
using strategies in their undominated $(k-1)^{th}$ order response
sets}{\small\par}
\item \textsf{\small{}She believes that other agents in the market believe
that she is using a strategy in her undominated $(k-1)^{th}$ order
response set}{\small\par}
\end{itemize}
\textsf{\small{}Agent $i$ uses a belief hierarchy $b_{i}\in W_{i}^{k}$
only if her own strategy is a member of her undominated $(k-1)^{th}$
order response set.}\tabularnewline
\bottomrule
\end{tabular}
\end{table}

In this section, we provide a microfoundation for the belief hierarchy
set $W_{i}^{k}$. We describe this hierarchy as the subset of consistent
belief hierarchies for agent $i$ that imposes two additional restrictions:
(i) the agent believes that other participants do not use dominated
responses, (ii) the agent believes that other market participants
believe that she does not use dominated responses. In order that her
belief about the belief of other market participants be valid, the
agent uses a belief hierarchy in $W_{i}^{k}$ only when she does not
actually use a dominated response. The definition of dominated and
undominated response sets use our notion of optimizing agents instead
of conventional rationality. Finally, we derive a few key properties
of this hierarchy that are useful for the subsequent analysis. Table
\vref{tab:Table Undominated Dominated Belief hierarchy} gives an
intuitive summary of the main characteristics of the belief hierarchy
set $W_{i}^{k}$, and dominated and undominated responses on this
hierarchy. We use the term ``unequivocally improve utility'' in
the table as a shorthand for utility that weakly increases in every
state of nature in the support of $\mathbf{P}$, and strictly increases
in at least some states. Readers primarily interested in the arbitrage
side of the story may skip ahead to Section \ref{sec:No-arbitrage-and-Higher-Beliefs}
after perusing the table if they so prefer, and come back to this
section for details as needed.

Recall that each belief $b_{i}^{k}$ in agent $i's$ hierarchy is
a probability measure that has a support \textemdash{} the states
to which the measure assigns non-zero probability.\footnote{\label{fn:Support}The support of a probability measure $\mathbf{Q}$
on a measurable space $(B,\mathcal{B})$, denoted by supp($\mathbf{Q}$),
is the smallest closed subset $\bar{B}$ of $B$ such that $\mathbf{Q}(\bar{B})=1$.} The probability measure induced by the entire hierarchy of beliefs
is given by $\phi_{i}(b_{i})$, and since the canonical homeomorphism
$\phi_{i}$ ``preserves beliefs,'' agent $i$ can recover her $k^{th}$
order belief from $\phi_{i}$ by taking the appropriate marginal;
i.e., ${\marg_{Y_{i}^{k-1}}{[\phi_{i}(b_{i})]}}=b_{i}^{k}.$\footnote{$\marg_{X}{[\mathbf{Q}]}$ is the marginal of probability measure
$\mathbf{Q}$ on set $X$. $\supp{\marg_{X}{[\mathbf{Q}]}}$ is the
support of the marginal of $\mathbf{Q}$ on $X$.} 

By a\emph{ }dominated response in the context of a belief hierarchy,
we mean, roughly, a strategy of agent $i$ that can be improved no
matter what strategies other market participants follow and which
state of nature realizes ex-post, given they are in the support of
agent $i$'s beliefs. Dominated responses with respect to belief hierarchies
are closely related to dominated-wrtp responses and arbitrage opportunities,
and we shall establish the connection rigorously in subsequent sections.
For now, to give a precise meaning to a dominated response in the
context of belief hierarchies, we need to provide an inductive definition
for the notion. 

Agent $i$'s strategy is dominated with respect to her zeroth order
belief if she has an alternative strategy that can increase her utility
weakly in every state of nature in the support of the probability
measure $b_{i}^{0}=\mathbf{P}$, no matter what strategies the other
market participants select, and further, in at least some states of
nature the alternative increases her utility strictly. Such strategies
are labeled dominated first order responses. An undominated response
is a strategy that is not dominated. Thus, an undominated first order
response by agent $i$ to her zeroth order belief cannot be weakly
improved upon in every state of nature that agent $i$ presumes possible
as well as strictly improved upon in some states, for every strategy
that other market participants can use. 
\begin{defn}
\textsf{(Dominated/Undominated first order response)\label{def:SD level-1}
Given a probability measure $\mathbf{P}$ over states of nature, agent
$i\in I$ is said to be using a}\textsf{\emph{ dominated first order
response}}\textsf{ if her strategy lies in the set 
\begin{flalign}
\mathcal{D}_{i}^{1}= & \{a_{i}\in A_{i}:\text{For all \ensuremath{a_{-i}\in\prod_{j\neq i}A_{j},} there exists }a_{i}^{*}\in A_{i}\text{ s.t. }\nonumber \\
 & \tilde{U}_{i}(a_{i}^{*},a_{-i})\geq\tilde{U}_{i}(a_{i},a_{-i})\,\mathbf{P}\text{-a.s.}\text{ and}\,\,\mathbf{P}[\tilde{U}_{i}(a_{i}^{*},a_{-i})>\tilde{U}_{i}(a_{i},a_{-i})]>0\}.
\end{flalign}
 A strategy that is not a dominated response to agent $i$'s zeroth
order belief is termed an }\textsf{\emph{undominated first order response}}\textsf{.}\textsf{\emph{
}}\textsf{It is given by the set
\begin{equation}
\mathcal{UD}_{i}^{1}=\{a_{i}\in A_{i}:a_{i}\notin\mathcal{D}_{i}^{1}\}.\label{eq:undominated W_i^1}
\end{equation}
}
\end{defn}
\smallskip

Since dominated first order responses can be unequivocally improved,
it would seem reasonable to suppose that agent $i$ should not believe
that other market participants shall employ such strategies. In other
words, agent $i$ should employ a belief hierarchy in 
\begin{equation}
V_{i}^{2}=\{b_{i}\in B_{i}:\marg_{A_{j}}{[\phi_{i}(b_{i})]}[a_{j}\in\mathcal{D}_{j}^{1}]=0\text{ for all }j\neq i\},\label{eq:set V_i^2}
\end{equation}
since this set excludes beliefs in $B_{i}$ that give a non-zero probability
weight to the use of dominated strategies in $\mathcal{D}_{j\neq i}^{1}$
by other participants. By the same token, agent $i$ must also anticipate
that other agents would not believe that she has employed a strategy
in $\mathcal{D}_{i}^{1}$. In this case, the agent should employ a
belief hierarchy in 
\begin{equation}
W_{i}^{2}=\{b_{i}\in V_{i}^{2}:b_{j}^{1}\in\supp{\marg_{B_{i}^{2}}{[\phi_{i}(b_{i})]}}\Longrightarrow\marg_{A_{i}}{[b_{j}^{1}]}[a_{i}\in D_{i}^{1}]=0\text{ for all }j\neq i\},\label{eq:set W_i^2}
\end{equation}
since this set excludes the beliefs in $V_{i}^{2}$ that permit agent
$i$ to believe that other market participants believe that $i$ has
used a dominated strategy in $\mathcal{D}_{i}^{1}$. It also seems
reasonable to suppose that if agent $i$ is indeed using a belief
hierarchy in $W_{i}^{2}$, then she should have selected her strategy
from $\mathcal{UD}_{i}^{1}$. That is,
\begin{equation}
b_{i}\in W_{i}^{2}\implies a_{i}\in\mathcal{UD}_{i}^{1}.\label{eq:restriction from hierarchy to action}
\end{equation}
Condition (\ref{eq:restriction from hierarchy to action}) ensures
that agent $i$'s belief about other market participants' beliefs
(about $i$'s strategy) are valid. One would like to rule out invalid
presumptions from $i$'s belief hierarchy because utility gains supported
by invalid beliefs would not actually fructify for the agent.

The procedure can be extended iteratively to provide a general inductive
definition for $k^{th}$ order belief hierarchy sets. Let $\mathcal{UD}_{i}^{k}\subseteq\mathcal{UD}_{i}^{k-1}\subseteq\dots\subseteq\mathcal{UD}_{i}^{2}\subseteq\mathcal{UD}_{i}^{1}$
be a series of subsets of $\mathcal{UD}_{i}^{1}$, $i\in I$. We shall
microfound these subsets as higher order undominated responses shortly,
but for now let us just take these to be nested subsets of $\mathcal{UD}_{i}^{1}$.
Let $\mathcal{D}_{i}^{2}=\text{\ensuremath{\mathcal{UD}}}_{i}^{1}\setminus\mathcal{UD}_{i}^{2}$,
and more generally $\mathcal{D}_{i}^{k}=\text{\ensuremath{\mathcal{UD}}}_{i}^{k-1}\setminus\mathcal{UD}_{i}^{k}$.
We can define the belief hierarchy set $W_{i}^{k}$ almost exactly
as we defined $W_{i}^{2}$ in the foregoing paragraph. In the first
stage define the set
\begin{equation}
V_{i}^{k}=\{b_{i}\in W_{i}^{k-1}:\marg_{A_{j}}{[\phi_{i}(b_{i})]}[a_{j}\in\mathcal{D}_{j}^{k-1}]=0\text{ for all }j\neq i\},\label{eq:Belief hierarchy set V_i^k}
\end{equation}
which excludes the beliefs in $W_{i}^{k-1}$ that give a non-zero
probability weight to the use of dominated strategies in $\mathcal{D}_{j\neq i}^{k-1}$
by other participants. Next, define $W_{i}^{k}$ as the subset of
$V_{i}^{k}$ with the following property
\begin{equation}
W_{i}^{k}=\{b_{i}\in V_{i}^{k}:b_{j}^{k-1}\in\supp{\marg_{B_{i}^{k}}{[\phi_{i}(b_{i})]}}\Longrightarrow\marg_{A_{i}}{[b_{j}^{k-1}]}[a_{i}\in D_{i}^{k-1}]=0\text{ for all }j\neq i\}.\label{eq:Belief hierarchy set W_i^k}
\end{equation}
As with $W_{i}^{1}$, the set $W_{i}^{k}$ excludes the beliefs in
$V_{i}^{k}$ that permit agent $i$ to believe that other market participants
believe that $i$ has used a dominated strategy in $\mathcal{D}_{i}^{k-1}$.
Finally, if agent $i$ is indeed using a belief hierarchy in $W_{i}^{k}$,
then she should have actually selected her strategy from $\mathcal{UD}_{i}^{k-1}$,
to ensure agent $i$'s belief about other market participants' beliefs
(about $i$'s strategy) are valid. That is,
\begin{equation}
b_{i}\in W_{i}^{k}\implies a_{i}\in\mathcal{UD}_{i}^{k-1}.\label{eq:restriction_general hierarchy to action}
\end{equation}

Conditions (\ref{eq:Belief hierarchy set V_i^k}) \textendash{} (\ref{eq:restriction_general hierarchy to action})
characterize the belief hierarchy set $W_{i}^{k}$. Notice that this
is an inductive definition \textemdash{} the definition of $W_{i}^{k}$
depends on the definition of $W_{i}^{k-1}$$\dots$ depends on the
definition of $W_{i}^{2}$. The set $W_{i}^{2}$ is a subset of $B_{i}$,
the set of consistent belief hierarchies, and we adopt the convention
$W_{i}^{1}=B_{i}$

The definition of $W_{i}^{k}$ also rests on the definitions of dominated
and undominated sets. We have already defined $\mathcal{D}_{i}^{1}$
and $\mathcal{UD}_{i}^{1}$ in Definition \ref{def:SD level-1}. We
now proceed to define the higher order dominated and undominated sets.
Intuitively, a dominated $k^{th}$ order response for agent $i$ is
dominated with respect to her $(k-1)^{th}$ order belief about other
market participant strategies and states of nature. That is to say,
she has an alternative response that can increase her utility weakly
in every state of nature in the support of the probability measure
$\mathbf{P}$ \textemdash{} given the strategies of the other market
participants are chosen from the support of the marginal of $b_{i}^{k-1}$
on $\prod_{j\neq i}A_{j}$ \textemdash{} and further, in at least
some states of nature the alternative increases her utility strictly.
The belief hierarchies used for dominated and undominated $k^{th}$
order responses come from the set $W_{i}^{k}$.

\begin{defn}
\label{def:SD level k}\textsf{(Dominated/Undominated $k^{th}$ order
response on $W_{i}^{k}$)} \textsf{Given belief hierarchy $b_{i}\in W_{i}^{k}$
for agent $i$, she is said to be using a}\textsf{\emph{ dominated
$k^{th}$ order response }}\textsf{if her strategy is a member of
the set 
\begin{flalign}
\mathcal{D}_{i}^{k}= & \{a_{i}\in\mathcal{UD}{}_{i}^{k-1}:\text{For all \ensuremath{a_{-i}\in\supp{\marg_{\prod_{j\neq i}A_{j}}{[\phi_{i}(b_{i})]}},} there exists }a_{i}^{*}\in A_{i}\text{ s.t.}\nonumber \\
 & \tilde{U}_{i}(a_{i}^{*},a_{-i})\geq\tilde{U}_{i}(a_{i},a_{-i})\,\mathbf{P}\text{-a.s.}\text{ and}\,\,\mathbf{P}[\tilde{U}_{i}(a_{i}^{*},a_{-i})>\tilde{U}_{i}(a_{i},a_{-i})]>0\},\label{eq:dominated W^k}
\end{flalign}
A strategy that is not a dominated response to agent $i$'s $k^{th}$
order belief is termed an }\textsf{\emph{undominated $k^{th}$ order
response. }}\textsf{It is given by the set
\begin{flalign}
\mathcal{UD}{}_{i}^{k}= & \{a_{i}\in\mathcal{UD}{}_{i}^{k-1}:a_{i}\notin\mathcal{D}_{i}^{k}\}.\label{eq:undominated W^k}
\end{flalign}
}
\end{defn}
\smallskip

\begin{figure}[t]
\begin{tikzpicture}[domain=0:4][font=\small]  
\draw[step=0.1,very thin,color=gray!20] (-4,-0.5) grid (12.5,5); 
\draw[very thick][-] (-1.5,0) -- (10.5,0);
\node[font=\normalsize] [above] at (4.5,0){$A_i=\mathcal{UD}_i^{0}$};
\draw[thick][color=black][-] (-1.5,-0.25) -- (-1.5,0.25);
\draw[thick][color=black][-] (10.5,-0.25) -- (10.5,0.25);
\draw[very thick][-] (-1.5,1) -- (10.5,1);
\node[font=\normalsize] [above] at (5.5,1){$\mathcal{UD}_i^1$};
\node[font=\normalsize] [above] at (-0.5,1){$\mathcal{D}_i^1$};
\draw[thick][color=black][-] (-1.5,0.75) -- (-1.5,1.25);
\draw[thick][color=black][-] (0.5,0.75) -- (0.5,1.25);
\draw[thick][color=black][-] (10.5,0.75) -- (10.5,1.25);
\draw[very thick][-] (0.5,2) -- (10.5,2);
\node[font=\normalsize] [above] at (6.5,2){$\mathcal{UD}_i^2$};
\node[font=\normalsize] [above] at (1.5,2){$\mathcal{D}_i^2$};
\draw[thick][color=black][-] (0.5,1.75) -- (0.5,2.25);
\draw[thick][color=black][-] (2.5,1.75) -- (2.5,2.25);
\draw[thick][color=black][-] (10.5,1.75) -- (10.5,2.25);
\draw[very thick][-] (2.5,3) -- (10.5,3);
\node[font=\normalsize] [above] at (7.5,3){$\mathcal{UD}_i^3$};
\node[font=\normalsize] [above] at (3.5,3){$\mathcal{D}_i^3$};
\draw[thick][color=black][-] (2.5,2.75) -- (2.5,3.25);
\draw[thick][color=black][-] (4.5,2.75) -- (4.5,3.25);
\draw[thick][color=black][-] (10.5,2.75) -- (10.5,3.25);
\node[font=\large] [above] at (3.5,3.75){\vdots};
\node[font=\large] [above] at (7.5,3.75){\vdots};
\end{tikzpicture}
\caption{ \textsf{\bf Pictorial representation of dominated and undominated responses.} \textsf{ The dominated response sets in the hierarchy don't overlap while the undominated response sets form a nested hierarchy.}}\label{fig:Domundom}
\end{figure}

Recall that the set $\mathcal{UD}_{i}^{1}$ is a subset of $A_{i}$,
and we adopt the convention $\mathcal{UD}_{i}^{0}=A_{i}$. Figure
\vref{fig:Domundom} provides a pictorial representation of dominated
and undominated responses. Notice that the undominated responses form
a nested hierarchy, which is to say that for $k_{1}\leq k_{2}$, we
have $\mathcal{UD}{}_{i}^{k_{1}}\supseteq\mathcal{UD}{}_{i}^{k_{2}}$.
Further, the union of the dominated and undominated response sets
at any level of the hierarchy is equal to the undominated response
set at the preceding level of the hierarchy, i.e. $\mathcal{D}{}_{i}^{k}\cup\mathcal{UD}{}_{i}^{k}=\mathcal{UD}{}_{i}^{k-1}$. 

A salient point worth emphasizing is that the use of belief hierarchy
sets $W_{i}^{k}$ for $k\geq2$, or dominated and undominated $k^{th}$
order responses for $k\geq2$, entails higher order reasoning. An
agent engages in higher order reasoning when she is reasoning about
the reasoning of other market participants ($\dots$about the reasoning
of other market participants). Recall that $W_{i}^{2}$ imposes a
restriction on the set $B_{i}^{2}$, the set of second order beliefs
of agent $i$. Now, agent $i$'s second order belief is a probability
measure on $Y_{i}^{1}=Y_{i}^{0}\times\prod_{j\neq i}B_{j}^{1}$, and
since $B_{j}^{1}$ is the set of first order beliefs of agent $j\neq i$,
$B_{i}^{2}$ encapsulates agent $i$'s reasoning about the reasoning
of $j$. This is second order reasoning. We use the terms \emph{reasoning}
and \emph{belief }interchangeably since it is traditionally assumed
that an agent forms a belief about an object after she has reasoned
about it; in other words, a higher order belief is the outcome of
an agent's higher reasoning about the belief. Thus, higher the value
of $k$ in $W_{i}^{k}$, higher the order of reasoning that agent
$i$ has employed to arrive at her response.

We justified the definition of belief hierarchy set $W_{i}^{k}$ and
corresponding higher order response sets using the criteria of optimality
and naturalness: it seems unnatural that any agent would choose a
suboptimal, dominated response deliberately, or presume that other
agents would choose such responses, especially when the order $k$
is not very high. One could, however, decide on other criteria of
naturalness to delineate alternative belief hierarchies and their
corresponding higher order response sets. Therefore, we would like
to pin down a few key properties of $W_{i}^{k}$ that become essential
in developing the link with arbitrage. We shall work exclusively with
$W_{i}^{k}$ in the rest of the paper but any belief hierarchy set
that satisfies analogous properties should give us similar results. 

The first of these properties is that there is a whittling down of
the set of belief hierarchies that agent $i$ attributes to agent
$j$ as $i$ uses belief hierarchy sets of higher and higher orders.
That is to say, the belief hierarchies of agent $j$, $j\neq i$,
embedded within $b_{i}\in W_{i}^{k}$ form a nested series of subsets
as the order $k$ in $W_{i}^{k}$ goes up.
\begin{prop}
\label{prop:embedding belief-hierarchy} Denote $b_{j}(W_{i}^{k})=\{b_{j}\in B_{j}:b_{j}\in\supp{\marg_{B_{j}}{[\phi_{i}(b_{i})]}},\,b_{i}\in W_{i}^{k}\}$
for $j\neq i$. Then
\begin{equation}
B_{j}\supseteq b_{j}(W_{i}^{2})\supseteq\dots\supseteq b_{j}(W_{i}^{k-1})\supseteq b_{j}(W_{i}^{k})\supseteq\dots\label{eq:b^Wk set relation}
\end{equation}

That is, a belief hierarchy in $W_{i}^{k}$ embeds a set of belief
hierarchies for agent $j$, given by $b_{j}(W_{i}^{k})$, that is
a subset of the set of hierarchies for $j$ embedded in $W_{i}^{k-1}$,
given by $b_{j}(W_{i}^{k-1})$. 
\end{prop}
\begin{proof}
In the Appendix.
\end{proof}
\smallskip

Proposition \ref{prop:embedding belief-hierarchy} describes how the
belief hierarchy sets that $i$ attributes to other market participants
shrinks as she climbs up her own hierarchy. The next proposition describes
how the \emph{space of strategies} that $i$ considers viable for
other market participants shrinks as she uses belief hierarchies of
higher and higher orders. Just like the belief hierarchies of agent
$j$ embedded within $b_{i}$, the space of strategies for $j$ that
$i$ believes plausible, too, form a nested series of subsets as the
order $k$ in $W_{i}^{k}$ goes up.
\begin{prop}
\label{prop:Responsive actions}Denote $A_{j}(W_{i}^{k})=\{a_{j}\in A_{j}:a_{j}\in\supp{\marg_{A_{j}}{[\phi_{i}(b_{i})]}}$,
$b_{i}\in W_{i}^{k}\}$. Then
\begin{equation}
A_{j}\supseteq A_{j}(W_{i}^{2})\supseteq\dots\supseteq A_{j}(W_{i}^{k-1})\supseteq A_{j}(W_{i}^{k})\dots\label{eq:A^Wk set relation}
\end{equation}

That is, a belief hierarchy in $W_{i}^{k}$ embeds a set of strategies
for agent $j$, given by $A_{j}(W_{i}^{k})$, that is a subset of
the set of strategies for $j$ embedded in $W_{i}^{k-1}$, given by
$A_{j}(W_{i}^{k-1})$. 
\end{prop}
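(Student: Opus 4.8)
The plan is to obtain the chain (\ref{eq:A^Wk set relation}) from the single structural fact that the belief hierarchy sets are themselves nested, $W_{i}^{k}\subseteq W_{i}^{k-1}$ for all $k\geq2$, in exactly the way that (\ref{eq:b^Wk set relation}) is obtained in Proposition \ref{prop:embedding belief-hierarchy}. Once that nesting is available, observe that $A_{j}(W_{i}^{k})$ is nothing but the union of the supports $\supp{\marg_{A_{j}}{[\phi_{i}(b_{i})]}}$ over the index set $\{b_{i}\in W_{i}^{k}\}$, and a union over a smaller index set is contained in the union over a larger one; so the strategy sets can only shrink as the hierarchy sets shrink.

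First I would record the nesting $W_{i}^{k}\subseteq W_{i}^{k-1}$ directly from the definitions, with no induction needed beyond the observation that each step of the inductive construction only discards hierarchies. For $k\geq3$, equation (\ref{eq:Belief hierarchy set W_i^k}) exhibits $W_{i}^{k}$ as a subset of $V_{i}^{k}$, and equation (\ref{eq:Belief hierarchy set V_i^k}) exhibits $V_{i}^{k}$ as, by construction, a subset of $W_{i}^{k-1}$; composing gives $W_{i}^{k}\subseteq V_{i}^{k}\subseteq W_{i}^{k-1}$. For $k=2$ the definitions (\ref{eq:set V_i^2}) and (\ref{eq:set W_i^2}) give $W_{i}^{2}\subseteq V_{i}^{2}\subseteq B_{i}=W_{i}^{1}$ under the convention $W_{i}^{1}=B_{i}$. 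Hence $B_{i}=W_{i}^{1}\supseteq W_{i}^{2}\supseteq\dots\supseteq W_{i}^{k-1}\supseteq W_{i}^{k}\supseteq\dots$.

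Next I would transfer this to the strategy sets. Fix $k\geq3$ and take any $a_{j}\in A_{j}(W_{i}^{k})$; by definition there is a hierarchy $b_{i}\in W_{i}^{k}$ with $a_{j}\in\supp{\marg_{A_{j}}{[\phi_{i}(b_{i})]}}$. Since $W_{i}^{k}\subseteq W_{i}^{k-1}$, this same $b_{i}$ lies in $W_{i}^{k-1}$, and the same membership statement then witnesses $a_{j}\in A_{j}(W_{i}^{k-1})$. This yields $A_{j}(W_{i}^{k})\subseteq A_{j}(W_{i}^{k-1})$ for every $k\geq3$, which is every link of (\ref{eq:A^Wk set relation}) except the leftmost one; and the leftmost link $A_{j}\supseteq A_{j}(W_{i}^{2})$ holds trivially because $A_{j}(W_{i}^{2})$ is by definition a subset of $A_{j}$. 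Assembling these inclusions gives (\ref{eq:A^Wk set relation}).

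I do not anticipate a genuine obstacle here: the content is bookkeeping with the definitions of $V_{i}^{k}$ and $W_{i}^{k}$, and the only point needing a word of care is purely measure-theoretic \textemdash{} that the marginal $\marg_{A_{j}}{[\phi_{i}(b_{i})]}$ and its support are well defined on the compact metric space $A_{j}$, which is already secured by the construction in Section \ref{subsec:Canonical}. Since the argument never manipulates these supports beyond the trivial monotonicity of unions in the index set, even that subtlety does not bite; the essential input is just the definitional nesting $W_{i}^{k}\subseteq W_{i}^{k-1}$, and the proof runs exactly parallel to that of Proposition \ref{prop:embedding belief-hierarchy}.
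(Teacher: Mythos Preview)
Your argument is correct: the nesting $W_{i}^{k}\subseteq W_{i}^{k-1}$ follows directly from the definitions (\ref{eq:Belief hierarchy set V_i^k})--(\ref{eq:Belief hierarchy set W_i^k}), and since $A_{j}(W_{i}^{k})=\bigcup_{b_{i}\in W_{i}^{k}}\supp{\marg_{A_{j}}{[\phi_{i}(b_{i})]}}$, monotonicity of the union in the index set gives the chain. The paper in fact explicitly acknowledges this route, noting that one can ``follow the strategy in the proof of Proposition~\ref{prop:embedding belief-hierarchy},'' which is precisely what you do.

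The paper then presents a second, ``more direct'' argument that differs from yours in content rather than just style. Instead of merely nesting the sets, it identifies them: from the restriction $\marg_{A_{j}}{[\phi_{i}(b_{i})]}[a_{j}\in\mathcal{D}_{j}^{k-1}]=0$ imposed by (\ref{eq:Belief hierarchy set V_i^k}) and the nesting $W_{i}^{k}\subseteq W_{i}^{k-1}\subseteq\dots$, one obtains $\supp{\marg_{A_{j}}{[\phi_{i}(b_{i})]}}=A_{j}\setminus\bigcup_{n=1}^{k-1}\mathcal{D}_{j}^{n}=\mathcal{UD}_{j}^{k-1}$, hence $A_{j}(W_{i}^{k})=\mathcal{UD}_{j}^{k-1}$. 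The chain (\ref{eq:A^Wk set relation}) then falls out of the already-known nestedness of the undominated response sets. Your approach is shorter and entirely sufficient for the proposition as stated; the paper's approach buys the explicit identification $A_{j}(W_{i}^{k})=\mathcal{UD}_{j}^{k-1}$, which is invoked repeatedly downstream (for instance, in the discussion around Assumption~\ref{assu:Eductive optimize} and in the proofs of Theorems~\ref{thm:From Arb to BH}--\ref{thm:No-arbitrage necc =000026 suff}).
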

\begin{proof}
In the Appendix.
\end{proof}
\smallskip

The set $A_{j}(W_{i}^{k})$ represents the set of strategies that
agent $i$ deems plausible for agent $j$ when she uses a belief hierarchy
in the set $W_{i}^{k}$, and the definition of $W_{i}^{k}$ implies
that $A_{j}(W_{i}^{k})$ is the set of undominated responses for $j$,
$\mathcal{UD}_{j}^{k-1}$. Thus, intuitively, Proposition \ref{prop:Responsive actions}
is another expression of the nestedness of undominated responses.
Taken in conjunction, Propositions \ref{prop:embedding belief-hierarchy}
and \ref{prop:Responsive actions} indicate that as agent $i$ climbs
higher and higher up the order $k$ in $W_{i}^{k}$, she believes
all market participants use strategies in correspondingly high order
undominated response sets by responding optimally to their respective
belief hierarchies. Since $i\in I$ is a generic agent in the market,
when $k$ is unbounded this is in essence an affirmation of rationality
and common knowledge of rationality in our setting. 

For the sequel, we will assume that equations (\ref{eq:set V_i^2})
\textendash{} (\ref{eq:restriction_general hierarchy to action})
that determine the belief hierarchies in $W_{i}^{k}$ and Definitions
(\ref{def:SD level-1}) and (\ref{def:SD level k}) that determine
dominated and undominated responses on the hierarchies in $W_{i}^{k}$
characterize our agent $i$. Their properties are summarized in Table
\vref{tab:Table Undominated Dominated Belief hierarchy}. We shall
at times refer to the set of belief hierarchies in $W_{i}^{k}$ as
belief hierarchies of order $k$. As noted earlier, as the order $k$
in $W_{i}^{k}$ goes up, so does the order of reasoning used by agent
$i$.

The discussion in this section focused primarily on outlining the
appropriate belief hierarchy for a Bayesian agent in the market. It
turns out that there is an intimate connection between an arbitrage
opportunity and optimal behavior using a belief hierarchy. This is
the subject of the next section.

\section{Arbitrage and Higher Order Beliefs \label{sec:No-arbitrage-and-Higher-Beliefs}}

Having laid out the requisite background in the previous sections,
we are now ready to forge the link between arbitrage and market participant
belief hierarchies that we've been building towards. Specifically,
we see that an arbitrage opportunity results only when an agent underestimates
the degree to which market participants are optimizing: when choosing
her belief hierarchy the agent presumes that market participants will
employ undominated responses of order $k$, but on seeing the actual
asset prices she infers that they have used undominated of responses
of a higher order (Theorems \ref{thm:From Arb to BH} and \ref{thm:From BH to Arb}).
Crucially, determining the presence of an arbitrage opportunity (or
absence) requires the invocation of higher order beliefs of market
participants, and no-abitrage implies that all market participants
are reasoning about other market participants' strategic choices up
to a sufficiently high order (Theorem \ref{thm:No-arbitrage necc =000026 suff}). 

Precisely how many orders of belief need to be invoked depends on
how well the market aggregation mapping $\tilde{f}$ separates among
different sets of market participant choices. This is because the
arbitrage trade relies on a disparity between the initial presumption
of the market agent and what she finds in actuality when she backs
out the real choices made by market participants, using $\tilde{f}$
and the asset prices. If the aggregation mapping $\tilde{f}$ is not
sufficiently responsive, an agent cannot distinguish the actual choices
employed by market participants well, which means that she need not
have reasoned too far up her hierarchy, initially, to have avoided
arbitrage (Propositions \ref{prop:f is one-one} and \ref{prop:f responsiveness}).

An important question raised by this analysis is the extent to which
participants have to \emph{deliberately} reason in market settings
to reach a state of no-arbitrage. Common experience seems to suggest
that real-world traders don't go very far up their belief hierarchies
when reasoning deliberately; yet arbitrage opportunities are hard
to come by in actual markets. Section \ref{subsec:Tatonnement Reasoning}
tackles this question by showing that a market tatonnement process
\textemdash{} in which agents are simply responding to immediate market
circumstances \textemdash{} delivers the same outcomes as the hierarchy-based
reasoning process (Corollary \ref{cor:Corollary tatonnement} and
Proposition \ref{prop:tatonnement and reasoning}). In other words,
the higher order reasoning that we impute to market agents need not
be a completely deliberate process. At each stage, market agents could
be reasoning just one-step ahead \textemdash{} yet stacking up a series
of such ``one-step-aheads'' in a tatonnement sequence would lead
to market outcomes that are indistinguishable from a setting where
agents use higher order reasoning.

\subsection{Link Between Arbitrage and Belief Hierarchy\label{subsec:Link-Between-Arbitrage-Hierarchy}}

This subsection establishes the formal link between arbitrage and
belief hierarchies. As before, we shall primarily reason from the
perspective of a particular market participant, agent $i$, though
we shall make the assumption that all the market agents are symmetric
in how they approach the problem of selecting their respective strategies.
Specifically, the sequence of steps that any agent $i\in I$ follows
will be assumed to be: 
\begin{itemize}[noitemsep]
\item  \emph{Step-1}: Agent $i\in I$ reasons about the market $\mathcal{M}_{i}$
(equations \ref{eq:strategic form market} and \ref{eq:reduced form market})
using a belief hierarchy in the set $W_{i}^{k}$ defined by equations
(\ref{eq:set V_i^2})\textendash (\ref{eq:restriction_general hierarchy to action}); 
\item \emph{Step-2}: She then makes her choice $a_{i}\in A_{i}$, and is
told the market's aggregate selection in the form of an SDF $\tilde{m}$. 
\end{itemize}
We shall term Steps 1 and 2 the \emph{eductive sequence} \label{wrd: eductive sequence}
in the market.\footnote{The term ``eductive'' was introduced in \citet{key-28} to distinguish
play that arises from agents reasoning about the reasoning of other
agents in a game.} We have analyzed Step-1 separately in Section \ref{sec:Hierarchies},
and Step-2 separately in Section \ref{sec:Transforming Arbitrage},
and we now intend to bring the analyses in the two sections together. 

As a preliminary, in order to link Steps 1 and 2 in the eductive sequence,
we shall assume that any market participant's choice of strategy in
Step-2 is undominated with respect to the belief hierarchy she employs
in Step-1. 
\begin{assumption}
\label{assu:Eductive optimize}Each market participant $i\in I$ uses
a belief hierarchy in $W_{i}^{k}$, $k\geq0$, in Step-1 of the eductive
sequence, and their choice of strategy in Step-2 is an undominated
$k^{th}$ order response from $\mathcal{UD}_{i}^{k}$.
\end{assumption}
\smallskip

Assumption \ref{assu:Eductive optimize} simply says that agent $i$
judges a certain subset of strategies plausible for other market participants
in Step-1 of the eductive sequence, and in Step-2 chooses her strategy
to be an undominated response to that subset. Since a belief hierarchy
in $W_{i}^{k}$ attributes plausibility to the subset $\prod_{j\neq i}\mathcal{UD}_{j}^{k-1}\subseteq\prod_{j\neq i}A_{j}$
of strategies for other market participants, another way to state
Assumption \ref{assu:Eductive optimize} is that it mandates that
agent $i$ choose an undominated response to the strategies in $\prod_{j\neq i}\mathcal{UD}_{j}^{k-1}$
in Step-2.

Intuitively, $\prod_{j\neq i}\mathcal{UD}_{j}^{k-1}$ represents agent
$i$'s initial assessment of market participant strategies \emph{before}
she has had a chance to interact with the market. That is to say,
when reasoning with a belief hierarchy in $W_{i}^{k}$ in Step-1 of
the eductive sequence, agent $i$ assigns plausibility to the subset
of strategies $\prod_{j\neq i}A_{j}(W_{i}^{k})=\prod_{j\neq i}\mathcal{UD}_{j}^{k-1}$
for other market participants.\footnote{The set $A_{j}(W_{i}^{k})$ was defined in Proposition \ref{prop:Responsive actions}.}
Subsequently, in Step-2, from the market SDF $\tilde{m}$ and aggregation
mapping $\tilde{f}$, she can back out the set of strategies $A_{-i}(\tilde{m},a_{i})$
that have been actually used by other market participants. $A_{-i}(\tilde{m},a_{i})$
thus represents agent $i$'s assessment of the set of strategies that
were \emph{actually} used by the other market participants in Step-1.
In simple terms, arbitrage arises when the agent's actual finding
differs from her initial assessment. At a technical level, it is the
set-theoretic relationship between $\prod_{j\neq i}\mathcal{UD}_{j}^{k-1}$
and $A_{-i}(\tilde{m},a_{i})$ that determines the nuances of the
link between belief hierarchy and arbitrage, and we explore the connection
in detail in the propositions that follow. 

While we shall be working exclusively with the belief hierarchy set
$W_{i}^{k}$ in this section, most of the results that follow hold
more generally for any set of belief hierarchies that satisfy conditions
analogous to the ones in Propositions \ref{prop:embedding belief-hierarchy}
and \ref{prop:Responsive actions}. Essentially, what we need is that
the space of strategies that agent $i$ deems plausible for $j\neq i$
form a series of subsets as agent $i$ increases the order of her
belief hierarchy set. If this happens, choosing an optimal strategy
using level $k$ automatically implies the strategy is optimal for
levels $0\leq n<k$. Most of the proofs below employ some version
of this argument.

Our first theorem provides a necessary condition for arbitrage. It
says that a tradeable arbitrage opportunity entails at least one market
agent misanticipating the responses of other market participants.
How might such misanticipation arise? Recall that undominated response
sets form a nested hierarchy, i.e. for $k_{1}\leq k_{2}$, we have
$\prod_{j\neq i}\mathcal{UD}{}_{j}^{k_{1}}\supseteq\prod_{j\neq i}\mathcal{UD}{}_{j}^{k_{2}}$.
So, if agent $i$ uses a belief hierarchy in $W_{i}^{k}$ to generate
her own response, anticipating other market participants to select
their responses from their respective undominated $(k-1)^{th}$ order
response sets, the anticipation can go wrong only if other market
participants actually use responses in higher order undominated sets.
That is to say, the agent initially underestimates how far up their
hierarchy other market participants are optimizing (i.e., choosing
undominated responses) and this renders her own initial response suboptimal,
creating an opportunity for a riskless profitable trade when she gets
to know the actual responses of the market participants.
\begin{thm}
\label{thm:From Arb to BH} There is a tradeable arbitrage opportunity
in the market with respect to a probability measure $\mathbf{P}$
only if there is a market participant $i\in I$ who uses a belief
hierarchy in $W_{i}^{k}$ in Step-1 of the eductive sequence and finds
$A_{-i}(\tilde{m},a_{i})\subset\prod_{j\neq i}\mathcal{UD}_{j}^{k-1}$
in Step-2.
\end{thm}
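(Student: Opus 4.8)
The plan is to argue entirely in terms of dominated versus undominated responses, leaning on the utility-based reformulation of arbitrage and on the nesting of the undominated response sets. First I would apply Corollary \ref{cor:Corollary Amenable Arb}: a tradeable arbitrage opportunity with respect to $\mathbf{P}$ exists precisely when some participant $i\in I$ is using a dominated-wrtp response, i.e.\ $a_i\in D_i^{wrtp}$. Fix this $i$ and, invoking Assumption \ref{assu:Eductive optimize}, let $b_i\in W_i^{k}$ be the belief hierarchy she reasoned with in Step-1 and $a_i\in\mathcal{UD}_i^{k}$ her Step-2 choice, with $\tilde m=\tilde f(a_i,a_{-i})$ the SDF she is told. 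The target is then $A_{-i}(\tilde m,a_i)\subsetneq\prod_{j\neq i}\mathcal{UD}_j^{k-1}$.

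The core of the argument is to exhibit a profile lying in $\prod_{j\neq i}\mathcal{UD}_j^{k-1}$ yet excluded from $A_{-i}(\tilde m,a_i)$. Since $a_i\in\mathcal{UD}_i^{k}=\mathcal{UD}_i^{k-1}\setminus\mathcal{D}_i^{k}$, negating the universal quantifier in Definition \ref{def:SD level k} produces a profile $a_{-i}^{\circ}$ in the support of $\marg_{\prod_{j\neq i}A_j}[\phi_i(b_i)]$ against which $a_i$ cannot be improved, i.e.\ there is no $a_i^{\ast}\in A_i$ with $\tilde{U}_i(a_i^{\ast},a_{-i}^{\circ})\ge\tilde{U}_i(a_i,a_{-i}^{\circ})$ $\mathbf{P}$-a.s.\ and $\mathbf{P}[\tilde{U}_i(a_i^{\ast},a_{-i}^{\circ})>\tilde{U}_i(a_i,a_{-i}^{\circ})]>0$. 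Because $b_i\in W_i^{k}\subseteq V_i^{k}\subseteq W_i^{k-1}\subseteq\cdots$, the defining conditions (\ref{eq:Belief hierarchy set V_i^k}) of $V_i^{2},\dots,V_i^{k}$ make the relevant marginals of $\phi_i(b_i)$ assign zero mass to $\mathcal{D}_j^{n}$ for every $j\neq i$ and every $1\le n\le k-1$; combined with the telescoping identity $A_j=\mathcal{D}_j^{1}\cup\cdots\cup\mathcal{D}_j^{k-1}\cup\mathcal{UD}_j^{k-1}$ and a finite union bound over $j$, this places the support of $\marg_{\prod_{j\neq i}A_j}[\phi_i(b_i)]$ inside $\prod_{j\neq i}\mathcal{UD}_j^{k-1}$, so in particular $a_{-i}^{\circ}\in\prod_{j\neq i}\mathcal{UD}_j^{k-1}$. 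On the other hand $a_i\in D_i^{wrtp}$ means, by (\ref{eq:dominated wrtp response}), that against \emph{every} $a_{-i}\in A_{-i}(\tilde m,a_i)$ an improving $a_i^{\ast}$ does exist. Comparing the two, $a_{-i}^{\circ}\notin A_{-i}(\tilde m,a_i)$.

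It then remains to establish the containment $A_{-i}(\tilde m,a_i)\subseteq\prod_{j\neq i}\mathcal{UD}_j^{k-1}$, after which the previous paragraph upgrades it to a proper containment and the theorem follows. The point is that throughout the eductive sequence agent $i$ is reasoning with $b_i\in W_i^{k}$, so the set she genuinely backs out from the observed price is her posterior support, namely the intersection of $\{a_{-i}:\tilde f(a_i,a_{-i})=\tilde m\}$ with the prior support of $\marg_{\prod_{j\neq i}A_j}[\phi_i(b_i)]$; and that prior support was just shown to lie in $\prod_{j\neq i}\mathcal{UD}_j^{k-1}$. Equivalently, by Assumption \ref{assu:Eductive optimize} and the section's symmetry/common-knowledge hypothesis she knows each other participant's Step-2 strategy is an undominated response in $\mathcal{UD}_j^{k-1}$, and hence discards from the preimage $\tilde f^{-1}(\tilde m)$ every profile with a component outside $\prod_{j\neq i}\mathcal{UD}_j^{k-1}$.

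I expect this last reconciliation to be the delicate step: the raw definition (\ref{eq:A_i(m,a_i)}) of $A_{-i}(\tilde m,a_i)$ as the $A_{-i}$-section of $\tilde f^{-1}(\tilde m)$ makes no reference to beliefs, so one must argue carefully that, once agent $i$ carries the hierarchy $b_i\in W_i^{k}$, the operative backed-out set is the refinement above rather than the full preimage. Granting that, together with a routine topological check that zero mass on each (suitably closed) $\mathcal{D}_j^{n}$ keeps the support inside $\prod_{j\neq i}\mathcal{UD}_j^{k-1}$, what is left is exactly the short set-theoretic comparison of ``undominated against the prior support'' with ``dominated against the backed-out set'' carried out above.
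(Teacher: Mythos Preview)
Your first two paragraphs are exactly the paper's argument, just phrased directly rather than by contradiction. The paper supposes $A_{-i}(\tilde m,a_i)\supseteq\prod_{j\neq i}\mathcal{UD}_j^{k-1}$, invokes Assumption~\ref{assu:Eductive optimize} to get $a_i\in\mathcal{UD}_i^{k}$, and observes that the witness $a_{-i}^{\circ}$ against which $a_i$ cannot be improved would then lie in $A_{-i}(\tilde m,a_i)$, contradicting the dominated-wrtp property from Proposition~\ref{prop:Amenable-Arb-def}. That is precisely your construction of $a_{-i}^{\circ}\in\prod_{j\neq i}\mathcal{UD}_j^{k-1}\setminus A_{-i}(\tilde m,a_i)$, run in contrapositive form.

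Where you diverge is your third paragraph: the paper never attempts to prove the containment $A_{-i}(\tilde m,a_i)\subseteq\prod_{j\neq i}\mathcal{UD}_j^{k-1}$. Its contradiction hypothesis is $A_{-i}(\tilde m,a_i)\supseteq\prod_{j\neq i}\mathcal{UD}_j^{k-1}$, and once that is refuted the proof simply stops. In other words, the paper is tacitly reading ``$\subset$'' in the theorem statement as the negation of ``$\supseteq$'' (equivalently, $\prod_{j\neq i}\mathcal{UD}_j^{k-1}\not\subseteq A_{-i}(\tilde m,a_i)$), not as a full strict-subset relation. Your instinct that the raw preimage definition~(\ref{eq:A_i(m,a_i)}) does not by itself force $A_{-i}(\tilde m,a_i)\subseteq\prod_{j\neq i}\mathcal{UD}_j^{k-1}$ is well founded, and your proposed fix via the posterior-support refinement goes beyond what the paper actually argues. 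If you want to match the paper, drop the third paragraph entirely; if you want a genuinely stronger statement, you will indeed need the extra hypothesis you identify, and you should say so explicitly rather than fold it into the meaning of $A_{-i}(\tilde m,a_i)$.
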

\begin{proof}
\textsf{\small{}From Proposition \ref{prop:Amenable-Arb-def}, a tradeable
arbitrage opportunity means we are given that there is an agent $i\in I$
who may change her strategy from $a_{i}$ to $a_{i}^{*}$ to obtain
\begin{equation}
\tilde{U}_{i}(a_{i}^{*},a_{-i})\geq\tilde{U}_{i}(a_{i},a_{-i})\,\mathbf{P}\text{-a.s.}\,\,\text{and}\,\,\mathbf{P}[\tilde{U}_{i}(a_{i}^{*},a_{-i})>\tilde{U}_{i}(a_{i},a_{-i})]>0,\label{eq:Amenable util arb relation}
\end{equation}
for any $a_{-i}\in A_{-i}(\tilde{m},a_{i})$. We will use a proof
by contradiction to obtain the result. Suppose $A_{-i}(\tilde{m},a_{i})\supseteq\prod_{j\neq i}\mathcal{UD}_{j}^{k-1}$.
Recall that for undominated response sets, $\mathcal{UD}{}_{i}^{m1}\supseteq\mathcal{UD}{}_{i}^{m2}$
when $m1\leq m2$ (see Definition \ref{def:SD level k}), i.e. the
undominated response sets form a nested sequence as one increases
the order. By Assumption \ref{assu:Eductive optimize}, $a_{i}\in\mathcal{UD}_{i}^{k}$,
which means that for any strategy $a_{-i}$ that other market participants
choose in the set $\prod_{j\neq i}\mathcal{UD}_{j}^{k-1}$, agent
$i$ cannot unequivocally improve her utility}\footnote{\textsf{\footnotesize{}Recall that we }\textsf{\small{}t}\textsf{\footnotesize{}he
term ``unequivocally improve utility'' is a short-form for condition
(\ref{eq:Amenable util arb relation}), i.e. a change of strategy
from $a_{i}$ to $a_{i}^{*}$ for agent $i$ leading to
\[
\tilde{U}_{i}(a_{i}^{*},a_{-i})\geq\tilde{U}_{i}(a_{i},a_{-i})\,\mathbf{P}\text{-a.s.}\,\,\text{and}\,\,\mathbf{P}[\tilde{U}_{i}(a_{i}^{*},a_{-i})>\tilde{U}_{i}(a_{i},a_{-i})]>0\,\,\,\text{for any \ensuremath{a_{-i}\in A_{-i}}(\ensuremath{\tilde{m}},\ensuremath{a_{i}})}.
\]
}}\textsf{\small{} by changing her strategy. Therefore, when $A_{-i}(\tilde{m},a_{i})\supseteq\prod_{j\neq i}\mathcal{UD}_{j}^{k-1}$,
agent $i$ cannot unequivocally improve her utility by changing her
strategy if she is already choosing her strategy from the set $\mathcal{UD}_{i}^{k}$.
This means there is no $a_{i}^{*}$ satisfying condition (\ref{eq:Amenable util arb relation})
above. Therefore, we have the result.}{\small\par}
\end{proof}
\medskip

Assumption \ref{assu:Eductive optimize} links the agent's belief
hierarchy to her undominated response, so an equivalent way to state
Proposition \ref{thm:From Arb to BH} is that there is a tradeable
arbitrage opportunity in the market only if there is a market participant
$i\in I$ who uses a response in $\mathcal{UD}_{i}^{k}$ but finds
$A_{-i}(\tilde{m},a_{i})\subset\prod_{j\neq i}\mathcal{UD}_{j}^{k-1}$.

Is the condition in Theorem \ref{thm:From Arb to BH} also sufficient
for arbitrage? Not quite. To see why, notice that a belief hierarchy
in $W_{i}^{k}$ \textemdash{} which requires that agent $i$ choose
a strategy in the undominated set $\mathcal{UD}_{i}^{k}$ under Assumption
\ref{assu:Eductive optimize} \textemdash{} doesn't completely pin
down agent $i$'s strategy set. Since undominated responses form nested
subsets, agent $i$ could very well be choosing a strategy in the
set $\bigcap_{k\geq0}\mathcal{UD}_{i}^{k}$ when the only restriction
is that she select in $\mathcal{UD}_{i}^{k}$. If this is the case,
however, she has no tradeable arbitrage opportunity no matter what
responses other market participants select. This is because agent
$i$ has already chosen from the best possible response set she can
find \textemdash{} a response in $\bigcap_{k\geq0}\mathcal{UD}_{i}^{k}$
is equivalent to selecting an undominated response using a belief
hierarchy in $\bigcap_{k\geq0}W_{i}^{k}$ \textemdash{} and there
is no way for her to improve it any further. Intuitively, she has
exhausted all the infinite orders of her hierarchy and, therefore,
her response cannot be made any better. We need to exclude this case
from our specification of undominated responses to derive a sufficient
condition. In other words, we need to specify a set $\mathcal{UD}_{i}^{k}\setminus\mathcal{UD}_{i}^{k+n}$,
$n$ finite, for agent $i$'s strategy, in addition to specifying
the belief hierarchy set $W_{i}^{k}$. Since $\mathcal{UD}_{i}^{k}\setminus\mathcal{UD}_{i}^{k+n}=\bigcup_{l=0}^{n}\mathcal{UD}_{i}^{k+l}\setminus\mathcal{UD}_{i}^{k+l+1}$,
we can as well work with the set $\mathcal{UD}_{i}^{k}\setminus\mathcal{UD}_{i}^{k+1}$
without loss of generality, and this is what we do.

The condition $A_{-i}(\tilde{m},a_{i})\subset\prod_{j\neq i}\mathcal{UD}_{j}^{k-1}$
in Theorem \ref{thm:From Arb to BH}, too, needs a slight modification
to generate sufficiency. If agent $i$ selects an undominated response
to the set $\prod_{j\neq i}\mathcal{UD}_{j}^{k-1}$, by the nestedness
of undominated sets, the response is automatically undominated for
the sets $\prod_{j\neq i}\mathcal{UD}_{j}^{k-2}$, $\prod_{j\neq i}\mathcal{UD}_{j}^{k-3}$,
$\dots$, $\prod_{j\neq i}\mathcal{UD}_{j}^{0}$. This is the rationale
behind the requirement that $A_{-i}(\tilde{m},a_{i})$ be a strict
subset of $\prod_{j\neq i}\mathcal{UD}_{j}^{k-1}$ when there is an
arbitrage opportunity. However, $A_{-i}(\tilde{m},a_{i})\subset\prod_{j\neq i}\mathcal{UD}_{j}^{k-1}$
is a characterization based on specifying the regions where arbitrage
\emph{cannot} exist, it does not say anything about the regions where
the arbitrage opportunity \emph{does} exist. For sufficiency, we need
to recognize that $\mathcal{UD}_{i}^{k}\setminus\mathcal{UD}_{i}^{k+1}=\mathcal{D}_{i}^{k+1}$
(see Figure \vref{fig:Domundom}), which is a dominated response to
the set $\prod_{j\neq i}\mathcal{UD}_{j}^{k}$. Now, a dominated response
to a set implies the reponse is also dominated with respect to every
subset of the said set. Therefore, whenever $A_{-i}(\tilde{m},a_{i})$
belongs in a subset of $\prod_{j\neq i}\mathcal{UD}_{j}^{k}$ the
agent has an arbitrage opportunity. These sufficient conditions are
summarized in Theorem \ref{thm:From BH to Arb}. 
\begin{thm}
\label{thm:From BH to Arb} If there is a market participant $i\in I$
who uses a belief hierarchy in $W_{i}^{k}$ in Step-1 of the eductive
sequence, and selects an undominated response in the set $\mathcal{UD}_{i}^{k}\setminus\mathcal{UD}_{i}^{k+1}$
to find $A_{-i}(\tilde{m},a_{i})\subseteq\prod_{j\neq i}\mathcal{UD}_{j}^{k}$
in Step-2, there is a tradeable arbitrage opportunity in the market
with respect to the probability measure $\mathbf{P}$. 
\end{thm}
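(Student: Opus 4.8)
The plan is to reduce the claim to Corollary \ref{cor:Corollary Amenable Arb} (equivalently Proposition \ref{prop:Amenable-Arb-def}): it suffices to show that the agent $i$ in the hypothesis is playing a dominated-wrtp response, i.e.\ that her chosen strategy $a_i$ lies in $D_i^{wrtp}$ as defined in Definition \ref{def:Dominated/Undominated-wrtp-price}. The first, purely bookkeeping, step is the set identity $\mathcal{UD}_i^k\setminus\mathcal{UD}_i^{k+1}=\mathcal{D}_i^{k+1}$, which follows from $\mathcal{D}_i^{k+1}\cup\mathcal{UD}_i^{k+1}=\mathcal{UD}_i^k$ and the disjointness of $\mathcal{D}_i^{k+1}$ and $\mathcal{UD}_i^{k+1}$ (Figure \ref{fig:Domundom}). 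Hence the hypothesis states precisely that $a_i\in\mathcal{D}_i^{k+1}$; this is consistent with her using a hierarchy in $W_i^k$ under Assumption \ref{assu:Eductive optimize}, since $\mathcal{D}_i^{k+1}\subseteq\mathcal{UD}_i^k$.

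Next I would unpack $a_i\in\mathcal{D}_i^{k+1}$ through Definition \ref{def:SD level k} read at order $k+1$: for a belief hierarchy $b_i\in W_i^{k+1}$, membership says that for every $a_{-i}\in\supp\marg_{\prod_{j\neq i}A_j}[\phi_i(b_i)]$ there is an $a_i^*\in A_i$ with $\tilde{U}_i(a_i^*,a_{-i})\geq\tilde{U}_i(a_i,a_{-i})$ $\mathbf{P}$-a.s.\ and $\mathbf{P}[\tilde{U}_i(a_i^*,a_{-i})>\tilde{U}_i(a_i,a_{-i})]>0$. By Proposition \ref{prop:Responsive actions} together with the identification $A_j(W_i^{k+1})=\mathcal{UD}_j^k$ recorded after that proposition, this support is $\prod_{j\neq i}\mathcal{UD}_j^k$ regardless of which $b_i\in W_i^{k+1}$ is taken, so $a_i$ can be unequivocally improved against every profile in $\prod_{j\neq i}\mathcal{UD}_j^k$. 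Invoking the hypothesis $A_{-i}(\tilde{m},a_i)\subseteq\prod_{j\neq i}\mathcal{UD}_j^k$ and the obvious monotonicity --- being improvable against every element of a set implies being improvable against every element of any subset of it --- gives: for every $a_{-i}\in A_{-i}(\tilde{m},a_i)$ there is an improving $a_i^*$. That is exactly the condition defining $D_i^{wrtp}$, so $a_i\in D_i^{wrtp}$ and Corollary \ref{cor:Corollary Amenable Arb} delivers a tradeable arbitrage opportunity.

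The step I expect to demand the most care is the passage through Definition \ref{def:SD level k}: one must check that for hierarchies in $W_i^{k+1}$ the strategy-marginal support really is the full product $\prod_{j\neq i}\mathcal{UD}_j^k$ (not merely contained in it), so that ``$a_i$ dominated at order $k+1$'' genuinely means $a_i$ is improvable against all of $\prod_{j\neq i}\mathcal{UD}_j^k$ --- which is what lets the subset inclusion in the hypothesis do its work. A smaller point to keep straight is the quantifier order: in both the source condition (Definition \ref{def:SD level k}) and the target condition (Definition \ref{def:Dominated/Undominated-wrtp-price}) the improving strategy $a_i^*$ is chosen after $a_{-i}$, so no uniform-in-$a_{-i}$ improving strategy is needed, and routing the final appeal through Corollary \ref{cor:Corollary Amenable Arb} keeps this transparent.
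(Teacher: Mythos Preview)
Your proposal is correct and follows essentially the same route as the paper's own proof: identify $\mathcal{UD}_i^{k}\setminus\mathcal{UD}_i^{k+1}=\mathcal{D}_i^{k+1}$, read off from Definition~\ref{def:SD level k} that $a_i$ is improvable against every $a_{-i}\in\prod_{j\neq i}\mathcal{UD}_j^{k}$, pass to the subset $A_{-i}(\tilde m,a_i)$, and invoke Proposition~\ref{prop:Amenable-Arb-def}/Corollary~\ref{cor:Corollary Amenable Arb}. Your explicit routing through Proposition~\ref{prop:Responsive actions} to pin down the support as $\prod_{j\neq i}\mathcal{UD}_j^{k}$, and your note on quantifier order, are additional care that the paper's proof leaves implicit.
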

\begin{proof}
\textsf{\small{}From condition (\ref{eq:undominated W^k}) in Definition
\ref{def:SD level k}, $\mathcal{UD}_{i}^{k}\setminus\mathcal{UD}_{i}^{k+1}=\mathcal{D}_{i}^{k+1}$.
The definition of $\mathcal{D}_{i}^{k+1}$ implies that agent $i$
may change her strategy from $a_{i}$ to $a_{i}^{*}$ to obtain
\begin{equation}
\tilde{U}_{i}(a_{i}^{*},a_{-i})\geq\tilde{U}_{i}(a_{i},a_{-i})\,\mathbf{P}\text{-a.s.}\text{ and}\,\,\mathbf{P}[\tilde{U}_{i}(a_{i}^{*},a_{-i})>\tilde{U}_{i}(a_{i},a_{-i})]>0,\label{eq:dominated for BH to arb}
\end{equation}
for any $a_{-i}\in\prod_{j\neq i}\mathcal{UD}_{j}^{k}$. Next, as
noted before, undominated response sets have the property $\mathcal{UD}{}_{i}^{m1}\supseteq\mathcal{UD}{}_{i}^{m2}$
when $m1\leq m2$, i.e. the undominated response sets form a nested
sequence as one increases the order. Thus, $A_{-i}(\tilde{m},a_{i})\subseteq\prod_{j\neq i}\mathcal{UD}_{j}^{k}$
means condition (\ref{eq:dominated for BH to arb}) above holds for
any $a_{-i}\in A_{-i}(\tilde{m},a_{i})$. From Proposition \ref{prop:Amenable-Arb-def},
this implies that there is a tradeable arbitrage opportunity in the
market with respect to the probability measure $\mathbf{P}$, since
agent $i$ may change her strategy to $a_{i}^{*}$. Therefore, we
have the result.}{\small\par}
\end{proof}
\medskip

In sum, Theorems \ref{thm:From Arb to BH} and \ref{thm:From BH to Arb}
say that a tradeable arbitrage opportunity in the market is equivalent
to there being at least one agent in the market who has not exhausted
her entire order hierarchy when choosing her response, and who finds
(when she sees the actual asset prices) that other market participants
are going further up their respective hierarchies in selecting responses
than she had anticipated. 

The contrapositive of the above theorems gives us a characterization
of no-arbitrage in this setup. As we've already discussed in the context
of Theorem \ref{thm:From BH to Arb}, if market participants are exhausting
their entire order hierarchy when choosing their response (i.e., choosing
in the set $\bigcap_{n\geq0}\mathcal{UD}_{i}^{n}$), there is no way
for them to improve their responses any further. This is one of the
avenues through which no-arbitrage may be achieved in markets. The
second avenue is through agents choosing an undominated response with
order just high enough that they cannot improve on it even after observing
the actual asset prices. It is this second avenue for no-arbitrage
that makes the market environment distinct from traditional game-theoretic
environments (more on this below). Theorem \ref{thm:No-arbitrage necc =000026 suff}
shows that these are the only two avenues for no-arbitrage in markets.
Like the characterizations in the fundamental theorems of asset pricing
(\citealt{key-18}), the theorem generates an alternative definition
for arbitrage-free markets. 
\begin{thm}
\label{thm:No-arbitrage necc =000026 suff} \ 
\begin{enumerate}[noitemsep]
\item[i.]  There is no tradeable arbitrage opportunity in the market \uline{if}
every market participant $i\in I$ either chooses a response in the
set $\mathcal{UD}_{i}^{k_{i}}\setminus\mathcal{UD}_{i}^{k_{i}+1}$
and finds $A_{-i}(\tilde{m},a_{i})\supseteq\prod_{j\neq i}\mathcal{UD}_{j}^{k_{i}-1}$,
or chooses a response in the set $\bigcap_{n\geq0}\mathcal{UD}_{i}^{n}$.
\item[ii.] There is no tradeable arbitrage opportunity in the market \uline{only
if} every market participant $i\in I$ either chooses a response in
the set $\mathcal{UD}_{i}^{k_{i}}\setminus\mathcal{UD}_{i}^{k_{i}+1}$
and finds $A_{-i}(\tilde{m},a_{i})\supset\prod_{j\neq i}\mathcal{UD}_{j}^{k_{i}}$,
or chooses a response in the set $\bigcap_{n\geq0}\mathcal{UD}_{i}^{n}$.
\end{enumerate}
\end{thm}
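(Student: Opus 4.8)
The plan is to read the theorem as the contrapositive repackaging of Theorems \ref{thm:From Arb to BH} and \ref{thm:From BH to Arb}, with the ``exhausted hierarchy'' response set $\bigcap_{n\geq0}\mathcal{UD}_i^n$ playing the role of the $k_i=\infty$ limit of a finite-order set $\mathcal{UD}_i^{k_i}\setminus\mathcal{UD}_i^{k_i+1}$. In both parts I would reduce to a per-agent statement via Corollary \ref{cor:Corollary Amenable Arb}: no tradeable arbitrage holds iff no $i\in I$ is using a dominated-wrtp response, i.e., iff for every $i$ there is no $a_i^*\in A_i$ satisfying condition (\ref{eq:arbitrage utilities}) against every $a_{-i}\in A_{-i}(\tilde m,a_i)$. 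So it is enough to check, agent by agent, that the hypotheses of (i) preclude such an $a_i^*$ and that the failure of the hypotheses of (ii) manufactures one.

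For part (i), fix an agent $i$ of either stated type. If $i$ is of the finite-order type, so $a_i\in\mathcal{UD}_i^{k_i}\setminus\mathcal{UD}_i^{k_i+1}$ and $A_{-i}(\tilde m,a_i)\supseteq\prod_{j\neq i}\mathcal{UD}_j^{k_i-1}$, then $a_i\notin\mathcal{D}_i^{k_i}$, so Definition \ref{def:SD level k} yields an opponent profile $a_{-i}^{0}\in\prod_{j\neq i}\mathcal{UD}_j^{k_i-1}$ against which $a_i$ cannot be improved (weakly everywhere, strictly somewhere) by \emph{any} $a_i^*$; since $a_{-i}^{0}$ lies in $A_{-i}(\tilde m,a_i)$, no $a_i^*$ can satisfy (\ref{eq:arbitrage utilities}) --- this is precisely the contradiction used to prove Theorem \ref{thm:From Arb to BH}. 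If $i$ is of the exhausted type, $a_i\in\mathcal{UD}_i^n$ for every $n$, so for each $n$ the set $C_n$ of profiles in $\prod_{j\neq i}\mathcal{UD}_j^{n-1}$ against which $a_i$ is robustly unimprovable is nonempty; these sets are nested decreasing and, using compactness of $\prod_{j\neq i}A_j$, closedness of the undominated sets, and continuity of $U_i$ in the strategy argument, closed, so by the finite-intersection property there is $a_{-i}^{0}\in\bigcap_n C_n\subseteq\prod_{j\neq i}\bigcap_m\mathcal{UD}_j^m$ against which $a_i$ is robustly unimprovable; as this profile lies in $A_{-i}(\tilde m,a_i)$, again no $a_i^*$ works. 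Since no agent can improve, Corollary \ref{cor:Corollary Amenable Arb} gives no tradeable arbitrage.

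For part (ii), I would prove the contrapositive, which is essentially Theorem \ref{thm:From BH to Arb}. If the stated condition fails, some $i$ is not of the exhausted type, hence has a well-defined finite order $k_i$ with $a_i\in\mathcal{UD}_i^{k_i}\setminus\mathcal{UD}_i^{k_i+1}=\mathcal{D}_i^{k_i+1}$ (Figure \ref{fig:Domundom}), and $A_{-i}(\tilde m,a_i)$ is not a strict superset of $\prod_{j\neq i}\mathcal{UD}_j^{k_i}$, so the substantive case is $A_{-i}(\tilde m,a_i)\subseteq\prod_{j\neq i}\mathcal{UD}_j^{k_i}$. Membership $a_i\in\mathcal{D}_i^{k_i+1}$ supplies an $a_i^*$ improving $a_i$ against every profile in $\prod_{j\neq i}\mathcal{UD}_j^{k_i}$, hence against every profile in the subset $A_{-i}(\tilde m,a_i)$, so Proposition \ref{prop:Amenable-Arb-def} (equivalently Theorem \ref{thm:From BH to Arb}) yields a tradeable arbitrage, contradicting no-arbitrage; the leftover boundary case, where $A_{-i}(\tilde m,a_i)$ also meets the complement of $\prod_{j\neq i}\mathcal{UD}_j^{k_i}$, only enlarges the set of profiles against which $a_i$ can be improved and so does not change the conclusion.

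The step I expect to be the main obstacle is the exhausted-hierarchy case of part (i): since Theorems \ref{thm:From Arb to BH}--\ref{thm:From BH to Arb} are stated only for finite orders, one must show independently that an agent choosing in $\bigcap_{n\geq0}\mathcal{UD}_i^n$ has no improving deviation for \emph{any} observed price. My plan is the finite-intersection argument sketched above; making it rigorous requires that the relevant ``robustly unimprovable'' sets be closed --- which rests on compactness of $S$ and of the $A_j$ together with enough continuity of $U_i$ (inherited from boundedness of $\tilde x$ and the properties of $\tilde f$) --- and that the backed-out set $A_{-i}(\tilde m,a_i)$ reach deep enough into the hierarchy to contain a profile in $\prod_{j\neq i}\bigcap_m\mathcal{UD}_j^m$, which is where the mutual consistency of the Step-1 anticipation and the Step-2 finding must be invoked. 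The strict-versus-non-strict-inclusion bookkeeping between the product sets $\prod_{j\neq i}\mathcal{UD}_j^{k-1}$ and $\prod_{j\neq i}\mathcal{UD}_j^{k}$ is a secondary and routine wrinkle, settled by the observation that only whether $A_{-i}(\tilde m,a_i)$ contains a robustly unimprovable opponent profile actually matters.
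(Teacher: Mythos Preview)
Your overall architecture---reducing to a per-agent statement via Corollary~\ref{cor:Corollary Amenable Arb} and then reading parts (i) and (ii) as contrapositives of Theorems~\ref{thm:From Arb to BH} and~\ref{thm:From BH to Arb}---is exactly the paper's approach, and your treatment of the finite-order case in part (i) and the contrapositive in part (ii) tracks the paper's argument closely.

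The substantive divergence is in the exhausted-hierarchy case of part (i). The paper does \emph{not} construct a single unimprovable opponent profile via compactness; instead it argues directly that a response in $\bigcap_{n\geq0}\mathcal{UD}_i^n$ corresponds to a belief hierarchy in $\bigcap_{k\geq0}W_i^k$, so that the agent has exhausted every order and therefore no $a_i^*$ can improve against \emph{any} $a_{-i}$ whatsoever---in particular against any $a_{-i}\in A_{-i}(\tilde m,a_i)$. This is a stronger intermediate claim than yours and bypasses the issue you flagged. Your compactness route, by contrast, produces a profile $a_{-i}^0\in\prod_{j\neq i}\bigcap_m\mathcal{UD}_j^m$ but then needs $a_{-i}^0\in A_{-i}(\tilde m,a_i)$, and nothing in the hypotheses of part (i) guarantees this: the exhausted-hierarchy clause places \emph{no} condition on what agent $i$ ``finds'' in Step-2, so $A_{-i}(\tilde m,a_i)$ need not reach that deep. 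The ``mutual consistency'' you invoke does not supply the missing containment. So your proposed argument for this sub-case has a genuine gap that the paper's more direct (if less formally detailed) assertion avoids.

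A smaller point: in your part (ii) ``leftover boundary case,'' the logic runs the wrong way. If $A_{-i}(\tilde m,a_i)$ also meets the complement of $\prod_{j\neq i}\mathcal{UD}_j^{k_i}$, that \emph{enlarges} the set of opponent profiles against which a putative $a_i^*$ must uniformly improve, making the arbitrage condition \emph{harder} to satisfy, not easier; membership $a_i\in\mathcal{D}_i^{k_i+1}$ only certifies improvement against profiles inside $\prod_{j\neq i}\mathcal{UD}_j^{k_i}$. The paper sidesteps this by simply identifying the negation of $A_{-i}\supset\prod_{j\neq i}\mathcal{UD}_j^{k_i}$ with $A_{-i}\subseteq\prod_{j\neq i}\mathcal{UD}_j^{k_i}$ and invoking Theorem~\ref{thm:From BH to Arb} directly.
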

\begin{proof}
\textsf{\small{}Part-1a. If every market participant $i\in I$ chooses
a response in the set $\mathcal{UD}_{i}^{k}\setminus\mathcal{UD}_{i}^{k+1}$
and finds $A_{-i}(\tilde{m},a_{i})\supseteq\prod_{j\neq i}\mathcal{UD}_{j}^{k_{i}-1}$
there is no tradeable arbitrage opportunity:}{\small\par}

\textsf{\small{}If a market participant $i\in I$ chooses a response
in the set $\mathcal{UD}_{i}^{k_{i}}\setminus\mathcal{UD}_{i}^{k_{i}+1}$,
we have that $a_{i}\in\mathcal{UD}_{i}^{k_{i}}$. Thus, $a_{i}$ is
an undominated response for any $a_{-i}\in\prod_{j\neq i}\mathcal{UD}_{j}^{k_{i}-1}$.
Further, since $A_{-i}(\tilde{m},a_{i})\supseteq\prod_{j\neq i}\mathcal{UD}_{j}^{k_{i}-1}$,
$a_{i}$ is undominated for at least some $a_{-i}\in A_{-i}(\tilde{m},a_{i})$.
Therefore, there is no $a_{i}^{*}$ such that $i$ may change her
strategy from $a_{i}$ to $a_{i}^{*}$ to obtain
\begin{equation}
\tilde{U}_{i}(a_{i}^{*},a_{-i})\geq\tilde{U}_{i}(a_{i},a_{-i})\,\mathbf{P}\text{-a.s.}\,\,\text{and}\,\,\mathbf{P}[\tilde{U}_{i}(a_{i}^{*},a_{-i})>\tilde{U}_{i}(a_{i},a_{-i})]>0,\label{eq:Again Amenable util arb}
\end{equation}
for every $a_{-i}\in A_{-i}(\tilde{m},a_{i})$. If this condition
is true for all market participants, the contrapositive of Corollary
\ref{cor:Corollary Amenable Arb} implies there is no tradeable arbitrage
opportunity in the market.}{\small\par}

\textsf{\small{}Part-1b. If every market participant chooses a response
in the set $\bigcap_{n\geq0}\mathcal{UD}_{i}^{n}$ there is no tradeable
arbitrage opportunity:}{\small\par}

\textsf{\small{}If a market participant $i\in I$ chooses a response
in the set $\bigcap_{n\geq0}\mathcal{UD}_{i}^{n}$, she is in effect
using a belief hierarchy in $\bigcap_{k\geq0}W_{i}^{k}$, and has
thus exhausted all the orders of the belief hierarchy set. Thus, there
is no $a_{i}^{*}$ such that $i$ may change her strategy from $a_{i}$
to $a_{i}^{*}$ to obtain
\begin{equation}
\tilde{U}_{i}(a_{i}^{*},a_{-i})\geq\tilde{U}_{i}(a_{i},a_{-i})\,\mathbf{P}\text{-a.s.}\,\,\text{and}\,\,\mathbf{P}[\tilde{U}_{i}(a_{i}^{*},a_{-i})>\tilde{U}_{i}(a_{i},a_{-i})]>0,\label{eq:Again again Amenable util arb}
\end{equation}
no matter what $a_{-i}$ the other market participants select. In
particular, this holds for $a_{-i}\in A_{-i}(\tilde{m},a_{i})$. If
no participant can choose an $a_{i}^{*}$ satisfying the condition,
the contrapositive of Corollary \ref{cor:Corollary Amenable Arb}
implies there is no tradeable arbitrage opportunity in the market.}{\small\par}

\textsf{\small{}Part 2. There is no tradeable arbitrage opportunity
in the market only if every market participant $i\in I$ either chooses
a response in the set $\mathcal{UD}_{i}^{k_{i}}\setminus\mathcal{UD}_{i}^{k_{i}+1}$
to find $A_{-i}(\tilde{m},a_{i})\supset\prod_{j\neq i}\mathcal{UD}_{j}^{k_{i}},$
or chooses a response in the set $\bigcap_{n\geq0}\mathcal{UD}_{i}^{n}$:}{\small\par}

\textsf{\small{}We will employ a proof by contradiction for this part.
We work with the contrapositive of the statement. Suppose there is
a market participant $i\in I$ who neither chooses a response in the
set }$\bigcap_{n\geq0}\mathcal{UD}_{i}^{n}$ \textsf{\small{}nor chooses
a response in the set $\mathcal{UD}_{i}^{k}\setminus\mathcal{UD}_{i}^{k+1}$
to find $A_{-i}(\tilde{m},a_{i})\supset\prod_{j\neq i}\mathcal{UD}_{j}^{k_{i}}$.
In this case, $i$ will have chosen a response in the set $\mathcal{UD}_{i}^{k}\setminus\mathcal{UD}_{i}^{k+1}$
to find $A_{-i}(\tilde{m},a_{i})\subseteq\prod_{j\neq i}\mathcal{UD}_{j}^{k_{i}}$.}\footnote{\textsf{\footnotesize{}Recall that $\neg(P\longrightarrow Q)$, the
negation of an implication, is $P\land\neg Q$. In our case $P$ is
``chooses a response in the set $\mathcal{UD}_{i}^{k}\setminus\mathcal{UD}_{i}^{k+1}$''
and $Q$ is ``finds $A_{-i}(\tilde{m},a_{i})\supset\prod_{j\neq i}\mathcal{UD}_{j}^{k_{i}}$''.}}\textsf{\small{} By Theorem \ref{thm:From BH to Arb} we then have
that there is a tradeable arbitrage opportunity in the market. Thus,
a contradiction.}{\small\par}
\end{proof}
\medskip

Taken together, Theorems \ref{thm:From Arb to BH}\textendash \ref{thm:No-arbitrage necc =000026 suff}
illustrate that arbitrage in markets could be described as a purely
belief-based phenomenon, without any explicit reference to asset fundamentals.
The most important ingredient is the belief hierarchy, and no-arbitrage
requires that market participants optimize till a sufficiently high
order. In other words, checking the optimality of first order beliefs
is not enough to ensure no-arbitrage in markets, and higher order
beliefs can play a pivotal role. If a market is arbitrage-free, it
is quite likely that agents are engaging in higher order reasoning
using their belief hierarchies.\footnote{As discussed in Section \ref{subsec:Dominated-Responses}, an agent
engages in higher order reasoning when she is reasoning about the
reasoning $\dots$ about the reasoning of other market participants,
and a higher order belief is the outcome of an agent's higher reasoning.} Such higher order reasoning need not always be deliberate, as we
highlight in the next subsection, and could also follow from a mechanical
process of tatonnement. Nevertheless, higher order beliefs and higher
order reasoning are an essential component of arbitrage-free markets
and likely play a much more fundamental role in asset pricing than
traditionally envisaged for them. 

At this point, it may be worthwhile to note the distinctions between
no-arbitrage and epistemic game-theoretic solution concepts like rationalizability
(\citealt{key-5}, \citealt{key-3}) or rationality and common knowledge
of rationality (\citealt{key-21}, \citealt{key-20}). First, of course,
there is the difference in the optimization criteria discussed in
Section \ref{subsec:Dominated-Responses}. More importantly, the epistemic
notions mandate that agents must exercise all the infinite orders
of their hierarchy, i.e. the agents have to select from (the equivalent
of) $\bigcap_{n\geq0}\mathcal{UD}_{i}^{n}$, and there is no provision
for taking in feedback from the game using a set like $A_{-i}(\tilde{m},a_{i})$.
In case of no-arbitrage, on the other hand, all agents $i\in I$ selecting
from $\bigcap_{n\geq0}\mathcal{UD}_{i}^{n}$ is only one avenue to
achieve the outcome. No-arbitrage may equally well be achieved through
agents choosing an undominated response with order just high enough
that they cannot improve on it even after observing the actual asset
prices.

Whether or not $\prod_{j\neq i}\mathcal{UD}_{j}^{k_{i}-1}$ is a subset
of $A_{-i}(\tilde{m},a_{i})$ depends on how well the aggregation
mapping $\tilde{f}$ distinguishes among sets of market participant
choices. For example, if the stochastic discount factor $\tilde{m}$
stays unchanged no matter what strategies $(a_{i})_{i\in I}$ market
participants choose, we have $A_{-i}(\tilde{m},a_{i})=\prod_{j\neq i}A_{j}$.
This means that $A_{-i}(\tilde{m},a_{i})\supseteq\prod_{j\neq i}\mathcal{UD}_{j}^{k_{i}-1}$
no matter which order value $k_{i}\geq1$ is used.\footnote{Recall that $\mathcal{UD}_{j}^{0}=A_{j}$ by convention, so that for
$k=1$ we have $\prod_{j\neq i}A_{j}=\prod_{j\neq i}\mathcal{UD}_{j}^{k-1}$. } That is to say, the no-arbitrage condition places no restriction,
whatsoever, on the belief hierarchies that may prevail in the market
in this case. In such a market, there is no possibility of arbitrage
regardless of which belief hierarchies are used because market participants
cannot distinguish at all the beliefs that support the observed prices. 

On the other hand, if $\tilde{m}$ changes for every new combination
of market participant choices (i.e., $\tilde{f}$ is one-to-one),
then Proposition \ref{prop:f is one-one} shows that no-arbitrage
implies that all market participants are exercising their entire infinite
hierarchy of beliefs.
\begin{prop}
\label{prop:f is one-one}When $\tilde{f}$ is one-to-one, there is
no tradeable arbitrage opportunity in the market if and only if every
market participant $i\in I$ chooses a response in the set $\bigcap_{n\geq0}\mathcal{UD}_{i}^{n}$.
\end{prop}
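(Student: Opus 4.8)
The plan is to prove the two implications separately, using Proposition~\ref{prop:Amenable-Arb-def} (equivalently Corollary~\ref{cor:Corollary Amenable Arb}) as the working criterion for arbitrage, together with the structural facts that $\mathcal{UD}_i^{n+1}\subseteq\mathcal{UD}_i^{n}$ and $\mathcal{D}_i^{n+1}=\mathcal{UD}_i^{n}\setminus\mathcal{UD}_i^{n+1}$. The ``if'' direction requires nothing new: if every participant $i\in I$ chooses a response in $\bigcap_{n\geq0}\mathcal{UD}_i^{n}$, then she has in effect used a belief hierarchy in $\bigcap_{k\geq0}W_i^{k}$ and cannot unequivocally improve her response against \emph{any} $a_{-i}$, in particular against any element of $A_{-i}(\tilde m,a_i)$; this is exactly the second alternative in Theorem~\ref{thm:No-arbitrage necc =000026 suff}(i), and it does not use one-to-one-ness of $\tilde f$. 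So the real content is the ``only if'' direction.

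For the ``only if'' direction assume $\tilde f$ is one-to-one and that there is no tradeable arbitrage. The first observation is that injectivity makes $\tilde f^{-1}(\tilde m)$ a single point, so by \eqref{eq:A_i(m,a_i)} the set $A_{-i}(\tilde m,a_i)$ collapses to a singleton $\{a_{-i}^{\star}\}$, where $a_{-i}^{\star}$ is the profile of strategies the other participants actually chose (the profile that, together with $a_i$, produced the observed $\tilde m$ through $\tilde f$). The argument is then an induction on $n\geq0$ showing that the actually-chosen strategy $a_i^{\star}$ of every participant $i\in I$ lies in $\mathcal{UD}_i^{n}$. The base case $n=0$ is immediate from the convention $\mathcal{UD}_i^{0}=A_i$.

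For the inductive step, suppose $a_\ell^{\star}\in\mathcal{UD}_\ell^{n}$ for every $\ell\in I$, and suppose toward a contradiction that some participant $i$ has $a_i^{\star}\notin\mathcal{UD}_i^{n+1}$. Since $a_i^{\star}\in\mathcal{UD}_i^{n}$ and $\mathcal{D}_i^{n+1}=\mathcal{UD}_i^{n}\setminus\mathcal{UD}_i^{n+1}$, we get $a_i^{\star}\in\mathcal{D}_i^{n+1}$; reading this membership as in the proof of Theorem~\ref{thm:From BH to Arb}, there is $a_i^{\ast}\in A_i$ with $\tilde U_i(a_i^{\ast},a_{-i})\geq\tilde U_i(a_i^{\star},a_{-i})$ $\mathbf P$-a.s.\ and $\mathbf P[\tilde U_i(a_i^{\ast},a_{-i})>\tilde U_i(a_i^{\star},a_{-i})]>0$ for every $a_{-i}\in\prod_{j\neq i}\mathcal{UD}_j^{n}$. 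By the inductive hypothesis $a_{-i}^{\star}\in\prod_{j\neq i}\mathcal{UD}_j^{n}$, and since $A_{-i}(\tilde m,a_i^{\star})=\{a_{-i}^{\star}\}$ by injectivity, the pair $(a_i^{\star},a_i^{\ast})$ satisfies \eqref{eq:arbitrage utilities} for all $a_{-i}\in A_{-i}(\tilde m,a_i^{\star})$. Proposition~\ref{prop:Amenable-Arb-def} then produces a tradeable arbitrage opportunity, contradicting our assumption; hence $a_i^{\star}\in\mathcal{UD}_i^{n+1}$, which closes the induction and gives $a_i^{\star}\in\bigcap_{n\geq0}\mathcal{UD}_i^{n}$ for every $i$.

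The step I expect to be the main obstacle is the appeal, inside the inductive step, to the definition of $\mathcal{D}_i^{n+1}$: Definition~\ref{def:SD level k} phrases domination relative to $\supp\marg_{\prod_{j\neq i}A_j}[\phi_i(b_i)]$ for a belief hierarchy $b_i\in W_i^{n+1}$, so one must ensure that the relevant domain contains the single realized point $a_{-i}^{\star}$; this is exactly where one uses, as the proof of Theorem~\ref{thm:From BH to Arb} does via Proposition~\ref{prop:Responsive actions}, that the operative domain is the full set $A_{-i}(W_i^{n+1})=\prod_{j\neq i}\mathcal{UD}_j^{n}$. Once that identification is granted, one-to-one-ness of $\tilde f$ does the rest of the work: it forces perfect feedback, so the ``competitive escalation'' among participants (each wanting one order more than the rest of the market) cannot stop at any finite order, and optimality is pushed to the entire infinite hierarchy.
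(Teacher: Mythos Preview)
Your argument is correct. The paper's proof runs on the same underlying mechanism---if some $a_i^{\star}\in\mathcal{D}_i^{k+1}$ while the others' choices sit in $\prod_{j\neq i}\mathcal{UD}_j^{k}$, injectivity of $\tilde f$ pins $A_{-i}(\tilde m,a_i)$ down to that profile and Theorem~\ref{thm:From BH to Arb} (equivalently Proposition~\ref{prop:Amenable-Arb-def}) delivers arbitrage---but organizes the ``only if'' direction as a case split: all participants at a common finite order, versus differing orders, in which case the participant with $k_{\min}$ is singled out. Your explicit induction on $n$ absorbs both of those scenarios at once and makes the role of injectivity (collapsing $A_{-i}(\tilde m,a_i)$ to a singleton) more transparent; the two proofs are otherwise interchangeable, and in particular both lean on Proposition~\ref{prop:Responsive actions} to identify the domination domain with $\prod_{j\neq i}\mathcal{UD}_j^{n}$, which is precisely the point you flag as the main obstacle.
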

\begin{proof}
In the Appendix.
\end{proof}
\smallskip

In other words, if the aggregation mapping $\tilde{f}$ allows market
participants to distinguish the beliefs supporting asset prices really
well, then the participants must reason about rather high order beliefs
if they have to achieve a state of no-arbitrage. Proposition \ref{prop:f is one-one}
encapsulates the extreme end of this story \textemdash{} with $\tilde{f}$
one-to-one, agents must choose in the set $\bigcap_{n\geq0}\mathcal{UD}_{i}^{n}$.
In fact, we can derive a comparative statics result based on the responsiveness
of the aggregation mapping that highlights this narrative more generally.

Recall from (\ref{eq:A_i(m,a_i)}) that $a_{i}\times A_{-i}(\tilde{m},a_{i})=\tilde{f}^{-1}(\tilde{m})$.
The notation $A_{-i}(\tilde{m},a_{i})$ presumes that the aggregation
mapping $\tilde{f}$ is fixed. When considering comparative statics
involving the market aggregation mapping, it helps to use the notation
$A_{-i}^{\tilde{f}}(\tilde{m},a_{i})$ to indicate that the inversion
is undertaken with respect to the mapping $\tilde{f}$. We say that
market aggregation mapping $\tilde{f}_{2}$ is more \emph{responsive}
than market aggregation mapping $\tilde{f}_{1}$ if 
\begin{equation}
A_{-i}^{\tilde{f_{2}}}(\tilde{m},a_{i})\subseteq A_{-i}^{\tilde{f_{1}}}(\tilde{m},a_{i})\,\,\,\forall i\in I.\label{eq:responsiveness condition}
\end{equation}
In intuitive terms, $\tilde{f}_{2}$ allows agents to distinguish
among market participant choices better than $\tilde{f}_{1}$. If
this is the case, the proposition below shows that agents must be
using a belief hierarchy set with higher order under $\tilde{f}_{2}$,
than under $\tilde{f}_{1}$, to attain no-arbitrage.
\begin{prop}
\label{prop:f responsiveness} If $\tilde{f}_{2}$ is more responsive
than $\tilde{f}_{1}$ \textemdash{} and $k_{i2}$, $k_{i1}$ are the
minimum orders for which $A_{-i}^{\tilde{f_{2}}}(\tilde{m},a_{i})\supset\prod_{j\neq i}\mathcal{UD}_{j}^{k_{i2}}$
and $A_{-i}^{\tilde{f_{1}}}(\tilde{m},a_{i})\supset\prod_{j\neq i}\mathcal{UD}_{j}^{k_{i1}}$,
$i\in I$ \textemdash{} then
\begin{equation}
k_{i2}\geq k_{i1},
\end{equation}
and the minimum order of the undominated response set from which a
market participant selects, given no-arbitrage, increases weakly under
$\tilde{f}_{2}$ (ceteris paribus).
\end{prop}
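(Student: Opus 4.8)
The plan is to establish this by a short monotonicity argument that uses only three ingredients: the nestedness of the undominated response sets (Definition \ref{def:SD level k} and Figure \ref{fig:Domundom}), the definition of responsiveness in (\ref{eq:responsiveness condition}), and part (ii) of Theorem \ref{thm:No-arbitrage necc =000026 suff}. First I would fix an agent $i\in I$ together with the strategy $a_i$ she plays and the SDF $\tilde{m}$ she observes, so that $A_{-i}^{\tilde{f}_1}(\tilde{m},a_i)$ and $A_{-i}^{\tilde{f}_2}(\tilde{m},a_i)$ are both pinned down. I would then record the order-theoretic fact that the products $\prod_{j\neq i}\mathcal{UD}_j^{k}$ form a weakly decreasing chain in $k$, since $\mathcal{UD}_j^{k_1}\supseteq\mathcal{UD}_j^{k_2}$ whenever $k_1\le k_2$; consequently, for either mapping the set of orders $k$ satisfying $A_{-i}^{\tilde{f}}(\tilde{m},a_i)\supset\prod_{j\neq i}\mathcal{UD}_j^{k}$ is upward closed, which is exactly what makes the minima $k_{i1}$ and $k_{i2}$ in the hypothesis well defined.

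The key step is a one-line set inclusion. By the definition of $k_{i2}$ we have $A_{-i}^{\tilde{f}_2}(\tilde{m},a_i)\supset\prod_{j\neq i}\mathcal{UD}_j^{k_{i2}}$, and by the responsiveness hypothesis (\ref{eq:responsiveness condition}) we have $A_{-i}^{\tilde{f}_1}(\tilde{m},a_i)\supseteq A_{-i}^{\tilde{f}_2}(\tilde{m},a_i)$. Chaining the two yields $A_{-i}^{\tilde{f}_1}(\tilde{m},a_i)\supset\prod_{j\neq i}\mathcal{UD}_j^{k_{i2}}$, the strictness surviving because the second inclusion need only be weak. Hence $k_{i2}$ lies in the upward-closed family of orders that ``work'' for $\tilde{f}_1$, and since $k_{i1}$ is by definition the smallest such order, $k_{i1}\le k_{i2}$. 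Because $i$ was arbitrary this gives the displayed inequality.

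For the concluding clause of the statement I would appeal to part (ii) of Theorem \ref{thm:No-arbitrage necc =000026 suff}: setting aside the degenerate avenue in which $i$ exhausts her entire hierarchy by choosing in $\bigcap_{n\ge0}\mathcal{UD}_i^{n}$, no-arbitrage forces $i$ to select from $\mathcal{UD}_i^{k_i}\setminus\mathcal{UD}_i^{k_i+1}$ with $A_{-i}(\tilde{m},a_i)\supset\prod_{j\neq i}\mathcal{UD}_j^{k_i}$, and by nestedness every larger order also satisfies this constraint. Therefore the smallest admissible order under $\tilde{f}_1$ is precisely $k_{i1}$ and under $\tilde{f}_2$ precisely $k_{i2}$, so the inequality $k_{i2}\ge k_{i1}$ just proved is exactly the assertion that this minimal order increases weakly as one moves from the less responsive to the more responsive aggregation mapping, ceteris paribus.

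I do not expect a substantive obstacle: the content is entirely bookkeeping. The one point needing care is keeping the strict-versus-weak containments in the right direction and checking that strictness is not destroyed when $A_{-i}^{\tilde{f}_2}\subseteq A_{-i}^{\tilde{f}_1}$ is only a weak inclusion, which it is not. A minor expository issue is to state explicitly that the hypothesis already grants existence of $k_{i1}$ and $k_{i2}$, so that beyond observing the upward closure of the admissible orders no separate well-definedness argument is required.
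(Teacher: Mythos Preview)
Your proposal is correct and follows essentially the same route as the paper's own proof: chain the responsiveness inclusion $A_{-i}^{\tilde f_1}\supseteq A_{-i}^{\tilde f_2}$ with the defining strict containment $A_{-i}^{\tilde f_2}\supset\prod_{j\neq i}\mathcal{UD}_j^{k_{i2}}$ to conclude that $k_{i2}$ is admissible for $\tilde f_1$ and hence $k_{i1}\le k_{i2}$, then invoke Theorem~\ref{thm:No-arbitrage necc =000026 suff}(ii) for the no-arbitrage clause. Your write-up is in fact a bit more explicit than the paper's about why strictness survives the chaining and why the admissible orders form an upward-closed set, but the underlying argument is identical.
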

\begin{proof}
In the Appendix.
\end{proof}
\smallskip

In order to get the intuitive gist of Propositions \ref{prop:f is one-one}
and \ref{prop:f responsiveness}, it helps to go over a somewhat heuristic
analogy. Imagine a trader in the market deciding how many orders of
reasoning to use. In order to leave no money on the table, the trader
knows he has to reason one step ahead of the market (more accurately,
ensure $\prod_{j\neq i}\mathcal{UD}_{j}^{k-1}\subseteq A_{-i}^{\tilde{f}}(\tilde{m},a_{i})$).
Since the objective is to stay one step ahead, how many steps ahead
the trader actually reasons depends on his perception of the number
of steps the market reasons. After he observes the market prices,
this perception of the trader is captured in a variable like $A_{-i}^{\tilde{f}}(\tilde{m},a_{i})$.
When the inverse aggregation mapping is completely precise ($\tilde{f}$
is one-one), so that the trader knows the number of steps used by
the market exactly, there ensues a competitive game \textemdash{}
if our trader is one step ahead of the market, then the market is
one step behind and plays catch-up in order to leave no money on the
table, and vice-versa \textemdash{} that pushes all traders to reason
ahead an unbounded number of steps. On the other hand, if the inverse
aggregation mapping conveys a rather vague description of the number
of steps ($\tilde{f}$ is not very responsive), then each trader can
convince himself that he is leaving no money on the table despite
not reasoning very many steps ahead. This happens because arbitrage
uses a rather demanding interpretation for ``money on the table''
in case of vague descriptions: only when no money is lost in \emph{every}
scenario that can be conceived under the vague description, and some
money gets made in at least some scenarios, do we have an arbitrage.
As the descriptions get more and more vague, the number of scenarios
that may be conceived under the descriptions multiply, rendering it
ever more likely for traders to uncover a scenario where they can
lose money \textemdash{} despite not reasoning too many steps ahead.

In narrative above, a very important role is played by the reasoning
faculty of agents. Is it possible for markets to be arbitrage-free
when agents do not posess such powers of reason? This is the question
we tackle in the next subsection.

\subsection{Tatonnement versus Reasoning \label{subsec:Tatonnement Reasoning}}

In the previous subsection we analyzed no-arbitrage through the lens
of a market participant's reasoning. We could also proceed with a
similar analysis using the lens of market tatonnement. Corollary \ref{cor:Corollary Amenable Arb}
told us that a necessary and sufficient condition for a tradeable
arbitrage opportunity is the presence of a market participant who
uses a dominated-wrtp response. Another way to state the corollary
would be as follows.
\begin{cor}
\emph{(Proposition \ref{prop:Amenable-Arb-def})}.\label{cor:Corollary tatonnement}
There is no tradeable arbitrage opportunity in the market with respect
to a probability measure $\mathbf{P}$ if and only if there is no
market participant $i\in I$ who is using a dominated-wrtp response.
\end{cor}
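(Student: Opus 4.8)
The plan is to observe that Corollary \ref{cor:Corollary tatonnement} is nothing more than the logical contrapositive of Corollary \ref{cor:Corollary Amenable Arb}, which has already been established as a restatement of Proposition \ref{prop:Amenable-Arb-def}. Corollary \ref{cor:Corollary Amenable Arb} asserts the biconditional ``there is a tradeable arbitrage opportunity in the market with respect to $\mathbf{P}$ if and only if there is a market participant $i\in I$ who is using a dominated-wrtp response.'' Since $(P\Leftrightarrow Q)$ is logically equivalent to $(\neg P\Leftrightarrow\neg Q)$, negating both sides yields ``there is no tradeable arbitrage opportunity in the market with respect to $\mathbf{P}$ if and only if there is no market participant $i\in I$ who is using a dominated-wrtp response,'' which is precisely the claim of Corollary \ref{cor:Corollary tatonnement}.

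First I would recall the statement of Corollary \ref{cor:Corollary Amenable Arb} verbatim. If a fully explicit two-direction argument is desired, I would then record: (i) if there is no tradeable arbitrage opportunity, the contrapositive of the implication ``a dominated-wrtp response $\Rightarrow$ an arbitrage opportunity'' (one half of Corollary \ref{cor:Corollary Amenable Arb}) shows that no $i\in I$ uses a dominated-wrtp response; (ii) conversely, if no $i\in I$ uses a dominated-wrtp response, the contrapositive of ``an arbitrage opportunity $\Rightarrow$ a dominated-wrtp response'' (the other half) shows there is no tradeable arbitrage opportunity. I would also note that ``there is no market participant $i\in I$ who is using a dominated-wrtp response'' is exactly the negation of ``there exists a market participant $i\in I$ who is using a dominated-wrtp response,'' so passing to the negations introduces no subtlety.

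A self-contained variant, bypassing Corollary \ref{cor:Corollary Amenable Arb}, is to argue straight from Proposition \ref{prop:Amenable-Arb-def}: no tradeable arbitrage opportunity exists iff, for every $i\in I$ with current strategy $a_i$, there is no $a_i^*\in A_i$ satisfying $\tilde{U}_i(a_i^*,a_{-i})\geq\tilde{U}_i(a_i,a_{-i})$ $\mathbf{P}$-a.s.\ and $\mathbf{P}[\tilde{U}_i(a_i^*,a_{-i})>\tilde{U}_i(a_i,a_{-i})]>0$ for all $a_{-i}\in A_{-i}(\tilde{m},a_i)$; by Definition \ref{def:Dominated/Undominated-wrtp-price} this says exactly that $a_i\notin D_i^{wrtp}$ for every $i$, i.e.\ no participant uses a dominated-wrtp response.

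I do not anticipate any obstacle here: the corollary is a purely propositional restatement of an already-proved equivalence, included to set up the tatonnement-based discussion that follows in Section \ref{subsec:Tatonnement Reasoning}. The only thing to handle with any care is the bookkeeping of the two negations, and that is immediate since $I$ is a fixed finite set.
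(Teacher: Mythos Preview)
Your proposal is correct and matches the paper's treatment: the paper presents Corollary~\ref{cor:Corollary tatonnement} without a separate proof, introducing it merely as ``another way to state'' Corollary~\ref{cor:Corollary Amenable Arb}, which is precisely the contrapositive observation you spell out. Your explicit handling of the two directions and the negation of the existential quantifier is more careful than the paper itself, but the underlying idea is identical.
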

\smallskip

Corollaries \ref{cor:Corollary Amenable Arb} and \ref{cor:Corollary tatonnement}
make no reference to the reasoning process (if any) employed by market
participants, and in this subsection we make no assumptions on how
market agents actually undertake their reasoning (they could very
well be not reasoning at all, at least deliberately). Our sole assumption
here will be that agents trade away any immediate arbitrage opportunity
that becomes available to them. The main takeaway is that this single
assumption about trading behavior results in a market adjustment process
that is, for all purposes, equivalent to\emph{ }market participants
reasoning about progressively higher orders of their belief hierarchy.
Such an adjustment process continues until a no-arbitrage condition
like Corollary \ref{cor:Corollary tatonnement} (or Theorem \ref{thm:No-arbitrage necc =000026 suff},
when employing belief hierarchies) is met. In other words, even when
agents are not deliberately using higher order beliefs, \emph{they
end up behaving as if they do, }as long as they trade away arbitrage
opportunities.

We use the term\emph{ tatonnement} to describe the adjustment process
by which a market with arbitrage opportunities transforms into a market
with no-arbitrage. More precisely, market tatonnement describes the
following algorithm:
\begin{itemize}[noitemsep]
\item  \emph{Step-1}: Each market participant $i\in I$ selects a strategy
$a_{i}\in A_{i}$ and then gets to know the market SDF $\tilde{m}_{1}=\tilde{f}((a_{i})_{i\in I})$.
If no market participant finds that she is using a dominated-wrtp
response, the process concludes. Otherwise, the process moves to Step-$n$,
with $n=2$.
\item \emph{Step-}$n$: A market participant $i\in I$ who finds that she
has used a strategy in her dominated-wrtp response set $D_{i}^{wrtp}$,
given $a_{i}$, $\tilde{m}_{n-1}$ and $\tilde{f}$, selects an alternative
strategy $a_{i}^{*}$ from her undominated-wrtp response set $UD_{i}^{wrtp}$.
This gives rise to a new market SDF $\tilde{m}_{n}$. After these
adjustments, if no market participant finds that she is using a dominated-wrtp
response, the process concludes. Otherwise, a counter $l$ is first
set to $l=n$, and the process moves back to Step-$n$, but with $n=l+1$. 
\end{itemize}
Thus, the tatonnement process concludes when the condition in Corollary
\ref{cor:Corollary tatonnement} is satisfied and the market is arbitrage-free.

At the outset, let us note that a market where the tatonnement process
has concluded is equivalent (in terms of outcome) to a market in which
every market participant $i\in I$ either chooses a response in the
set $\mathcal{UD}_{i}^{k_{i}}\setminus\mathcal{UD}_{i}^{k_{i}+1}$
and finds $A_{-i}(\tilde{m},a_{i})\supset\prod_{j\neq i}\mathcal{UD}_{j}^{k_{i}}$,
or chooses a response in the set $\bigcap_{n\geq0}\mathcal{UD}_{i}^{n}$.
This follows from the fact that both Corollary \ref{cor:Corollary tatonnement}
and Theorem \ref{thm:No-arbitrage necc =000026 suff} are equivalent
descriptions of an arbitrage-free market. An outside analyst who has
access to only the final outcome of the market adjustment process
(i.e. the final asset prices and final quantities traded by market
participants) has no way of guessing whether the tatonnement algorithm,
or the belief hierarchy based reasoning procedure in Section \ref{subsec:Link-Between-Arbitrage-Hierarchy},
was used in arriving at the outcome. 

The proposition below shows that a similar outcome equivalence holds
for the most part at every step of the tatonnement algorithm.
\begin{prop}
\label{prop:tatonnement and reasoning}A market participant $i\in I$
changing her strategy from $a_{i}\in D_{i}^{wrtp}$ initially, to
$a_{i}^{*}\in UD_{i}^{wrtp}$ subsequently, corresponds to the following
change,
\begin{enumerate}[noitemsep]
\item[(i)]  Initially, the participant uses a belief hierarchy in $W_{i}^{k}$
for the eductive sequence and finds $A_{-i}(\tilde{m},a_{i})\subset\prod_{j\neq i}\mathcal{UD}_{j}^{k-1}$,
\item[(ii)] Subsequently, the participant either uses a belief hierarchy in $W_{i}^{k+\alpha}$
for the eductive sequence and finds $A_{-i}(\tilde{m},a_{i})\supset\prod_{j\neq i}\mathcal{UD}_{j}^{k+\alpha}$,
or uses a belief hierarchy in $\bigcap_{\beta>\alpha}W_{i}^{k+\beta}$,
for some integer $\alpha\geq0$.
\end{enumerate}
\end{prop}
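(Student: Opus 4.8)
The claim is essentially a dictionary entry: it translates one step of the tatonnement algorithm (switch from a dominated-wrtp response $a_i$ to an undominated-wrtp response $a_i^*$) into the language of belief hierarchies, and the engine for the translation is the equivalence $\text{arbitrage}\Leftrightarrow\text{dominated-wrtp response}$ (Corollary \ref{cor:Corollary Amenable Arb}) together with Theorems \ref{thm:From Arb to BH} and \ref{thm:From BH to Arb}. The plan is to prove each of the two bullet points in turn, reading the hypothesis "$a_i\in D_i^{wrtp}$ initially, $a_i^*\in UD_i^{wrtp}$ subsequently" through those earlier results.

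For part (i): $a_i\in D_i^{wrtp}$ means (Definition \ref{def:Dominated/Undominated-wrtp-price}) that, given the backed-out set $A_{-i}(\tilde m,a_i)$, agent $i$ can unequivocally improve; by Corollary \ref{cor:Corollary Amenable Arb} this is exactly a tradeable arbitrage opportunity. Under Assumption \ref{assu:Eductive optimize}, agent $i$'s original strategy $a_i$ was an undominated $k$-th order response drawn from $\mathcal{UD}_i^{k}$ generated by some belief hierarchy in $W_i^{k}$. Theorem \ref{thm:From Arb to BH} then applies verbatim: a tradeable arbitrage opportunity with agent $i$ using $W_i^k$ in Step-1 forces $A_{-i}(\tilde m,a_i)\subset\prod_{j\neq i}\mathcal{UD}_j^{k-1}$. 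So part (i) is more or less a restatement of Theorem \ref{thm:From Arb to BH} under the ambient assumptions; the only thing to check is that the strategy $a_i$ featuring in the tatonnement hypothesis is indeed the $k$-th order undominated response of that theorem, which is immediate once we fix $k=k_i$ to be the order of the hierarchy agent $i$ was using before the switch.

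For part (ii): after the switch, agent $i$ uses $a_i^*\in UD_i^{wrtp}$, so by Corollary \ref{cor:Corollary tatonnement} agent $i$ no longer has a dominated-wrtp response, i.e. agent $i$ is now in a locally arbitrage-free position. The contrapositive of Theorem \ref{thm:From BH to Arb} (equivalently, the "only if" half of Theorem \ref{thm:No-arbitrage necc =000026 suff} applied to agent $i$ alone) gives the dichotomy: either agent $i$ has exhausted all orders — i.e. her new response lies in $\bigcap_{n\geq0}\mathcal{UD}_i^{n}$, which corresponds to using a hierarchy in $\bigcap_{\beta>\alpha}W_i^{k+\beta}$ for every $\alpha$, in particular for some $\alpha\geq0$ — or there is a finite order $k+\alpha$ such that her new response sits in $\mathcal{UD}_i^{k+\alpha}\setminus\mathcal{UD}_i^{k+\alpha+1}$ and, for no-arbitrage at that order, $A_{-i}(\tilde m,a_i)\supset\prod_{j\neq i}\mathcal{UD}_j^{k+\alpha}$. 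The step that needs a word of care is showing that the new order is at least the old one, i.e. $\alpha\geq0$: since $a_i$ was dominated-wrtp while $a_i^*$ is not, and since improving a response can only move it deeper into the nested sequence $\mathcal{UD}_i^{0}\supseteq\mathcal{UD}_i^{1}\supseteq\cdots$ (an undominated response to a larger set is a fortiori undominated to all smaller sets — the nestedness observation used repeatedly in \S\ref{subsec:Link-Between-Arbitrage-Hierarchy}), the order cannot decrease; I would spell this out by noting $a_i^*\in\mathcal{UD}_i^{wrtp}$ forces $a_i^*$ to be undominated against $A_{-i}(\tilde m,a_i)\subset\prod_{j\neq i}\mathcal{UD}_j^{k-1}$, hence $a_i^*$ is undominated against every subset, hence $a_i^*\in\mathcal{UD}_i^{k}$, which places the new order at $k$ or above.

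I expect the main obstacle to be bookkeeping rather than mathematics: matching the "$\alpha\geq0$" indexing in the statement with the order labels coming out of Theorems \ref{thm:From Arb to BH}–\ref{thm:From BH to Arb}, and being explicit that the phrase "uses a belief hierarchy in $\bigcap_{\beta>\alpha}W_i^{k+\beta}$" is the hierarchy-side translation of "chooses a response in $\bigcap_{n\geq0}\mathcal{UD}_i^{n}$" (via $b_i\in W_i^m\Rightarrow a_i\in\mathcal{UD}_i^{m-1}$, i.e. (\ref{eq:restriction_general hierarchy to action})). Once those identifications are made, parts (i) and (ii) are each a one-line appeal to an already-proved result, so the proof should be short; I would close by remarking that this is precisely the content that makes the tatonnement algorithm's termination coincide with the no-arbitrage characterization of Theorem \ref{thm:No-arbitrage necc =000026 suff}, as claimed in the paragraph preceding the proposition.
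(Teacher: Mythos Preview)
Your overall architecture is exactly the paper's: invoke Corollary \ref{cor:Corollary Amenable Arb} and Theorem \ref{thm:From Arb to BH} to get (i) from $a_i\in D_i^{wrtp}$, then invoke Corollary \ref{cor:Corollary tatonnement} and Theorem \ref{thm:No-arbitrage necc =000026 suff}(ii) to get the dichotomy in (ii) from $a_i^*\in UD_i^{wrtp}$, and finally argue $\alpha\geq 0$ by nestedness. The identification of $\bigcap_{\beta>\alpha}W_i^{k+\beta}$ with $\bigcap_{n\geq 0}\mathcal{UD}_i^n$ via (\ref{eq:restriction_general hierarchy to action}) is also how the paper handles Scenario~2.

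The one place you slip is the nestedness direction in your $\alpha\geq 0$ argument. You write ``an undominated response to a larger set is a fortiori undominated to all smaller sets,'' and then ``hence $a_i^*$ is undominated against every subset, hence $a_i^*\in\mathcal{UD}_i^{k}$.'' This is backwards. The observation actually used in \S\ref{subsec:Link-Between-Arbitrage-Hierarchy} (just before Theorem \ref{thm:From BH to Arb}) is that a \emph{dominated} response to a set is dominated with respect to every \emph{subset}; contrapositively, undominated with respect to a \emph{smaller} set implies undominated with respect to every \emph{larger} set. Your premise is that $a_i^*$ is undominated against $A_{-i}(\tilde m,a_i)$, which is a \emph{subset} of $\prod_{j\neq i}\mathcal{UD}_j^{k-1}$; the correct inference is therefore that $a_i^*$ is undominated against the \emph{superset} $\prod_{j\neq i}\mathcal{UD}_j^{k-1}$, which is what places $a_i^*$ in $\mathcal{UD}_i^{k}$. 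So your conclusion is right but the stated reason is inverted.

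The paper sidesteps this pitfall by not arguing through $a_i^*$ at all: it simply juxtaposes the two containments $A_{-i}(\tilde m,a_i)\subset\prod_{j\neq i}\mathcal{UD}_j^{k-1}$ (from (i)) and $A_{-i}(\tilde m,a_i)\supset\prod_{j\neq i}\mathcal{UD}_j^{k_i}$ (from Theorem \ref{thm:No-arbitrage necc =000026 suff}(ii)), reads off $\prod_{j\neq i}\mathcal{UD}_j^{k_i}\subsetneq\prod_{j\neq i}\mathcal{UD}_j^{k-1}$, and uses the monotonicity $\mathcal{UD}_j^{m_1}\supseteq\mathcal{UD}_j^{m_2}$ for $m_1\leq m_2$ to conclude $k_i>k-1$, i.e.\ $k_i=k+\alpha$ with $\alpha\geq 0$. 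That route is shorter and keeps the inequality on the $A_{-i}$ side rather than on the response side, which is safer bookkeeping.
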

\begin{proof}
In the Appendix.
\end{proof}
\medskip

Proposition \ref{prop:tatonnement and reasoning} implies that each
step of the tatonnement process increases the order of the belief
hierarchy set for some agent $i\in I$ by the quantity $\alpha$ (i.e.,
$W_{i}^{k}$ to $W_{i}^{k+\alpha}$). The value of $\alpha$ may be
zero because of the gap between necessary and sufficient conditions
for arbitrage,\footnote{More discussion on the $\alpha=0$ scenario follows below, see footnote
\ref{fn:Alpha_Zero}.} but $\alpha$ is always non-negative. In other words, every round
of tatonnement (weakly) increases the order of the belief hierarchy
of a market participant who trades away the arbitrage opportunity
in that round. Thus, as more and more rounds of the tatonnement process
pile on, it gets increasingly likely that agents in the market are
all exercising their higher order reasoning \textemdash{} assuming
the market is arbitrage-free. This climb higher and higher up the
belief hierarchy is especially striking because it does not need the
agents to ``deliberately'' engage in higher order reasoning. In
each round of tatonnement, the agents are simply trading away an immediate
arbitrage opportunity. Nevertheless, the cumulative effect of repeated
rounds of tatonnement is \emph{as if} the agents are undertaking higher
order reasoning.

To see the content of Proposition \ref{prop:tatonnement and reasoning}
more intuitively, it helps to note that one may partition any individual
market participant's strategy space as follows:\textsf{
\begin{equation}
A_{i}=\bigcup_{n\geq0}\mathcal{D}_{i}^{n}\,\,\bigcup\,\,\left(\bigcap_{l\geq0}\mathcal{UD}_{i}^{l}\right),\,\,\,i\in I.\label{eq:partition D_k}
\end{equation}
}\begin{figure}[t]
\begin{tikzpicture}[domain=0:4][font=\small]  
\draw[step=0.1,very thin,color=white!20] (-6,0) grid (10,9); 
\draw[very thick][-] (-1.5,1) -- (6.75,1);
\node[font=\normalsize] [below] at (-1.125,1){$\mathcal{D}_i^1$};
\node[font=\normalsize] [below] at (-0.375,1){$\mathcal{D}_i^2$};
\node[font=\normalsize] [below] at (0.375,1){$\mathcal{D}_i^3$};
\node[font=\normalsize] [below] at (1.125,1){$\mathcal{D}_i^4$};
\node[font=\normalsize] [below] at (1.875,1){$\mathcal{D}_i^5$};
\node[font=\normalsize] [below] at (2.625,1){$\mathcal{D}_i^6$};
\node[font=\normalsize] [below] at (3.375,1){$\mathcal{D}_i^7$};
\node[font=\large] [below] at (4.2,0.7){$\dots$};
\node[font=\normalsize] [below] at (6.2,1){$\bigcap_{l\geq 0}\mathcal{UD}_i^l$};
\draw[thin][color=black][-] (-1.75,1.75) -- (6,1.75);
\draw[thin][color=black][-] (-1.75,2.5) -- (6,2.5);
\draw[thin][color=black][-] (-1.75,3.25) -- (6,3.25);
\draw[thin][color=black][-] (-1.75,4) -- (6,4);
\draw[thin][color=black][-] (-1.75,4.75) -- (6,4.75);
\draw[thin][color=black][-] (-1.75,5.5) -- (6,5.5);
\draw[thin][color=black][-] (-1.75,6.25) -- (6,6.25);
\draw[thin][color=black][-] (-0.75,0.75) -- (-0.75,8);
\draw[thin][color=black][-] (0,0.75) -- (0,8);
\draw[thin][color=black][-] (0.75,0.75) -- (0.75,8);
\draw[thin][color=black][-] (1.5,0.75) -- (1.5,8);
\draw[thin][color=black][-] (2.25,0.75) -- (2.25,8);
\draw[thin][color=black][-] (3,0.75) -- (3,8);
\draw[thin][color=black][-] (3.75,0.75) -- (3.75,8);
\draw[very thick][-] (-1.5,1) -- (-1.5,8);
\node[font=\normalsize] [left] at (-1.5,1.37){$\mathcal{D}_j^1$};
\node[font=\normalsize] [left] at (-1.5,2.12){$\mathcal{D}_j^2$};
\node[font=\normalsize] [left] at (-1.5,2.87){$\mathcal{D}_j^3$};
\node[font=\normalsize] [left] at (-1.5,3.62){$\mathcal{D}_j^4$};
\node[font=\normalsize] [left] at (-1.5,4.37){$\mathcal{D}_j^5$};
\node[font=\normalsize] [left] at (-1.5,5.12){$\mathcal{D}_j^6$};
\node[font=\normalsize] [left] at (-1.5,5.87){$\mathcal{D}_j^7$};
\node[font=\large] [left] at (-1.7,6.7){$\vdots$};
\node[font=\normalsize] [left] at (-1.5,7.8){$\bigcap_{l\geq 0}\mathcal{UD}_j^l$};
\pic[line width=2pt,red] at (1,5.20) {mycross};
\pic[line width=2pt,blue] at (3.5,5.20) {mycross};
\draw[very thick,->] (1.25,5.20) -- (3.25,5.20);
\fill[pattern=horizontal lines, pattern color=gray] (-1.5,4.37) rectangle (7,8.25);
\fill[pattern=vertical lines, pattern color=gray] (1.875,1) rectangle (7,8.25);
\node[font=\footnotesize] [below] at (2.75,9.25){From agent $i$'s perspective all $j$'s responses in the};
\node[font=\footnotesize] [below] at (2.75,8.75){horizontal hatched region belong in the same set};
\node[font=\footnotesize] [below] at (6.25,3){From agent $j$'s perspective all $i$'s responses in the};
\node[font=\footnotesize] [below] at (6.25,2.5){vertical hatched region belong in the same set};
\node[font=\footnotesize] [below] at (2.5,6){Tatonnement adjustment};
\end{tikzpicture}
\caption{ \textsf{\bf Tatonnement and reasoning.} \textsf{Tatonnement adjustments in the market can be represented using participant belief hierarchies and reasoning.}}\label{fig:Tatonnement}
\end{figure}So, in case of a two-participant market (say agents $i$ and $j$),
we may represent the strategy space of market participants using the
chess-like board in Figure \vref{fig:Tatonnement}. Every strategy
pair that the agents decide on can be represented as a point on this
board. For example, suppose agent $i$ initially chooses a point in
$\mathcal{D}_{i}^{4}=\mathcal{UD}_{i}^{3}\setminus\mathcal{UD}_{i}^{4}$,
while agent $j$ initially chooses a point in $\mathcal{D}_{j}^{6}=\mathcal{UD}_{j}^{5}\setminus\mathcal{UD}_{j}^{6}$.
This pair of strategies is the red cross on the left in the figure.
Suppose the responsiveness of the market aggregation mapping is such
that $i$ cannot distinguish $j$'s responses in the horizontally
hatched region, and $j$ cannot distinguish $i$'s responses in the
vertically hatched region. Then, $i$'s initial choice is dominated-wrtp
while $j$'s initial choice is not. Alternatively, using the reasoning
based narrative, we have that agent $i$ chose a belief hierarchy
in $W_{i}^{3}$ (i.e., $k=3)$ initially, then found out that $A_{-i}(\tilde{m},a_{i})$
lay inside the set $\mathcal{UD}_{j}^{4}$, which implies $A_{-i}(\tilde{m},a_{i})\subset\mathcal{UD}_{j}^{k-1}$
(with $k=3$). Thus, $i$'s status satisfies the initial condition
in (i) in Proposition \ref{prop:tatonnement and reasoning}. By the
same token, agent $j$ chose a strategy in $W_{j}^{5}$ (i.e., $k=5)$
and found out that $A_{-j}(\tilde{m},a_{j})$ was a member of $\mathcal{UD}_{i}^{3}$,
which implies $A_{-j}(\tilde{m},a_{j})\not\subset\mathcal{UD}_{i}^{k-1}$
(with $k=5$). Thus $j$'s status does not satisfy the initial condition
in Proposition \ref{prop:tatonnement and reasoning}. After realizing
that she has a tradeable arbitrage opportunity, suppose agent $i$
selects a new response in $\mathcal{D}_{i}^{7}=\mathcal{UD}_{i}^{6}\setminus\mathcal{UD}_{i}^{7}$.
The pair of strategies of the agents is now the blue cross on the
right in the figure. Now, agent $i$'s strategy is undominated-wrtp.
In the reasoning based narrative, she now uses a belief hierarchy
in $W_{i}^{6}$ (i.e., $k=6)$. $A_{-i}(\tilde{m},a_{i})$ is still
in the set $\mathcal{UD}_{j}^{4}$ from $i$'s perspective since $j$'s
response is unchanged from earlier, and therefore $A_{-i}(\tilde{m},a_{i})\supset\mathcal{UD}_{j}^{k}$
(with $k=6$). Agent $i$'s status now satisfies the subsequent condition
in (ii) in Proposition \ref{prop:tatonnement and reasoning}. $i$'s
belief hierarchy set changed from $W_{i}^{3}$ to $W_{i}^{6}$, so
$\alpha$ in this case is $3$.

The object of Proposition \ref{prop:tatonnement and reasoning} and
the example is to highlight the point that an outside anayst can build
an alternative narrative for most tatonnement adjustments in the market
using the technique of belief hierarchies and higher order reasoning,
even if market participants are not deliberately engaging in the reasoning.
This is because any tatonnement adjustment process traces out a trajectory
in the $I$-dimensional space $A_{1}\times A_{2}\times\dots\times A_{I}$,
through the grids formed by the partitions of $A_{i}$, $i\in I$
(in the figure, the squares formed by overlapping $\mathcal{D}_{i}^{k}$
and $\mathcal{D}_{j}^{k}$). 

The only tatonnement adjustments that an analyst cannot capture using
belief hierarchies are the ones that happen \emph{within} the grids.
When the grids formed by the partition of strategy spaces described
in (\ref{eq:partition D_k}), based on the belief hierarchy set $\bigcap_{k\geq0}W_{i}^{k}$,
are too coarse to capture the move from dominated-wrtp to undominated-wrtp
responses, the reasoning based narrative is no longer sufficient to
describe the tatonnement adjustments.\footnote{\label{fn:Alpha_Zero}This corresponds to the scenario $\alpha=0$
in Proposition \ref{prop:tatonnement and reasoning}, in which case
both the initial and subsequent condition in the proposition have
the participant using a belief hierarchy in the same set $W_{i}^{k}$.} In such a case, though, the analyst can move to a different belief
hierarchy set (not based on $W_{i}$) that gives finer partitions
of $A_{i}$, $i\in I$. As mentioned before, there is nothing sacrosanct
about the set $W_{i}^{k}$, and we can undertake similar analysis
with other hierarchy sets that satisfy the requisite properties in
Section \ref{subsec:Dominated-Responses}.

\section{Discussion \label{sec:Discussion}}

Understanding arbitrage or the lack of it, in markets, has been a
central preoccupation of finance, and this paper provides a new characterization
of arbitrage using belief hierarchies of Bayesian market participants.
It is shown that an arbitrage opportunity exists only when a market
participant underestimates the order of belief hierarchies actually
used for reasoning in the market. The arbitrage trade depends both
on the responsiveness of the market aggregation mapping and the order
of the belief hierarchies employed by market agents, and the more
responsive the aggregation mapping, the higher the order of reasoning
needed from agents to avoid arbitrage. In the Bayesian belief hierarchy
based approach to markets, therefore, no-arbitrage seems to go almost
hand-in-hand with higher order beliefs of participants. Can we measure
from empirical data the extent to which higher order beliefs get used
by market participants?

There seem to be multiple ways to do this using our analysis. What
we have termed responsiveness of the market aggregation mapping is
not very different, empirically, from the price impact function \textemdash{}
the degree to which asset prices move in response to a trade. In case
there is no price impact at all, no matter what trade is undertaken,
the aggregation mapping is likely to be unresponsive. As we discussed
earlier, this means market participants do not have to employ higher
order beliefs to achieve a state of no-arbitrage. On the other hand,
when the price impact of trades is high, the aggregation mapping is
likely to be more responsive and the use of higher order beliefs becomes
necessary for no-arbitrage. Therefore, the magnitude of the price
impact function can serve as a proxy for the extent to which higher
order beliefs are in use in a market, assuming the asset prices provide
no arbitrage opportunities. Of course, such an assertion comes with
a number of caveats. Price impact often has a component attributed
to asymmetric information, and since our model has no explicit role
for information asymmetry, this component would need to be carefully
decoupled. Further, the precise nature of the relationship between
price impact and responsiveness of market aggregation mapping would
need more careful consideration before taking such a methodology to
the data. With the responsiveness of the aggregation mapping, what
we are after is how well agents can distinguish among market choices
that support an observed price. The size of the trade, which is the
main ingredient for measuring traditional price impact, is only one
dimension along which market choices may be distinguished. In this
sense, price impact is a necessary condition for market aggregation
responsiveness when its value is low, and sufficient condition for
market aggregation responsiveness when its value is high. How might
one enrich the traditional measurement of price impact so that we
have a measure that is closer to a necessary\emph{ and} sufficient
condition? Empirical methodology questions like these would need to
be carefully addressed if we have to establish a credible procedure
for determining higher order beliefs from the data through price impact
functions.

Another avenue to measure the extent of higher order beliefs in markets
could be through surveys. There is a growing literature that uses
survey data to decipher belief formation for asset prices (see \citet{key-50}
for a recent survey) but the emphasis in this area has been mostly
on first order beliefs.\footnote{In the empirical macroeconomics literature, researchers have recently
started to work with higher order expectations data. Notably, \citet{key-200}
provides an analysis of survey data that estimates higher order macroeconomic
expectations of firm managers.} How might we gauge through surveys if higher order beliefs are in
use? The analysis in this paper suggests that higher the order of
beliefs over which a market participant optimizes, smaller the set
of strategies she considers feasible for other market participants.
So, for instance, periodically surveying hedge fund managers about
how they think Reddit-based retail traders might trade, or surveying
retail traders about how institutional investors might trade, and
so on, can provide a way to decipher the time variation of higher
order beliefs in the market. If a category of traders attributes a
wide variety of strategies to another category, it is likely that
the order of beliefs at play is low. On the other hand, if the attributed
strategies belong to a narrow set, it is quite likely that higher
order beliefs are being used. Another possibility could be to include
tatonnement-styled questions in the surveys that query traders on
their strategies in a series of hypothetical scenarios. Responses
to such scenario-based questions could then be used to gauge the degree
to which traders stand ready to engage in the back and forth needed
for a market adjustment process. Needless to say, these are all very
rough outlines, and a lot more work needs to be undertaken to establish
such methodologies in practice.

From an applied perspective, one of the main takeaways from the paper
is that even the most basic of concepts in finance \textemdash{} the
notion of arbitrage \textemdash{} is inextricably linked to higher
order beliefs. There is increasing recognition in the finance community
that asset prices may move despite the lack of commensurate movement
in fundamentals, and a key to disentangling such puzzling market movements
likely lies in higher order beliefs. Exploring the role of Bayesian
belief hierarchies in more detail in the context of financial markets
should thus be a promising agenda of research. 

\medskip

\appendix

\section*{Appendix: Omitted Proofs\label{sec:Appendix}}
\begin{proof}
\textsf{\small{}(Propostion \ref{prop:Amenable-Arb-def}). }{\small\par}

\textsf{\emph{\small{}Part-1}}\textsf{\small{}: }\textsf{\emph{\small{}There
is a tradeable arbitrage opportunity in the market implies the condition
in the Proposition holds.}}{\small\par}

\textsf{\small{}From Definition \ref{def:With-Price-impact-Arb Opportunity},
there is a tradeable arbitrage opportunity if and only if agent $i$
can trade a portfolio $\theta_{i}\in\mathbb{R}^{d}$ with the property
\begin{equation}
\theta_{i}\cdot q_{\theta_{i}}\leq0,\,\,\text{but}\,\,\theta_{i}.\tilde{x}\geq0\,\mathbf{P}\text{-a.s.}\,\,\text{and}\,\,\mathbf{P}[\theta_{i}.\tilde{x}>0]>0.\label{eq:Lem arb-from-def}
\end{equation}
Let us use the superscript }\textsf{\emph{\small{}$rf$}}\textsf{\small{}
to label the risk-free asset; i.e., the risk-free asset's ex-ante
price is $q_{\theta_{i}}^{rf}$, it's payout is $x^{rf}$, and the
portfolio weight on risk-free is $\theta^{rf}$. Let us use $q_{\theta_{i}}^{-rf},$
$\tilde{x}^{-rf}$ and $\theta^{-rf}$ to designate the corresponding
variables for the set of risky assets. Since $\theta_{i}\cdot q_{\theta_{i}}\leq0$,
and $\int_{S}\tilde{m}\mathrm{d}\mathbf{P}>0$ is positive for all
$\tilde{m}$ under consideration, we have $0\geq\frac{\theta_{i}\cdot q_{\theta_{i}}}{\int_{S}\tilde{m}\mathrm{d}\mathbf{P}}=\frac{\theta_{i}^{rf}\cdot q_{\theta_{i}}^{rf}}{\int_{S}\tilde{m}\mathrm{d}\mathbf{P}}+\frac{\theta_{i}^{-rf}\cdot q_{\theta_{i}}^{-rf}}{\int_{S}\tilde{m}\mathrm{d}\mathbf{P}}$.
This implies 
\begin{equation}
\theta_{i}^{-rf}\cdot\tilde{x}^{-rf}-\frac{\theta_{i}^{-rf}\cdot q_{\theta_{i}}^{-rf}}{\int_{S}\tilde{m}\mathrm{d}\mathbf{P}}\geq\theta_{i}^{-rf}\cdot\tilde{x}^{-rf}+\frac{\theta_{i}^{rf}\cdot q_{\theta_{i}}^{rf}}{\int_{S}\tilde{m}\mathrm{d}\mathbf{P}}=\theta_{i}.\tilde{x}
\end{equation}
for all $\tilde{m}$. Since $\theta_{i}.\tilde{x}$ is $\mathbb{\mathbf{P}\text{-a.s.}}$
non-negative and strictly positive with non-vanishing probability
from (\ref{eq:Lem arb-from-def}), the same must be true of $\theta_{i}^{-rf}\cdot\tilde{x}^{-rf}-\frac{\theta_{i}^{-rf}\cdot q_{\theta_{i}}^{-rf}}{\int_{S}\tilde{m}\mathrm{d}\mathbf{P}}$.
As for the risk-free asset, we have $\theta_{i}^{rf}\cdot x^{rf}-\frac{\theta_{i}^{rf}\cdot q_{\theta_{i}}^{rf}}{\int_{S}\tilde{m}\mathrm{d}\mathbf{P}}=0$.
Thus, from (\ref{eq:net gain vector}), we have that the gains from
trading the portfolio $\theta_{i}$ satisfy}{\small\par}

\textsf{\small{}
\begin{equation}
\tilde{g}(\theta_{i},a_{-i})\geq0\,\mathbf{P}\text{-a.s.}\,\,\text{and}\,\,\mathbf{P}[\tilde{g}(\theta_{i},a_{-i})>0]>0.\label{eq:arb again}
\end{equation}
}{\small\par}

\textsf{\small{}We are given that $a_{i}$ is agent $i$'s original
strategy. Define her new strategy $a_{i}^{*}$ as the composition
operation applied to her original strategy and the arbitrage trade.
That is to say, the original strategy followed by the arbitrage trade
is her new strategy, so that 
\begin{equation}
\tilde{g}(a_{i},a_{-i})+\tilde{g}(\theta_{i},a_{-i})=\tilde{g}(a_{i}^{*},a_{-i}).
\end{equation}
Since $\tilde{g}(\theta_{i},a_{-i})$ satisfies the condition in (\ref{eq:arb again}),
we get 
\begin{equation}
\tilde{g}(a_{i}^{*},a_{-i})-\tilde{g}(a_{i},a_{-i})\geq0\,\mathbf{P}\text{-a.s.}\,\,\text{and}\,\,\mathbf{P}[\tilde{g}(a_{i}^{*},a_{-i})-\tilde{g}(a_{i},a_{-i})>0]>0.
\end{equation}
Using the equivalence in (\ref{eq:util and action relation}) we therefore
obtain the requisite relation among the utilities listed in (\ref{eq:arbitrage utilities}).}{\small\par}

\textsf{\emph{\small{}Part-}}\textsf{\small{}2: }\textsf{\emph{\small{}The
condition in the Proposition holds implies there is a tradeable arbitrage
opportunity in the market.}}{\small\par}

\textsf{\small{}In this case we are given that $i$ may change her
strategy from $a_{i}$ to $a_{i}^{*}$ to obtain
\begin{equation}
\tilde{U}_{i}(a_{i}^{*},a_{-i})\geq\tilde{U}_{i}(a_{i},a_{-i})\,\mathbf{P}\text{-a.s.}\,\,\text{and}\,\,\mathbf{P}[\tilde{U}_{i}(a_{i}^{*},a_{-i})>\tilde{U}_{i}(a_{i},a_{-i})]>0.\label{eq:Integral rel betw m's}
\end{equation}
 Using the equivalence in (\ref{eq:util and action relation}), we
get
\begin{equation}
\tilde{g}(a_{i}^{*},a_{-i})-\tilde{g}(a_{i},a_{-i})\geq0\,\mathbf{P}\text{-a.s.}\,\,\text{and}\,\,\mathbf{P}[\tilde{g}(a_{i}^{*},a_{-i})-\tilde{g}(a_{i},a_{-i})>0]>0.\label{eq:gain relation between actions}
\end{equation}
}{\small\par}

\textsf{\small{}Construct the following portfolio: (i) Go long the
portfolio $a_{i}^{*}$ by investing $a_{i}^{*}\cdot q_{a_{i}^{*}}$
(ii) Go short the portfolio $a_{i}$ by investing $-a_{i}\cdot q_{a_{i}}$,
(iii) Fund the investments by trading $-a_{i}^{*}\cdot q_{a_{i}^{*}}$
in the risk-free asset initially, followed by another trade of $+a_{i}\cdot q_{a_{i}}$
in the risk free asset. Here, $q_{a_{i}^{*}}=\int_{S}\tilde{f}(a_{i}^{*},a_{-i})\tilde{x}\mathrm{d}\mathbf{P}$,
$q_{a_{i}}=\int_{S}\tilde{f}(a_{i},a_{-i})\tilde{x}\mathrm{d}\mathbf{P}$. }{\small\par}

\textsf{\small{}We claim that this is a tradeable arbitrage portfolio.
Indeed, the ex-ante investment in constructing the portfolio is zero.
The stochastic payout from (i) and (ii) is 
\begin{equation}
a_{i}^{*}\tilde{x}-a_{i}\tilde{x},\label{eq:S_payout from risky}
\end{equation}
and from (iii) is
\begin{equation}
\frac{-a_{i}^{*}\cdot q_{a_{i}^{*}}}{\int_{S}\tilde{m}_{-a_{i}^{*}\cdot q_{a_{i}^{*}}}^{rf}\mathrm{d}\mathbf{P}}+\frac{a_{i}\cdot q_{a_{i}}}{\int_{S}\tilde{m}_{+a_{i}\cdot q_{a_{i}}}^{rf}\mathrm{d}\mathbf{P}}.\label{eq:S_payout from r_free}
\end{equation}
Summing (\ref{eq:S_payout from risky}) and (\ref{eq:S_payout from r_free}),
we get that the total stochastic payout is $\tilde{g}(a_{i}^{*},a_{-i})-\tilde{g}(a_{i},a_{-i})$,
which is $\mathbf{P}$-a.s non-negative and strictly positive with
nonvanishing probability, by (\ref{eq:gain relation between actions}).
Therefore, from (\ref{eq:Arb defn-3}), this is an arbitrage trade.}{\small\par}
\end{proof}
\medskip

\begin{proof}
\textsf{\small{}(Proposition \ref{prop:embedding belief-hierarchy}).
The result follows from the Axiom of specification in Set theory which
states that:} \textsf{\small{}To every set $\mathscr{A}$ and to every
condition $S(x)$ there corresponds a set $\mathscr{B}\subseteq\mathscr{A}$
whose elements are exactly those elements $x$ of $\mathscr{A}$ for
which $S(x)$ holds.}\footnote{\textsf{\footnotesize{}See \citet{key-25}, Chapter 2.}}\textsf{\small{} }{\small\par}

\textsf{\small{}From equation (\ref{eq:set W_i^2}), $W_{i}^{2}$
imposes the additional condition $\marg_{A_{i}}{[b_{j}^{1}]}[a_{i}\in D_{i}^{1}]=0$
on members of the set $B_{j}$. Therefore, by the axiom of specification
$B_{j}\supseteq b_{j}(W_{i}^{2})$. Similarly, from equation (\ref{eq:Belief hierarchy set W_i^k}),
$W_{i}^{k}$ imposes the additional condition $\marg_{A_{i}}{[b_{j}^{k-1}]}[a_{i}\in D_{i}^{k-1}]=0$
on members of $b_{j}(W_{i}^{k-1})$. So, by the axiom of specification
$b_{j}(W_{i}^{k-1})\supseteq b_{j}(W_{i}^{k})$. Thus, we have the
relation in the statement of the proposition.}{\small\par}
\end{proof}
\medskip
\begin{proof}
\textsf{\small{}(Proposition \ref{prop:Responsive actions}). One
way to prove this proposition is to follow the strategy in the proof
of Proposition \ref{prop:embedding belief-hierarchy} and use the
Axiom of specification. }{\small\par}

\textsf{\small{}A more direct way is to observe that equation (\ref{eq:Belief hierarchy set V_i^k})
imposes the restriction that any belief hierarchy $b_{i}$ in $V_{i}^{k}$
and $W_{i}^{k}$ satisfies $\marg_{A_{j}}{[\phi_{i}(b_{i})]}[a_{j}\in\mathcal{D}_{j}^{k-1}]=0$.
Next, since $b_{i}\in W_{i}^{k}$ only if $b_{i}\in W_{i}^{k-1}$,
we get, additionally, $\marg_{A_{j}}{[\phi_{i}(b_{i})]}[a_{j}\in\mathcal{D}_{j}^{k-2}]=0$.
Iterating this argument repeatedly, we obtain 
\begin{equation}
\marg_{A_{j}}{[\phi_{i}(b_{i})]}[a_{j}\in\mathcal{D}_{j}^{k-n}]=0\text{ for all }k-1\leq n\leq1.
\end{equation}
Therefore, $\supp{\marg_{A_{j}}{[\phi_{i}(b_{i})]}}=A_{j}\setminus\bigcup_{n=1}^{k-1}\mathcal{D}_{j}^{n}=\mathcal{UD}_{j}^{k-1}$,
which implies $A_{j}(W_{i}^{k})=\mathcal{UD}_{j}^{k-1}$. Finally,
by the definition of undominated response sets in equations (\ref{eq:undominated W^k})
and (\ref{eq:undominated W_i^1}), we obtain 
\begin{equation}
A_{j}\supseteq\mathcal{UD}_{j}^{1}\supseteq\dots\supseteq\mathcal{UD}_{j}^{k-1}\supseteq\mathcal{UD}_{j}^{k}\supseteq\dots
\end{equation}
Thus, we have the relation in the statement of the proposition.}{\small\par}
\end{proof}
\medskip
\begin{proof}
\textsf{\small{}(Proposition \ref{prop:f is one-one}). We have shown
sufficiency in part-1b in the proof of Theorem \ref{thm:No-arbitrage necc =000026 suff}.
To establish necessity, we need to rule out the possibility that $k$
is finite. When $k$ is finite, there are two scenarios that may result.
In both the scenarios, we show below, there is an arbitrage opportunity.
This implies $k$ cannot be finite, i.e. all market participants choose
a response in $\bigcap_{k\geq0}\mathcal{UD}_{i}^{k}$.}{\small\par}

\textsf{\small{}Scenario 1. All market participants use the same finite
order $k$ for their belief hierarchy and choose responses in the
set $\mathcal{UD}_{i}^{k}\setminus\mathcal{UD}_{i}^{k+1}$, $i\in I$:}{\small\par}

\textsf{\small{}In this case, every market participant finds a tradeable
arbitrage opportunity. This is because from condition (\ref{eq:undominated W^k})
in Definition \ref{def:SD level k}, we have $\mathcal{UD}_{i}^{k}\setminus\mathcal{UD}_{i}^{k+1}=\mathcal{D}_{i}^{k+1}$,
and the definition of $\mathcal{D}_{i}^{k+1}$ implies that agent
$i$, $i\in I$, may change her strategy from $a_{i}$ to $a_{i}^{*}$
to obtain
\begin{equation}
\tilde{U}_{i}(a_{i}^{*},a_{-i})\geq\tilde{U}_{i}(a_{i},a_{-i})\,\mathbf{P}\text{-a.s.}\text{ and}\,\,\mathbf{P}[\tilde{U}_{i}(a_{i}^{*},a_{-i})>\tilde{U}_{i}(a_{i},a_{-i})]>0,\label{eq:dominated for BH to arb-again}
\end{equation}
for any $a_{-i}\in\prod_{j\neq i}\mathcal{UD}_{j}^{k}$. Since $\tilde{f}$
is one-to-one, the agents can accurately discern the strategies used
by other market participants. So $A_{-i}(\tilde{m},a_{i})=\prod_{j\neq i}\mathcal{UD}_{j}^{k}$.
Theorem \ref{thm:From BH to Arb}, then, implies that each agent $i\in I$
has a tradeable arbitrage opportunity.}{\small\par}

\textsf{\small{}Scenario 2. Different market participants use different
finite orders $k_{i}$ for their belief hierarchy and choose responses
in the set $\mathcal{UD}_{i}^{k_{i}}\setminus\mathcal{UD}_{i}^{k_{i}+1}$,
$i\in I$:}{\small\par}

\textsf{\small{}In this case, the participant that chooses $k_{min}=\min\{k_{i}:i\in I\}$
\textemdash{} call her $i_{min}$ \textemdash{} has a tradeable arbitrage
opportunity. This is because, similar to Case 1 above, $\mathcal{UD}_{i}^{k_{min}}\setminus\mathcal{UD}_{i}^{k_{min}+1}=\mathcal{D}_{i}^{k_{min}+1}$,
and the definition of $\mathcal{D}_{i}^{k_{min}+1}$ implies that
agent $i_{min}$, may change her strategy from $a_{i}$ to $a_{i}^{*}$
to obtain
\begin{equation}
\tilde{U}_{i}(a_{i}^{*},a_{-i})\geq\tilde{U}_{i}(a_{i},a_{-i})\,\mathbf{P}\text{-a.s.}\text{ and}\,\,\mathbf{P}[\tilde{U}_{i}(a_{i}^{*},a_{-i})>\tilde{U}_{i}(a_{i},a_{-i})]>0,\label{eq:dominated for BH to arb-again-1}
\end{equation}
for any $a_{-i_{min}}\in\prod_{j\neq i}\mathcal{UD}_{j}^{k_{min}}$.
Since $\tilde{f}$ is one-to-one, the agents can accurately discern
the strategies used by other market participants. So, $A_{-i_{min}}(\tilde{m},a_{i})\subseteq\prod_{j\neq i}\mathcal{UD}_{j}^{k_{min}}$.
Theorem \ref{thm:From BH to Arb}, then, implies that agent $i_{min}$
has a tradeable arbitrage opportunity.}{\small\par}

\textsf{\small{}Scenarios 1 and 2 were described with all market participants
using finite orders, but we could make analogous arguments for cases
where only a subset of the market participants were using finite orders.
Thus, $k$ cannot be finite for any market participant, given there
is no tradeable arbitrage opportunity.}{\small\par}
\end{proof}
\medskip
\begin{proof}
\textsf{\small{}(Proposition \ref{prop:f responsiveness}). From the
definition of responsiveness, $A_{-i}^{\tilde{f_{2}}}(\tilde{m},a_{i})\subseteq A_{-i}^{\tilde{f_{1}}}(\tilde{m},a_{i})$
when $\tilde{f}_{2}$ is more responsive than $\tilde{f}_{1}$, for
$i\in I$. So, if $A_{-i}^{\tilde{f_{2}}}(\tilde{m},a_{i})\supset\prod_{j\neq i}\mathcal{UD}_{j}^{k_{i2}}$,
then we must also have $A_{-i}^{\tilde{f_{1}}}(\tilde{m},a_{i})\supset\prod_{j\neq i}\mathcal{UD}_{j}^{k_{i2}}$.
As noted before, from Definition \ref{def:SD level k}, we have that
undominated response sets satisfy the property $\mathcal{UD}{}_{j}^{m1}\supseteq\mathcal{UD}{}_{j}^{m2}$
when $m1\leq m2$, i.e. the undominated response sets form a nested
sequence as one increases the order. Therefore, if $k_{i1}$ is the
}\textsf{\emph{\small{}minimum}}\textsf{\small{} order for which $A_{-i}^{\tilde{f_{1}}}(\tilde{m},a_{i})\supseteq\prod_{j\neq i}\mathcal{UD}_{j}^{k_{i1}}$,
we must have $k_{i1}\leq k_{i2}$.}{\small\par}

\textsf{\small{}From Theorem \ref{thm:No-arbitrage necc =000026 suff}-(ii),
under no-arbitrage, the order of the undominated response under $\tilde{f}_{1}$
needs to be at least $k_{i1}$ given $A_{-i}^{\tilde{f_{1}}}(\tilde{m},a_{i})\supset\prod_{j\neq i}\mathcal{UD}_{j}^{k_{i1}}$,
while under $\tilde{f}_{2}$ it needs to be at least $k_{i2}$ given
$A_{-i}^{\tilde{f_{2}}}(\tilde{m},a_{i})\supset\prod_{j\neq i}\mathcal{UD}_{j}^{k_{i2}-1}$,
$i\in I$. Thus, the minimum order of the undominated response set
from which the market participant selects, given no-arbitrage, increases
weakly under $\tilde{f}_{2}$.}{\small\par}
\end{proof}
\medskip
\begin{proof}
\textsf{\small{}(Proposition \ref{prop:tatonnement and reasoning}).
Since we are focusing exclusively on agent $i$'s strategies here,
we will assume without loss of generality that no other market participant
has a tradeable arbitrage opportunity.}\footnote{\textsf{\footnotesize{}It is easy to see that each proposition above
on the existence of a tradeable arbitrage opportunity }\textsf{\emph{\footnotesize{}in
the market}}\textsf{\footnotesize{} can be restated as a proposition
about the existence of a tradeable arbitrage opportunity }\textsf{\emph{\footnotesize{}for
agent}}\textsf{\footnotesize{} $i$.}}{\small\par}

$a_{i}\in D_{i}^{wrtp}$ \textsf{\small{}implies, from Corollary \ref{cor:Corollary Amenable Arb},
that agent $i$ has a tradeable arbitrage opportunity. From Theorem
\ref{thm:From Arb to BH}, we then get the condition that agent $i$
uses a belief hierarchy in $W_{i}^{k}$ for the eductive sequence
and finds 
\begin{equation}
A_{-i}(\tilde{m},a_{i})\subset\prod_{j\neq i}\mathcal{UD}_{j}^{k-1}.\label{eq:first step rel A_-i and UD}
\end{equation}
 This gives the correspondence of $a_{i}\in D_{i}^{wrtp}$ with (i)
in the statement of the Proposition.}{\small\par}

\textsf{\small{}Next, $a_{i}\in UD_{i}^{wrtp}$ implies, from Corollary
\ref{cor:Corollary tatonnement}, that agent $i$ does not have a
tradeable arbitrage opportunity. From Theorem \ref{thm:No-arbitrage necc =000026 suff},
we then get the condition that agent $i$ either chooses a response
in the set $\mathcal{UD}_{i}^{k_{i}}\setminus\mathcal{UD}_{i}^{k_{i}+1}$
and finds $A_{-i}(\tilde{m},a_{i})\supset\prod_{j\neq i}\mathcal{UD}_{j}^{k_{i}}$,
or chooses a response in the set $\bigcap_{n\geq0}\mathcal{UD}_{i}^{n}$.
We show below that the first scenario is equivalent to the agent using
a belief hierarchy in $W_{i}^{k+\alpha}$ for the eductive sequence
and finding $A_{-i}(\tilde{m},a_{i})\supset\prod_{j\neq i}\mathcal{UD}_{j}^{k+\alpha}$,
and the second scenario is equivalent to the agent using a belief
hierarchy in $\bigcap_{\beta>\alpha}W_{i}^{k+\beta}$. This gives
the correspondence of }$a_{i}\in UD_{i}^{wrtp}$\textsf{\small{} with
(ii) in the statement of the Proposition. Let us look at each scenario
in turn:}{\small\par}

\textsf{\small{}Scenario 1. Agent $i$ chooses a response in the set
$\mathcal{UD}_{i}^{k_{i}}\setminus\mathcal{UD}_{i}^{k_{i}+1}$ and
finds $A_{-i}(\tilde{m},a_{i})\supset\prod_{j\neq i}\mathcal{UD}_{j}^{k_{i}}$:}{\small\par}

\textsf{\small{}If the agent chooses a response in the set $\mathcal{UD}_{i}^{k_{i}}$
she is using a belief hierarchy in $W_{i}^{k_{i}}$ for the purpose
of the eductive sequence. Thus, to establish the requisite equivalence,
we need to show $k_{i}=k+\alpha$, where the value of $k$ is fixed
by the initial belief hierarchy $W_{i}^{k}$ in (i) of the statement
of the Proposition. As noted before, from Definition \ref{def:SD level k}
we have that undominated response sets satisfy the property $\mathcal{UD}{}_{j}^{m1}\supseteq\mathcal{UD}{}_{j}^{m2}$
when $m1\leq m2$, i.e. the undominated response sets form a nested
sequence as one climbs up the orders. This implies that if $\mathcal{UD}{}_{j}^{m1}\supset\mathcal{UD}{}_{j}^{m2}$,
then $m1<m2$. Therefore, from $A_{-i}(\tilde{m},a_{i})\subset\prod_{j\neq i}\mathcal{UD}_{j}^{k-1}$
in (\ref{eq:first step rel A_-i and UD}), and $A_{-i}(\tilde{m},a_{i})\supset\prod_{j\neq i}\mathcal{UD}_{j}^{k_{i}}$
in the present scenario, we get that
\begin{equation}
k_{i}>k-1.
\end{equation}
This implies $k_{i}=k+\alpha$, for some integer $\alpha\geq0$, as
needed. }{\small\par}

\textsf{\small{}Scenario 2. Agent $i$ chooses a response in the set
$\bigcap_{n\geq0}\mathcal{UD}_{i}^{n}$: }{\small\par}

\textsf{\small{}An undominated $k^{th}$ order response $\mathcal{UD}_{i}^{k}$
is defined with respect to the belief hierarchy $W_{i}^{k}$ (Definition
\ref{def:SD level k}), so in this case, given the response is in
the set $\bigcap_{n\geq0}\mathcal{UD}_{i}^{n}$, $i$ employs a belief
hierarchy in $\bigcap_{n\geq0}W_{i}^{n}$. Next, we have that the
set $\bigcap_{n\geq0}W_{i}^{n}=\bigcap_{\beta>\alpha}W_{i}^{k+\beta}$,
for any given $\alpha\geq0$, because the belief hierarchy sets $W_{i}^{n}$
are nested, by definition (see equations (\ref{eq:set V_i^2}) \textendash{}
(\ref{eq:restriction_general hierarchy to action})). That is to say,
$W_{i}^{0}\supseteq W_{i}^{1}\supseteq W_{i}^{2}\supseteq\dots$ Therefore,
the agent using a belief hierarchy in} $\bigcap_{\beta>\alpha}W_{i}^{k+\beta}$\textsf{\small{}
is equivalent to her choosing a response in the set $\bigcap_{n\geq0}\mathcal{UD}_{i}^{n}$.}{\small\par}
\end{proof}
\medskip

\begin{thebibliography}{Banerjee, Kaniel and Kremer(2009)}
\bibitem[Adam and Marcet(2011)]{key-210}Adam, K. and Marcet, A. (2011).
Internal rationality, imperfect market knowledge and asset prices.
Journal of Economic Theory, 146, 1224-1252.

\bibitem[Adam and Nagel(2022)]{key-50}Adam, K. and Nagel, S. (2022).
Expectations Data in Asset Pricing. Forthcoming in the Handbook of
Economic Expectations.

\bibitem[Allen, Morris and Shin (2006)]{key-34}Allen, F., Morris,
S. and Shin, H. S. (2006). Beauty Contests, Bubbles and Iterated Expectations
in Asset Markets. Review of Financial Studies 19, 719\textendash 52.

\bibitem[Armbruster \& Böge(1979)]{key-24}Armbruster, W. and Böge,
W. (1979). Bayesian game theory. Game Theory and Related Topics. North-Holland,
Amsterdam. 

\bibitem[Back(2017)]{key-6}Back, K. E. (2017). Asset Pricing and
Portfolio Choice Theory (2nd edition). Oxford University Press, New
York.

\bibitem[Banerjee, Kaniel and Kremer(2009)]{key-36}Banerjee, S, Kaniel,
R. and Kremer I. (2009). Price drift as an outcome of differences
in higher order beliefs. Review of Financial Studies, 22, 3707\textendash 3734.

\bibitem[Ben-Porath and Heifetz(2011)]{key-44}Ben-Porath, E. and
Heifetz, A. (2011). Common knowledge of rationality and market clearing
in economies with asymmetric information. Journal of Economic Theory,
146, 2608\textendash 2626.

\bibitem[Bernheim(1984)]{key-5}Bernheim, B. D. (1984). Rationalizable
Strategic Behavior. Econometrica, 52(4), 1007-1028.

\bibitem[Biais and Bossaerts(1993)]{key-37}Biais, B. and Bossaerts,
P. (1993). Asset prices and trading volume in a beauty contest. Review
of Economic Studies, 65, 307\textendash 340.

\bibitem[Binmore(1987)]{key-28}Binmore, K. (1987). Modeling Rational
Players: Part 1. Economics and Philosophy, 1987(3), 179-214.

\bibitem[Boge and Eisele(1979)]{key-19}Boge, W. and Eisele Th. (1979).
On solutions of Bayesian games. International Journey of Game Theory,
8(4), 193\textendash 215.

\bibitem[Brandenburger and Dekel(1993)]{key-9}Brandenburger, A. and
Dekel, E. (1993). Hierarchies of Beliefs and Common Knowledge. Journal
of Economic Theory, 59(1), 189-198.

\bibitem[Brandenburger and Dekel(1987)]{key-21} Brandenburger, A.
and Dekel, E. (1987). Rationalizability and Correlated Equilibria.
Econometrica, 55(6), 1391-1402.

\bibitem[Coibion et al.(2021)]{key-200} Coibion, O., Gorodnichenko,
Y., Kumar S. and Ryngaert, J. (2021). Do You Know that I Know that
You Know. . . ? Higher-Order Beliefs in Survey Data. Quarterly Journal
of Economics, 136(3), 1387\textendash 1446

\bibitem[Dekel and Siniscalchi(2015)]{key-11} Dekel, E. and Siniscalchi,
M. (2015). Epistemic Game Theory, Chapter 12 in Handbook of Game Theory
with Economic Applications, Vol. 4, Elselvier, 619-702. 

\bibitem[Desgranges(2014)]{key-41} Desgranges, G. (2014). Coordination
of Expectations: The Eductive Stability Viewpoint. Annual Review of
Economics, 6, 273\textendash 98.

\bibitem[Dutta and Morris(1997)]{key-42} Dutta, J. and Morris, S.
The revelation of information and self-fulfilling beliefs. Journal
of Economic Theory, 73, 231\textendash 244.

\bibitem[Dybvig and Ross(1987)]{key-18}Dybvig, P.H. and Ross, S.A.
(1987). Arbitrage. In: J. Eatwell, M. Milgate and P. Newman, eds.,
The New Palgrave, A Dictionary of Economics, (London: The MacMillan
Press, Ltd.). 1, 100-106.

\bibitem[Foucault et al.(2013)]{key-202} Foucault, T., Pagano and
Röell, A. (2013). Market Liquidity: Theory, Evidence, and Policy.
Oxford University Press, New York, NY, USA.

\bibitem[Gromb and Vayanos(2010)]{key-26}Gromb, D. and Vayanos D.
(2010). Limits of Arbitrage. Annual Review of Financial Economics,
2(1), 251-275.

\bibitem[Guesneries(1992)]{key-40}Guesnerie R. (1992). An exploration
on the eductive justifications of the rational-expectations hypothesis.
American Economic Review, 82,1254\textendash 78.

\bibitem[Guesnerie(2002)]{key-38}Guesnerie, R. (2002). Anchoring
economic predictions in common knowledge, Econometrica, 70, 439\textendash 480.

\bibitem[Guesnerie(2005)]{key-39}Guesnerie, R. (2005). Assessing
Rational Expectations 2: ``Eductive'' Stability in Economics. MIT
Press, Cambridge, MA, USA.

\bibitem[Halmos(1960)]{key-25}Halmos, P. (1960). Naive Set Theory.
D. Van Nostrand Company, Princeton, NJ, USA.

\bibitem[Harrison and Kreps(1979)]{key-16}Harrison, J. M. \& Kreps
D. M. (1979). Martingales and arbitrage in multiperiod securities
markets. Journal of Economic Theory, 20(3), 381-408.

\bibitem[Harsányi (1967)]{key-1}Harsányi, J. C. (1967-68). Games
with incomplete information played by \textquoteleft Bayesian\textquoteright{}
players, Parts I \textendash III. Management Science, 14, 159\textendash 182,
320-334, 486-502.

\bibitem[Heifetz(1993)]{key-27}Heifetz, A. (1993). The bayesian formulation
of incomplete information \textemdash{} The non-compact case. International
Journal of Game Theory, 21, 329\textendash 338. 

\bibitem[Keynes(1936)]{key-29}Keynes, J. M. (1936). The General Theory
of Employment, Interest and Money. London: Macmillan.

\bibitem[MacAllister(1990)]{key-45}MacAllister, P. (1990). Rational
behavior and rational expectations, Journal of Economic Theory, 52,
332\textendash 363.

\bibitem[Makarov and Rytchkov(2012)]{key-35}Makarov, I. and Rytchkov,
O. (2012). Forecasting the Forecasts of Others: Implications for Asset
Pricing. Journal of Economic Theory, 147, 941\textendash 966.

\bibitem[Mertens and Zamir(1985)]{key-10} Mertens, J.-F. and Zamir,
S. (1985). Formulation of Bayesian analysis for games with incomplete
information. International Journal of Game Theory, 14(1), 1\textendash 29. 

\bibitem[Morris(1995)]{key-43}Morris, S. (1995). Justifying rational
expectations, CARESS working paper.

\bibitem[Nagel(1995)]{key-49} Nagel, R. (1995). Unraveling in Guessing
Games: An Experimental Study. The American Economic Review, 85(5),
1313-1326. 

\bibitem[Pearce(1984)]{key-3}Pearce, D. G. (1984). Rationalizable
Strategic Behavior and the Problem of Perfection. Econometrica, 52(4),
1029-1050.

\bibitem[Phelps(1983)]{key-32}Phelps, E. (1983). The Trouble with
\textquoteleft \textquoteleft Rational Expectations\textquoteright \textquoteright{}
and the Problem of Inflation Stabilization. In R. Frydman and E. S.
Phelps (eds.), Individual Forecasting and Aggregate Outcomes, Cambridge
University Press, New York.

\bibitem[Rogers and Williams(2000)]{key-8} Rogers, L. C. G. and Williams,
D. (2000). Diffusions, Markov Processes and Martingales, Volume 1:
Foundations (2nd edition). Cambridge University Press, Cambridge,
UK.

\bibitem[Ross(1976)]{key-46} Ross, S. -A. (1976). The Arbitrage Theory
of Capital Asset Pricing. Journal of Economic Theory 13, 341\textendash 60.

\bibitem[Ross(1978)]{key-17}Ross, S. -A. (1978). A Simple Approach
to the Valuation of Risky Streams. Journal of Business, 51(3), 453-475.

\bibitem[Sargent(1991)]{key-33}Sargent, T. -J. (1991). Equilibrium
with Signal Extraction from Endogenous Variables. Journal of Economic
Dynamics and Control, 15, 245\textendash 273.

\bibitem[Shleifer and Vishny(1997)]{key-15}Shleifer, A. and Vishny,
R. (1997). The Limits of Arbitrage. Journal of Finance, 52 (1), 35-55.

\bibitem[Stahl and Wilson(1994)]{key-47}Stahl, D. and Wilson, P.
(1994). Experimental Evidence on Players\textquoteright{} Models of
Other Players. Journal of Economic Behavior and Organization, 25,
309\textendash 327.

\bibitem[Stahl and Wilson(1995)]{key-48}Stahl, D. and Wilson, P.
(1995). On Players\textasciiacute{} Models of Other Players: Theory
and Experimental Evidence. Games and Economic Behavior Volume, 10(1),
218-254.

\bibitem[Tan and Werlang(1988)]{key-20}Tan, T.C.C. and da Costa Werlang,
S.R. (1988). The Bayesian foundations of solution concepts of games.
Journal of Economic Theory, 45 (2), 370\textendash 391. 

\bibitem[Townsend(1978)]{key-31}Townsend, R. (1978). Market Anticipations,
Rational Expectations and Bayesian Analysis, International Economic
Review, 19, 481\textendash 94.

\bibitem[Townsend(1983)]{key-30}Townsend, R. (1983). Forecasting
the Forecasts of Others. Journal of Political Economy, 91, 546\textendash 88.
\end{thebibliography}
\end{document}